\documentclass[pdflatex,sn-mathphys-num]{sn-jnl}%

\usepackage{tikz}
\usetikzlibrary{calc}
\usetikzlibrary{math}
\usepackage{graphicx}
\usepackage{mathrsfs}
\usepackage{amssymb}
\usepackage{shuffle}
\usepackage{amsmath}
\usepackage{bm}
\usepackage{amsfonts}
\usepackage{dsfont}
\usepackage{mathtools}
\usepackage{amsthm}
\usepackage{enumitem}
\usepackage{multirow}
\usepackage[title]{appendix}
\usepackage{xcolor}
\usepackage{textcomp}
\usepackage{manyfoot}
\usepackage{booktabs}
\usepackage{algorithm}
\usepackage{algorithmicx}
\usepackage{algpseudocode}
\usepackage{listings}

\newcommand{\Emmett}[5] %
{
    \draw[#1] (0, #2)
    \foreach\x in {1, ..., #4} {
        -- ++(#3, rand * #3)
    };
}
\newcommand{\bb}[1]{\mathbb{#1}}

\newcommand{\inner}[2]{\langle #1, #2 \rangle}
\DeclareMathOperator*{\esssup}{ess\,sup}
\DeclareMathOperator*{\argmin}{arg\,min}
\newcommand{\pvar}[3]{\left|\left|#1\right|\right|_{#2, #3}} %
\newcommand{\lift}[2]{#1^{(2)}_{#2}} %
\newcommand{\roughpath}[1]{\bm{#1} = (#1, #1^{(2)})} %
\newcommand{\roughholder}[4]{\left|\left|\left|\bm{#1}\right|\right|\right|_{\frac{#2}{#3}\text{-Höl}, #4}} %
\newcommand{\pvarSpaceSimple}[1]{\mathcal{V}^{#1\text{-var}}} %
\newcommand{\pvarSpace}[3]{\mathcal{V}^{#1\text{-var}}(#2, #3)} %
\newcommand{\roughMetricHolder}[4]{\varrho_{\frac{1}{#3}\text{-Höl}, #4}(\bm{#1}, \bm{#2})} %
\newcommand{\roughMetricPvar}[4]{\varrho_{#3, #4}(\bm{#1}, \bm{#2})} %
\newcommand{\ce}[2]{\mathcal{E} \left[ \varphi(#1_#2) | \mathcal{Y}_#2 \right]} %

\makeatletter
\newtheoremstyle{thmstyleone-fixed}%
  {\topsep}{\topsep}%
  {\normalfont\normalsize}%
  {}%
  {\bfseries}%
  {.}%
  { }%
  {\thmname{#1}\thmnumber{ #2}\thmnote{ (\textit{#3})}}%
\makeatother
\theoremstyle{thmstyleone-fixed}
\newtheorem{theorem}{Theorem}%
\newtheorem{assumption}{Assumption}[section]
\newtheorem{proposition}[theorem]{Proposition}%

\makeatletter
\newtheoremstyle{thmstyletwo-fixed}%
  {\topsep}{\topsep}%
  {\normalfont\normalsize}%
  {}%
  {\itshape}%
  {.}%
  { }%
  {\thmname{#1}\thmnumber{ #2}\thmnote{ (\textit{#3})}}%
\makeatother
\theoremstyle{thmstyletwo-fixed}
\newtheorem{example}{Example}%
\newtheorem{remark}{Remark}%
\newtheorem{lemma}{Lemma}[section]
\newtheorem{corLemma}{Corollary (lemma)}[lemma]

\makeatletter
\newtheoremstyle{thmstylethree-fixed}%
  {\topsep}{\topsep}%
  {\normalfont\normalsize}%
  {}%
  {\bfseries}%
  {.}%
  { }%
  {\thmname{#1}\thmnumber{ #2}\thmnote{ (\textit{#3})}}%
\makeatother
\theoremstyle{thmstylethree-fixed}
\newtheorem{definition}{Definition}%

\raggedbottom

\begin{document}

\title[Article Title]{Rough Path Approaches to Stochastic Control, Filtering, and Stopping}

\author*[1]{\fnm{Jonathan} \sur{A. Mavroforas}}\email{jonathan.a.mavroforas@alumni.uts.edu.au}

\author[2]{\fnm{Anthony} \sur{H. Dooley}}\email{anthony.dooley@uts.edu.au}

\affil[1*, 2]{\orgdiv{School of Mathematical and Physical Sciences}, \orgname{University of Technology, Sydney}}

\abstract{
\normalsize This paper presents a unified exposition of rough path methods applied to optimal control, robust filtering, and optimal stopping, addressing a notable gap in the existing literature where no single treatment covers all three areas. By bringing together key elements from Lyons’ theory of rough paths, Gubinelli’s controlled rough paths, and related developments, we recast these classical problems within a deterministic, pathwise framework. Particular emphasis is placed on providing detailed proofs and explanations where these have been absent or incomplete, culminating in a proof of the central \emph{verification theorem}, which is another key contribution of this paper. This result establishes the rigorous connection between candidate solutions to optimal control problems and the Hamilton–Jacobi–Bellman equation in the rough path setting. Alongside these contributions, we identify several theoretical challenges -- most notably, extending the verification theorem and associated results to general \emph{p}-variation with -- and outline promising directions for future research. The paper is intended as a self-contained reference for researchers seeking to apply rough path theory to decision-making problems in stochastic analysis, mathematical finance, and engineering.
}

\keywords{Rough path theory, optimal control, robust filtering, optimal stopping}

\maketitle

\newpage

\section{Introduction}
\label{sec:introduction}

\subsection{Background}
\label{subsec:intro:background}

The theory of rough paths was initiated by Lyons in the 1990s as a deterministic framework for analyzing differential equations driven by highly irregular signals, most notably sample paths of Brownian motion. Classical stochastic calculus, particularly the Itô theory, relies on probabilistic structures and $L^2$ techniques, which obscure the pathwise nature of solutions and are restricted to semimartingale inputs. Lyons’ approach introduced an algebraic and analytic enhancement of paths via iterated integrals -- now known as geometric rough paths -- enabling a well-posed theory of integration and differential equations for low-regularity signals. This advancement combined earlier insights from Young integration and Chen’s iterated integrals, laying the foundation for a deterministic reformulation of stochastic analysis. Subsequent developments, including Gubinelli’s controlled rough paths and Hairer’s theory of regularity structures, further broadened the reach of the theory. Notably, Hairer received the Fields Medal in 2014 for his work on singular stochastic PDEs, where rough path methods played a foundational role.

These ideas naturally invite a re-examination of classical problems in stochastic analysis, particularly in control, filtering, and stopping. Each has traditionally relied on Itô calculus and semimartingale models. Yet in practice -- especially in high-frequency or irregular settings -- these assumptions become restrictive.

In stochastic control, the canonical formulation involves optimizing a cost functional for a controlled diffusion. The dynamic programming principle leads to a Hamilton–Jacobi–Bellman (HJB) equation, whose analysis depends on the Itô framework. But in systems influenced by market microstructure noise or turbulent environmental data, signals exhibit roughness incompatible with semimartingale theory. Rough paths provide a pathwise reformulation of controlled differential equations, enabling analysis of control problems driven by low-regularity signals. This opens the door to robust, model-free control strategies designed without probabilistic expectations.

A similar narrative arises in filtering, where the task is to estimate the latent state of a system from noisy observations. Classical filters, such as Kalman–Bucy and its nonlinear generalizations, characterize the conditional law of the signal via stochastic PDEs, relying fundamentally on Itô calculus. Rough path theory instead offers a pathwise framework, replacing expectations with continuous maps from observation paths to estimates. Recent advances show how rough analytic techniques yield new insights into inference for complex dynamical systems.

The problem of optimal stopping is also pathwise in nature. Classical formulations compute expectations over future trajectories and reduce to variational inequalities with free-boundary problems. But when the gain process is driven by a rough signal, such expectations lose meaning. Rough paths allow a deterministic reformulation, where stopping rules and value functions are defined as functionals of the rough path itself.

Together, these perspectives show that rough path theory does not merely extend stochastic analysis -- it reconfigures its foundations. By replacing probabilistic structures with pathwise constructions, rough paths provide a unifying framework for control, filtering, and stopping in irregular regimes, forming a powerful toolkit for modern stochastic modeling where irregularity and noise are the norm.

\subsection{Scope}
\label{subsec:intro:scope}

This paper provides a unified and rigorous account of rough path methods applied to stochastic optimal control, robust filtering, and optimal stopping, addressing a notable gap in the literature where no single work treats all three topics within a common framework. Our aim is to recast these classical problems in a deterministic, pathwise setting, demonstrating how rough path theory offers a robust foundation for decision-making under uncertainty -- particularly when classical semimartingale-based models are inadequate.

Beyond exposition, the paper contributes detailed and clarified proofs where the literature has been incomplete, together with illustrative examples that highlight key phenomena such as the degeneracy that arises when attempting to control noise directly, and how regularization of the control variation restores well-posedness. A central result -- and another key contribution -- is our proof of the \emph{verification theorem} (\textbf{Theorem~\ref{theorem:verification_theorem}}) for the case of $p$-variation with $2 \le p < 3$, which establishes the rigorous connection between candidate solutions to optimal control problems and the Hamilton–Jacobi–Bellman equation in the rough path setting.

We also identify significant theoretical challenges that remain open, including the extension of the verification theorem, dynamic programming principle, and associated formulations to general $p$-variation with $3 \le p < \infty$, as well as unresolved issues in filtering and stopping for systems driven by rough paths. These challenges form the basis for the future research directions outlined later in the paper.

The intended audience is researchers in stochastic analysis, control theory, mathematical finance, or related fields, seeking a coherent, self-contained reference on the application of rough path theory to stochastic decision problems. While some mathematical maturity is assumed, the presentation strives to remain pedagogically clear and accessible.

\subsection{A Note on Length}
\label{subsec:intro:length}

The length of this paper is deliberate. Our goal is to provide a treatment of control, filtering, and stopping in the rough path setting that is as complete as possible, with minimal reliance on omitted arguments or appeals to intuition. To this end, we include detailed proofs where the existing literature is fragmentary, alongside examples designed not merely to illustrate technical points, but to clarify the motivation for studying these problems in the first place. These examples are chosen carefully to highlight both the practical significance and the mathematical subtleties of rough path methods. The resulting exposition is longer than what might ordinarily be expected, but we believe this level of detail is essential for producing a self-contained reference that is both rigorous and transparent.

\section{Rough Path Preliminaries}
\label{sec:rough_path_preliminaries}
The purpose of this section is to equip the reader with a foundational understanding of rough path theory necessary for engaging with the subsequent sections of this paper. While we do not aim to provide a full treatment of the subject, this overview is designed to introduce key concepts and methods relevant to the discussions that follow. Readers interested in a comprehensive introduction to rough path theory are encouraged to consult detailed resources, such as \emph{A Course on Rough Paths with an Introduction to Regularity Structures} by Friz and Hairer \cite{frizHairer}, or Lyons’ foundational text, \emph{Differential Equations Driven by Rough Paths} \cite{lyons}.

It is worth noting that the notation in $\S$\ref{sec:general_integration_theory}, $\S$\ref{sec:integration_theory_for_optimal_control}, $\S$\ref{OC}, and $\S$\ref{sec:pathwise_robust_filtering} diverges from that in $\S$\ref{sec:optimal_stopping_with_signatures}. This difference arises because the latter section focuses on \( p \)-rough paths for values \( 1 \leq p < \infty \), while the former sections primarily consider \( p \in [2, 3) \). The choice of \( p \in [2, 3) \) is motivated by traditional stochastic integration theory, enabling us to analyze integration and differentiation for processes such as Brownian motion and fractional Brownian motion from a pathwise perspective. The focus on these specific values of \( p \) allows for a natural treatment of stochastic processes within the rough path framework, providing robust tools for handling the complexities of pathwise integration.

In the coming sections, we aim to clarify how rough path theory supports the integration and differentiation of such irregular processes, allowing us to bypass some of the probabilistic assumptions usually necessary in stochastic calculus. This pathwise approach provides an alternative view, particularly useful for applications where probabilistic structures are either unavailable or undesirable, and where direct handling of process paths with controlled roughness is required. The upcoming material will build upon these principles to establish a coherent framework for analysis in the specific contexts explored throughout this paper.

\subsection{General Integration Theory}
\label{sec:general_integration_theory}
The results presented in this section are primarily inspired by Gubinelli's work on controlled rough paths \cite{gubinelliControlledRoughPath}, which provides a powerful framework for integrating and differentiating certain irregular processes in a way that feels more intuitive and akin to the treatment of continuously differentiable processes. The tools we develop here are also sufficiently robust to incorporate stochastic processes such as Brownian motion and fractional Brownian motion within the rough path framework. Notably, these tools alone are sufficient to handle these processes thanks to the \emph{Lyons’ universal limit theorem} \cite{lyons}.

Throughout this paper, we denote the closed interval \([0, T]\) by \(J\), and write \(|\cdot|\) for the Euclidean norm on \(\mathbb{R}^n\). We begin by stating several standard definitions that will serve as the foundation for the results that follow.
\begin{definition}
\label{def:standard_definitions}
    Let \(X\colon J \to \mathbb{R}^n\), \(t \mapsto X_t\), and suppose \(0 \leq s \leq t \leq T\). We define the following:
    \begin{equation}
    \label{eq:simplex}
        \Delta_{[s, t]} \coloneqq \{(s, t) \in J^2 \colon 0 \leq s \leq t \leq T\}
    \end{equation}
    \begin{equation}
    \label{eq:path_difference}
        X_{s, t} \coloneqq X_t - X_s
    \end{equation}
    \begin{equation}
    \label{eq:p_variation}
        \|X\|_{p, J} \coloneqq \left[ \sup_{\mathcal{D}} \sum_{(t_i, t_{i+1}) \in \mathcal{D}} |X_{t_i, t_{i+1}}|^p \right]^{1/p}
    \end{equation}
    where \(\mathcal{D} = (t_i)_{i=0}^{n}\) denotes a partition of \(J = [0, T]\). For clarity, the supremum in \eqref{eq:p_variation} is taken over all such partitions \(\mathcal{D}\) of \(J\).
\end{definition}

\emph{Equation~\eqref{eq:simplex}} defines a simplex; \emph{equation~\eqref{eq:path_difference}} defines the increment of the path from \(s\) to \(t\); and \emph{equation~\eqref{eq:p_variation}} introduces a generalized notion of path length known as the \(p\)-variation. 

As an example, the \(1\)-variation of a continuously differentiable path \(Y_t\) coincides with its path length. This can be seen from the following computation:
\begin{equation*}
\begin{split}
    \|Y\|_{1, J} & \coloneqq \sup_{\mathcal{D}} \sum_{(t_i, t_{i+1}) \in \mathcal{D}} |Y_{t_i, t_{i+1}}|  \\
    & = \sup_{\mathcal{D}} \sum_{(t_i, t_{i+1}) \in \mathcal{D}} |Y'_{c_i}|\,|t_{i+1} - t_i| \\
    & = \int_0^T |Y'_t|\,dt,
\end{split}
\end{equation*}
where the second equality follows from the \emph{mean value theorem}.

Intuitively, the \(p\)-variation of a path suppresses small increments  --  those of magnitude less than one  --  so as to produce a finite measure of path ``length" that is sensitive to the chosen exponent \(p\).

Before formally presenting the theory, we introduce an example to illustrate the mechanics of rough integration.

\begin{example}[Mechanics of Rough Integration]
\label{example:mechanics_of_rough_integration}
    Suppose that \( f \colon \mathbb{R}^n \to \mathbb{R} \) is a continuously differentiable function and \( X \colon [0, T] \to \mathbb{R}^n \), \( t \mapsto X_t \), is a continuously differentiable path. Then the integral
    \[
        \int_0^T f(X_t) \, dX_t
    \]
    is well defined and may be approximated as follows.

    By Taylor's theorem, we have the approximation
    \[
        f(X_t) \approx f(X_s) + \nabla f(X_s) \cdot (X_t - X_s),
    \]
    which becomes increasingly accurate as \(|t - s|\) becomes small. Using this, the integral
    \[
        \int_s^t f(X_u) \, dX_u \approx f(X_s)(X_t - X_s) + \int_s^t \nabla f(X_s) \cdot (X_u - X_s) \, dX_u
    \]
    can likewise be approximated to arbitrary precision. In other words, the linearization
    \[
        f(X_t) \approx f(X_s) + \nabla f(X_s) \cdot (X_t - X_s)
    \]
    provides a good approximation for evaluating the integral of \(f\) against the path \(X_t\).

    Now define the bilinear map
    \[
        (u \otimes v) \mapsto f'(X_s)(u \otimes v) \coloneqq (\nabla f(X_s) \cdot u)v
    \]
    for \( u, v \in \mathbb{R}^n \). With this notation, we may rewrite the second integral as
    \begin{equation*}
    \begin{split}
        \int_s^t \nabla f(X_s) \cdot (X_u - X_s) \, dX_u 
        &= \int_s^t f'(X_s)\big((X_u - X_s) \otimes dX_u\big) \\
        &= f'(X_s) \int_s^t (X_u - X_s) \otimes dX_u \\
        &= f'(X_s) X_{s,t}^{(2)},
    \end{split}
    \end{equation*}
    so that
    \[
        \int_s^t f(X_u) \, dX_u \approx f(X_s)(X_t - X_s) + f'(X_s) X_{s,t}^{(2)}.
    \]

    Consequently, for a sufficiently fine partition \( \mathcal{D} = \{ 0 = t_0 < \cdots < t_n = T \} \) of \([0, T]\), we obtain the approximation
    \[
        \int_0^T f(X_u) \, dX_u \approx \sum_{t_i, t_{i+1} \in \mathcal{D}} \left[ f(X_{t_i})(X_{t_{i+1}} - X_{t_i}) + f'(X_{t_i}) X^{(2)}_{t_i, t_{i+1}} \right],
    \]
    which becomes arbitrarily accurate as the mesh of the partition tends to zero.

    Moreover, it can be verified -- through a somewhat tedious but straightforward computation -- that
    \[
        X_{s,t}^{(2)} = X_{s,u}^{(2)} + X_{u,t}^{(2)} + X_{s,u} \otimes X_{u,t}
    \]
    for all \( s \leq u \leq t \). This identity is known as \emph{Chen's relation}, and it plays a foundational role in the definition of rough integration.

    Additionally, we define the process \( Y'_t \coloneqq f'(X_t) \) as the \emph{Gubinelli derivative} of the path \( Y_t \coloneqq f(X_t) \).

    It is precisely the structure provided by \emph{Chen's relation} and \emph{Gubinelli derivatives} that enables the extension of integration theory to a broader class of irregular paths.
\end{example}

We now formalize the framework introduced in the preceding example. Let \(\pvarSpace{p}{J}{\mathbb{R}^n}\) denote the space of continuous paths \(X \colon J \to \mathbb{R}^n\) with finite \(p\)-variation. Define \(\pvarSpace{0,p}{J}{\mathbb{R}^n}\) as the closure of this space with respect to the \(p\)-variation seminorm \(\| \cdot \|_{p, J}\).

For \(p \in [2, 3)\), we define \(\mathscr{C}^p(J, \mathbb{R}^n)\) to be the space of all \(\frac{1}{p}\)-Hölder continuous rough paths \(\bm{\zeta} = (\zeta, \zeta^{(2)})\) satisfying the following properties:
\begin{equation}
    \zeta \colon J \to \mathbb{R}^n,
\end{equation}
\begin{equation}
\label{enhancement}
    \zeta^{(2)} \colon \Delta_{[s, t]} \to \mathbb{R}^n \otimes \mathbb{R}^n, \quad (s, t) \mapsto \zeta^{(2)}_{s, t},
\end{equation}
\begin{equation}
\label{ChensRelation}
    \zeta^{(2)}_{s, t} = \zeta^{(2)}_{s, r} + \zeta^{(2)}_{r, t} + \zeta_{s, r} \otimes \zeta_{r, t}, \quad \text{for all } 0 \le s \le r \le t \le T,
\end{equation}
\begin{equation}
\label{roughHölderNorm}
\begin{split}
    |||\bm{\zeta}|||_{\frac{1}{p}\text{-H\"ol}} &\coloneqq \|\zeta\|_{\frac{1}{p}\text{-H\"ol}} + \|\zeta^{(2)}\|_{\frac{2}{p}\text{-H\"ol}} \\
    &< +\infty,
\end{split}
\end{equation}
where the Hölder seminorms are defined by
\begin{equation}
\label{HölderNorm}
    \|\zeta\|_{\frac{1}{p}\text{-H\"ol}} \coloneqq \sup_{s \neq t \in J} \frac{|\zeta_{s, t}|}{|t - s|^{1/p}},
\end{equation}
\begin{equation}
\label{enhancedHölderNorm}
    \|\zeta^{(2)}\|_{\frac{2}{p}\text{-H\"ol}} \coloneqq \sup_{s \neq t \in J} \frac{|\zeta^{(2)}_{s, t}|}{|t - s|^{2/p}}.
\end{equation}

The tensor product in \eqref{enhancement} and \eqref{ChensRelation} refers to the \emph{Cartesian tensor product}. Equation~\eqref{ChensRelation} is known as \emph{Chen's relation}, and it must hold for all \(r \in [s, t]\). The map \(\zeta^{(2)}\) is referred to as the \emph{iterated integral} or \emph{lift} of \(\zeta\), and the pair \(\bm{\zeta} = (\zeta, \zeta^{(2)})\) is called the \emph{lifted rough path} or simply the \emph{lift} of \(\zeta\).

We now define the \(p\)-variation of a rough path \(\bm{\zeta} = (\zeta, \zeta^{(2)})\), which quantifies the joint regularity of both components.

\begin{definition}
\label{def:def_rough_var}
Let \(\bm{\zeta} = (\zeta, \zeta^{(2)}) \in \mathscr{C}^p(J, \mathbb{R}^n)\). The \(p\)-variation of the lift is defined as follows:
\begin{equation}
\label{eq:enhanced_variation}
    \|\zeta^{(2)}\|_{\frac{p}{2}, J} \coloneqq \left[ \sup_{\mathcal{D}} \sum |\zeta^{(2)}_{t_i, t_{i+1}}|^{\frac{p}{2}} \right]^{\frac{2}{p}},
\end{equation}
\begin{equation}
\label{eq:rough_variation}
    |||\bm{\zeta}|||_{p, J} \coloneqq \|\zeta\|_{p, J} + \|\zeta^{(2)}\|_{\frac{p}{2}, J},
\end{equation}
where the supremum in \eqref{eq:enhanced_variation} is taken over all partitions \(\mathcal{D}\) of \(J = [0, T]\).
\end{definition}

An immediate consequence of \textbf{Definition~\ref{def:def_rough_var}} is that every element of \(\mathscr{C}^p(J, \mathbb{R}^n)\) has finite \(p\)-variation for \(p \in [2, 3)\).

\begin{proposition}
\label{prop:rough_finite_pvar}
If \(\bm{\zeta} = (\zeta, \zeta^{(2)}) \in \mathscr{C}^p(J, \mathbb{R}^n)\), then \(|||\bm{\zeta}|||_{p, J} < +\infty\).
\end{proposition}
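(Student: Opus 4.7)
The plan is to deduce finite $p$-variation (respectively $p/2$-variation) of each component from its Hölder regularity, exploiting the fact that a Hölder seminorm with exponent $1/p$ translates cleanly into a bound on $p$-th power partition sums. This is the standard ``Hölder implies finite variation'' argument, and by hypothesis $\bm{\zeta} \in \mathscr{C}^p(J, \mathbb{R}^n)$ already supplies finite $\frac{1}{p}$-Hölder and $\frac{2}{p}$-Hölder seminorms via \eqref{roughHölderNorm}.

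First I would handle the level-one component. Fix an arbitrary partition $\mathcal{D} = (t_i)_{i=0}^{n}$ of $J$. Using \eqref{HölderNorm}, each increment satisfies
\begin{equation*}
    |\zeta_{t_i, t_{i+1}}|^p \le \|\zeta\|_{\frac{1}{p}\text{-H\"ol}}^{\,p}\, |t_{i+1} - t_i|.
\end{equation*}
Summing over the partition and noting that the interval lengths telescope to $T$, I obtain
\begin{equation*}
    \sum_{(t_i, t_{i+1}) \in \mathcal{D}} |\zeta_{t_i, t_{i+1}}|^p \le \|\zeta\|_{\frac{1}{p}\text{-H\"ol}}^{\,p}\, T,
\end{equation*}
so that taking the supremum over $\mathcal{D}$ and a $p$-th root yields $\|\zeta\|_{p, J} \le \|\zeta\|_{\frac{1}{p}\text{-H\"ol}}\, T^{1/p}$, which is finite by hypothesis.

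Next I would repeat the argument for the lifted component $\zeta^{(2)}$, this time with exponent $p/2$ paired against the Hölder exponent $2/p$ from \eqref{enhancedHölderNorm}. For any partition $\mathcal{D}$,
\begin{equation*}
    |\zeta^{(2)}_{t_i, t_{i+1}}|^{p/2} \le \|\zeta^{(2)}\|_{\frac{2}{p}\text{-H\"ol}}^{\,p/2}\, |t_{i+1} - t_i|,
\end{equation*}
and summing and taking the $2/p$-th power gives $\|\zeta^{(2)}\|_{\frac{p}{2}, J} \le \|\zeta^{(2)}\|_{\frac{2}{p}\text{-H\"ol}}\, T^{2/p}$. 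Adding the two estimates produces
\begin{equation*}
    |||\bm{\zeta}|||_{p, J} \le \|\zeta\|_{\frac{1}{p}\text{-H\"ol}}\, T^{1/p} + \|\zeta^{(2)}\|_{\frac{2}{p}\text{-H\"ol}}\, T^{2/p} < +\infty,
\end{equation*}
where the finiteness follows immediately from the rough Hölder norm being finite by definition of $\mathscr{C}^p(J, \mathbb{R}^n)$.

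There is no real obstacle here; the result is a direct consequence of the exponent matching between Hölder regularity and finite $p$-variation, combined with the compactness of $J$. The only point requiring minor care is to keep the exponents straight for the lifted component --- $\zeta^{(2)}$ lives at ``time regularity'' $2/p$ but is measured in $p/2$-variation --- so that the product of the Hölder power and the interval-length exponent collapses to $1$ and the telescoping sum bound is available.
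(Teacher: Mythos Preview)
Your proposal is correct and follows essentially the same approach as the paper's own proof: both bound each partition increment by the Hölder seminorm times the interval length raised to the appropriate power, sum to get $T$, take the supremum over partitions, and repeat for $\zeta^{(2)}$. If anything, you are slightly more explicit about the $\zeta^{(2)}$ step, which the paper merely describes as ``an analogous argument.''
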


\begin{proof}
\normalsize
Observe that
\begin{equation*}
\begin{split}
    \sum_{\mathcal{D}} |\zeta_{t_i, t_{i+1}}|^p 
    &= \sum_{\mathcal{D}} \frac{|\zeta_{t_i, t_{i+1}}|^p}{t_{i+1} - t_i} (t_{i+1} - t_i) \\
    &\le \sum_{\mathcal{D}} \|\zeta\|^p_{\frac{1}{p}\text{-H\"ol}} (t_{i+1} - t_i) \\
    &= \|\zeta\|^p_{\frac{1}{p}\text{-H\"ol}} \sum_{\mathcal{D}} (t_{i+1} - t_i) \\
    &= \|\zeta\|^p_{\frac{1}{p}\text{-H\"ol}} \cdot T \\
    &< +\infty.
\end{split}
\end{equation*}
Taking the supremum over all partitions \(\mathcal{D}\) shows that
\[
    \|\zeta\|^p_{p, J} \le \|\zeta\|^p_{\frac{1}{p}\text{-H\"ol}} \cdot T,
\]
and hence
\[
    \|\zeta\|_{p, J} \le \|\zeta\|_{\frac{1}{p}\text{-H\"ol}} \cdot T^{1/p}
\]
is finite. An analogous argument applied to \(\zeta^{(2)}\) shows that
\[
    \|\zeta^{(2)}\|_{\frac{p}{2}, J} < +\infty.
\]
Thus, combining the above with \eqref{eq:rough_variation}, we conclude that \(\bm{\zeta}\) has finite \(p\)-variation:
\[
    |||\bm{\zeta}|||_{p, J} < +\infty.
\]
\end{proof}

Building on \textbf{Proposition~\ref{prop:rough_finite_pvar}}, we can equip the space \(\mathscr{C}^p(J, \mathbb{R}^n)\) with natural metrics that measure distances between rough paths in terms of their regularity. This leads to the following definition.

\begin{definition}[Rough Path Metrics]
\label{roughPathMetrics}
Let \(\bm{\zeta} = (\zeta, \zeta^{(2)})\) and \(\bm{\eta} = (\eta, \eta^{(2)})\) be two rough paths in \(\mathscr{C}^p(J, \mathbb{R}^n)\). We define the \(\frac{1}{p}\)-Hölder and \(p\)-variation metrics on the interval \(J\) by
\begin{equation}
\label{roughHolderMetric}
    \roughMetricHolder{\zeta}{\eta}{p}{J} \coloneqq \|\zeta - \eta\|_{\frac{1}{p}\text{-H\"ol}} + \|\zeta^{(2)} - \eta^{(2)}\|_{\frac{2}{p}\text{-H\"ol}},
\end{equation}
\begin{equation}
\label{roughVariationMetric}
    \roughMetricPvar{\zeta}{\eta}{p}{J} \coloneqq \|\zeta - \eta\|_{p, J} + \|\zeta^{(2)} - \eta^{(2)}\|_{\frac{p}{2}, J}.
\end{equation}
It is straightforward to verify that both expressions define metrics on \(\mathscr{C}^p(J, \mathbb{R}^n)\).
\end{definition}

\begin{remark}[Canonical Lift]
\label{canonicalLift}
As discussed in \textbf{Example~\ref{example:mechanics_of_rough_integration}}, any smooth path \(\zeta \colon J \to \mathbb{R}^n\) admits a canonical lift. Specifically, the second-level component is defined by
\[
    \zeta^{(2)}_{s,t} \coloneqq \int_s^t \zeta_{s,r} \otimes d\zeta_r.
\]
This canonical lift plays an important role in connecting classical integration theory to the rough path framework.
\end{remark}

Before proceeding, we pause to clarify a point of terminology. The word ``control” in the next definition should not be confused with its usage in optimal control theory. In the context of rough path theory, its meaning will be clear from the definition and usage that follows.

\begin{definition}[Controlled Rough Paths {\cite{gubinelliControlledRoughPath}}]
\label{def:controlled_rough_paths}
Let \(\bm{\zeta} \in \mathscr{C}^p(J, \mathbb{R}^d)\) be a rough path. The space of \emph{controlled rough paths} with respect to \(\bm{\zeta}\), denoted \(\mathscr{D}_{\zeta}^{p}(J, \mathbb{R}^m)\), consists of all pairs
\[
    (X, X') \in \pvarSpace{p}{J}{\mathbb{R}^m} \times \pvarSpace{p}{J}{\mathcal{L}(\mathbb{R}^d, \mathbb{R}^m)}
\]
such that the remainder term
\[
    R^X_{s,t} \coloneqq X_{s,t} - X'_s \zeta_{s,t}
\]
has finite \(\frac{p}{2}\)-variation on \(J\). The path \(X'\) is called the \emph{Gubinelli derivative} of \(X\) with respect to \(\bm{\zeta}\).
\end{definition}

\begin{remark}[Norm on \(\mathscr{D}^p\)]
The space \(\mathscr{D}_{\zeta}^{p}(J, \mathbb{R}^m)\), equipped with the norm
\[
    \|(X, X')\| \coloneqq |X_0| + |X'_0| + \|X'\|_{p, J} + \|R^X\|_{\frac{p}{2}, J},
\]
is a Banach space.
\end{remark}

We are now in a position to state the existence and uniqueness theorem for rough integrals. A detailed proof can be found in \cite{gubinelliControlledRoughPath}, as it lies beyond the scope of this paper.

\begin{theorem}[Rough Integration]
\label{theorem:rough_integration}
Let \(J = [0, T]\), and suppose \(p \in [2, 3)\), \(\bm{\zeta} \in \mathscr{C}^p(J, \mathbb{R}^d)\), and \((X, X') \in \mathscr{D}_{\zeta}^p(J, \mathcal{L}(\mathbb{R}^d, \mathbb{R}^m))\). Then the rough integral of \(X\) against \(\bm{\zeta}\) is defined by
\begin{equation}
\label{eq:rough_integral_equation}
    \int_0^T X_r \, d\bm{\zeta}_r \coloneqq \lim_{|\mathcal{D}| \to 0} \sum_{\mathcal{D}} \left[ X_{t_i} \zeta_{t_i, t_{i+1}} + X'_{t_i} \zeta^{(2)}_{t_i, t_{i+1}} \right],
\end{equation}
where the limit is taken over all partitions \(\mathcal{D} = \{t_0, \dots, t_n\}\) of \([0, T]\) as the mesh size tends to zero. This limit exists, is independent of the choice of partition, and defines a unique value. The resulting object is called the \emph{rough integral} of \(X\) with respect to \(\bm{\zeta}\).

Moreover, for all \(0 \le s \le t \le T\), the rough integral satisfies the following remainder estimate:
\begin{equation}
\label{eq:integral_remainder}
    \left| \int_s^t X_r \, d\bm{\zeta}_r - X_s \zeta_{s, t} - X'_s \zeta^{(2)}_{s, t} \right|
    \le C \left( \|R^X\|_{\frac{p}{2}, [s, t]} \|\zeta\|_{p, [s, t]} + \|X'\|_{p, [s, t]} \|\zeta^{(2)}\|_{\frac{p}{2}, [s, t]} \right),
\end{equation}
where \(C > 0\) is a constant depending only on \(p\).
\end{theorem}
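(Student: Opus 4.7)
The plan is to reduce the existence and uniqueness claim to an application of the sewing lemma (Gubinelli's lemma), which promotes an ``almost additive'' two-parameter function to a unique genuinely additive one, provided the defect in additivity has enough regularity. Accordingly, define the compensated two-parameter candidate
\[
\Xi_{s,t} \coloneqq X_s\, \zeta_{s,t} + X'_s\, \zeta^{(2)}_{s,t},
\]
so that the right-hand side of \eqref{eq:rough_integral_equation} is precisely $\lim \sum_i \Xi_{t_i, t_{i+1}}$ along partitions $\mathcal{D}$ with mesh tending to zero. The entire theorem then reduces to showing that $\Xi$ is almost additive in a sense compatible with the sewing machinery, and to translating the resulting abstract estimate into the concrete bound \eqref{eq:integral_remainder}.

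The central algebraic step is to compute the coboundary $\delta\Xi_{s,u,t} \coloneqq \Xi_{s,t} - \Xi_{s,u} - \Xi_{u,t}$ for $s \le u \le t$. Substituting the definition of $\Xi$ and using Chen's relation \eqref{ChensRelation} to expand $\zeta^{(2)}_{s,t}$, the leading $X_s$- and $X'_s$-terms cancel, and after grouping the remaining pieces using $X_{s,u} = R^X_{s,u} + X'_s\zeta_{s,u}$ from \textbf{Definition~\ref{def:controlled_rough_paths}}, one is left with the identity
\[
\delta\Xi_{s,u,t} = -R^X_{s,u}\, \zeta_{u,t} - X'_{s,u}\, \zeta^{(2)}_{u,t}.
\]
Both summands factor into quantities with known $p$-variation regularity: $R^X$ has finite $\frac{p}{2}$-variation by hypothesis, $X' \in \pvarSpaceSimple{p}$ by the controlled rough path definition, and $\zeta, \zeta^{(2)}$ are controlled by \textbf{Proposition~\ref{prop:rough_finite_pvar}}. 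A Hölder-type majorisation (combined with the standard control-function formulation of $p$-variation) then yields $|\delta\Xi_{s,u,t}| \le C\, \omega(s,t)^{3/p}$ for a superadditive control $\omega$, where the crucial observation is that $3/p > 1$ precisely because $p \in [2, 3)$.

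With super-additivity secured, the sewing lemma produces a unique additive functional $I_{s,t}$ satisfying $|I_{s,t} - \Xi_{s,t}| \lesssim \omega(s,t)^{3/p}$, obtained as the limit of any Riemann sum $\sum_i \Xi_{t_i, t_{i+1}}$ as $|\mathcal{D}| \to 0$, and independent of the chosen sequence of partitions. Specialising the sewing estimate to $(s,t)$ and inserting the explicit factorisation of the coboundary recovers \eqref{eq:integral_remainder} directly, with the constant $C$ arising as a geometric sum $\sum_{n \ge 1} 2^{-n(3/p - 1)}$ from the dyadic refinement proof of the sewing lemma. The main obstacle, and the place where care is essential, is the coboundary calculation: the tensorial nature of $\zeta^{(2)}$ and the asymmetric roles played by $X$ and its Gubinelli derivative $X'$ make it easy to misplace terms, and the whole argument hinges on the fact that this calculation produces exactly the two factors $R^X \otimes \zeta$ and $X' \otimes \zeta^{(2)}$ whose regularities sum to the exponent $3/p > 1$ needed to invoke the sewing lemma.
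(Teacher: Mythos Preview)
Your proposal is correct and is precisely the standard Gubinelli sewing-lemma argument; the coboundary computation $\delta\Xi_{s,u,t} = -R^X_{s,u}\,\zeta_{u,t} - X'_{s,u}\,\zeta^{(2)}_{u,t}$ is accurate, and the exponent bookkeeping $3/p > 1$ is exactly what makes the sewing lemma apply. The paper itself does not supply a proof of this theorem at all---it explicitly defers to \cite{gubinelliControlledRoughPath} with the remark that ``a detailed proof can be found in [Gubinelli], as it lies beyond the scope of this paper''---so your sketch in fact fills in what the paper omits, and does so via the very approach the cited reference uses.
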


\subsection{Integration Theory for Optimal Control}
\label{sec:integration_theory_for_optimal_control}
In the setting of \emph{stochastic} optimal control, one frequently encounters stochastic differential equations (SDEs) of the form \[ dX_s = b(X_s, \gamma_s) \, ds + \sigma(X_s, \gamma_s) \, dB_s, \] where \( \gamma_t \) denotes a control process and \( B_t \) is a standard Brownian motion. Here, \( b \) and \( \sigma \) describe the drift and diffusion coefficients, respectively, which depend on both the state \( X_s \) and the control \( \gamma_s \). In this classical setting, optimal control strategies can be derived under the probabilistic framework of stochastic processes, with well-established methods for handling the interplay between the stochastic behavior of \( B_t \) and the influence of the control.

However, attempting to naively extend these principles of optimal control to the \emph{rough differential equation} (RDE) framework presents substantial challenges. Specifically, when pathwise optimal control techniques are applied directly to rough paths, degeneracy issues often arise. This phenomenon, thoroughly investigated by Diehl, Friz, and Gassiat \cite{DiehlFrizGassiat} as well as by Allan and Cohen \cite{allan2019pathwise}, reveals that traditional methods fail to preserve the desired properties in the rough setting, leading to breakdowns in the stability and effectiveness of the control processes. These degeneracies highlight a fundamental gap between classical SDE methods and rough path theory when applied to control problems, necessitating alternative approaches for meaningful analysis and application in the rough path setting.

In order to overcome these limitations, we shift our attention to rough differential equations (RDEs) of the form \[ dX_s = b(X_s, \gamma_s) \, ds + \lambda(X_s, \gamma_s) \, d\bm{\zeta}_s, \] where \( \bm{\zeta}_s \) is a rough path serving as a generalized driving signal. The dependence of the diffusion coefficient \( \lambda \) on both \( X_s \) and the control process \( \gamma_s \) introduces further complexities, moving us out of the classical framework. This dependence presents a novel challenge, as the rough path framework demands a reformulation of control techniques that accounts for the irregularity of \( \bm{\zeta} \) while managing the interactions between the pathwise nature of \( \bm{\zeta} \) and the control.

In the next section, we will address these issues in detail, exploring methods to resolve the degeneracies associated with pathwise optimal control in the rough path context. By developing a more refined approach, we aim to establish a consistent framework for RDE-based optimal control that accommodates both the roughness of the driving signal and the intricate dependencies introduced by the control variable. This framework will provide a foundation for further research and application in rough path control, advancing the theory beyond the limitations encountered in earlier approaches.

\begin{definition}[Rough Differential Equations]
\label{RDEdefinition}
Let \(p \in [2, 3)\), \(\bm{\zeta} \in \mathscr{C}^p(J, \mathbb{R}^d)\), and \[ \gamma \in \pvarSpace{\frac{p}{2}}{J}{\mathbb{R}^k}. \] We consider rough differential equations (RDEs) of the form
\begin{equation}
\label{RDEequation}
    dX_s = b(X_s, \gamma_s) \, ds + \lambda(X_s, \gamma_s) \, d\bm{\zeta}_s,
\end{equation}
with initial condition \(X_0 = x_0\), where the integral with respect to \(\bm{\zeta}\) is defined in the sense of \textbf{Theorem~\ref{theorem:rough_integration}}.
\end{definition}

\begin{remark}[Gubinelli Derivative]
Suppose \((X, X') \in \mathscr{D}^p\) and \(\lambda \in C_b^2\). Then the composition \(\lambda(X, \gamma)\) has Gubinelli derivative
\[
    \lambda(X, \gamma)' = \partial_x \lambda(X, \gamma) \, X',
\]
where \(\partial_x \lambda\) denotes the Fréchet derivative of \(\lambda\) with respect to its first argument. Specifically, \(\partial_x \lambda(x, \gamma) = A\) for a linear transformation \(A\) such that
\[
    \lim_{|h| \to 0} \ \frac{|\lambda(x + h, \gamma) - \lambda(x, \gamma) - Ah|}{|h|} = 0.
\]
See \cite{sternberg} for further details.
\end{remark}

To support further analysis of pathwise regularity, we now state some classical results on Hölder continuity, variation spaces, and related properties. These will be used in later arguments.

\begin{proposition}
\label{prop:holder_continous_paths_have_pvar}
Let \(\alpha \in (0, 1)\). Then every \(\alpha\)-Hölder continuous path has finite \(\frac{1}{\alpha}\)-variation.
\end{proposition}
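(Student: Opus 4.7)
The plan is to mimic, almost verbatim, the argument already given in the proof of \textbf{Proposition~\ref{prop:rough_finite_pvar}}, which contains the prototype of the computation needed here. Specifically, if $X\colon J \to \mathbb{R}^n$ is $\alpha$-Hölder continuous with finite seminorm $\|X\|_{\alpha\text{-H\"ol}}$, I would first translate the Hölder estimate $|X_{s,t}| \le \|X\|_{\alpha\text{-H\"ol}} |t-s|^{\alpha}$ into a bound that is summable over a partition once raised to the appropriate power. Raising to the power $1/\alpha$ gives $|X_{s,t}|^{1/\alpha} \le \|X\|_{\alpha\text{-H\"ol}}^{1/\alpha} (t-s)$, which is the crucial linearisation in the time increment.

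Next, given any partition $\mathcal{D} = (t_i)_{i=0}^n$ of $J = [0,T]$, I would sum this pointwise bound telescopically:
\[
\sum_{i} |X_{t_i, t_{i+1}}|^{1/\alpha} \le \|X\|_{\alpha\text{-H\"ol}}^{1/\alpha} \sum_i (t_{i+1} - t_i) = \|X\|_{\alpha\text{-H\"ol}}^{1/\alpha} \cdot T.
\]
Since the right-hand side does not depend on $\mathcal{D}$, taking the supremum over all partitions of $J$ yields $\|X\|_{\frac{1}{\alpha}, J}^{1/\alpha} \le \|X\|_{\alpha\text{-H\"ol}}^{1/\alpha} \cdot T$, and therefore
\[
\|X\|_{\frac{1}{\alpha}, J} \le \|X\|_{\alpha\text{-H\"ol}} \cdot T^{\alpha} < +\infty,
\]
which is the desired conclusion.

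In terms of difficulty, there is really no substantive obstacle here: the result is a direct, one-line consequence of the defining Hölder bound combined with the trivial observation that the sum of increments $\sum (t_{i+1}-t_i)$ telescopes to $T$. The only mild subtlety worth flagging is the role of the exponent: the identity $|X_{s,t}|^{1/\alpha} \le \|X\|_{\alpha\text{-H\"ol}}^{1/\alpha}(t-s)^{\alpha \cdot 1/\alpha}$ relies on $\alpha > 0$, and the resulting constant $T^{\alpha}$ is bounded precisely because $\alpha \in (0,1)$ ensures $T^{\alpha}$ is finite (trivially so on a bounded interval). No partition-refinement or limiting argument is needed beyond taking the supremum, so the proof can be presented in essentially three lines mirroring the earlier proposition.
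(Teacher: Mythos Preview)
Your proposal is correct and takes precisely the same route as the paper: the paper's own proof simply recalls the definition of $\alpha$-H\"older continuity and then invokes the argument of \textbf{Proposition~\ref{prop:rough_finite_pvar}} verbatim, which is exactly the telescoping computation you have written out. The only superfluous remark in your write-up is the comment about $\alpha\in(0,1)$ ensuring $T^{\alpha}$ is finite---this is trivially true for any real $\alpha$ since $T<\infty$, so you can safely drop that aside.
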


\begin{proof}
\normalsize
By \(\alpha\)-Hölder continuity, we mean that \(X \colon J \to \mathbb{R}^n\) satisfies
\[
    \sup_{s \neq t \in J} \frac{|X_{s, t}|}{|t - s|^{\alpha}} < +\infty.
\]
Applying the same argument as in \textbf{Proposition~\ref{prop:rough_finite_pvar}} yields the result.
\end{proof}

The following inclusion result is standard and stated without proof.

\begin{lemma}
\label{lemma:variation_inclusion}
If \(1 \le p \le q < +\infty\), then
\[
    \pvarSpaceSimple{1} \subseteq \pvarSpaceSimple{p} \subseteq \pvarSpaceSimple{q}.
\]
\end{lemma}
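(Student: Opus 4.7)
The plan is to show the single nontrivial inclusion $\pvarSpaceSimple{p} \subseteq \pvarSpaceSimple{q}$ for $1 \le p \le q < +\infty$, since the chain $\pvarSpaceSimple{1} \subseteq \pvarSpaceSimple{p}$ is just the special case $p = 1$. So I would fix a path $X \colon J \to \mathbb{R}^n$ with $\|X\|_{p,J} < +\infty$ and aim to bound $\|X\|_{q,J}$ in terms of $\|X\|_{p,J}$.

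The key observation I would use is that every individual increment is controlled by the total $p$-variation: for any $0 \le s \le t \le T$, the trivial one-interval partition of $[s,t]$ gives $|X_{s,t}|^p \le \|X\|_{p,[s,t]}^p \le \|X\|_{p,J}^p$, and hence $|X_{s,t}| \le \|X\|_{p,J}$. With this uniform bound in hand, for any partition $\mathcal{D}$ of $J$ and any exponent $q \ge p$, I would factor
\[
    \sum_{(t_i, t_{i+1}) \in \mathcal{D}} |X_{t_i, t_{i+1}}|^q
    = \sum_{(t_i, t_{i+1}) \in \mathcal{D}} |X_{t_i, t_{i+1}}|^{q-p} \, |X_{t_i, t_{i+1}}|^p
    \le \|X\|_{p,J}^{q-p} \sum_{(t_i, t_{i+1}) \in \mathcal{D}} |X_{t_i, t_{i+1}}|^p,
\]
and bound the remaining sum by $\|X\|_{p,J}^p$ from the definition of $p$-variation.

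Taking the supremum over all partitions $\mathcal{D}$ and then the $q$-th root yields $\|X\|_{q,J} \le \|X\|_{p,J}$, which in particular shows that $\|X\|_{q,J} < +\infty$ and establishes the inclusion.

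There is no real obstacle here; the argument is elementary and amounts to exploiting $q \ge p$ to trade one factor of the increment for the uniform bound. The only mild subtlety worth flagging is the pointwise estimate $|X_{s,t}| \le \|X\|_{p,J}$, which some readers may not recognize as immediate from \textbf{Definition~\ref{def:standard_definitions}}; I would state it explicitly so that the factoring step above is fully justified.
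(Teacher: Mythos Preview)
Your argument is correct and entirely standard: the factoring $|X_{t_i,t_{i+1}}|^q = |X_{t_i,t_{i+1}}|^{q-p}|X_{t_i,t_{i+1}}|^p$ together with the pointwise bound $|X_{s,t}| \le \|X\|_{p,J}$ yields $\|X\|_{q,J} \le \|X\|_{p,J}$, which is exactly the desired inclusion. The paper itself declares this lemma ``standard and stated without proof,'' so there is nothing to compare against; your write-up fills the gap cleanly.
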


We now show that the map \(t \mapsto \|X\|_{1, [0, t]}\) itself has finite \(p\)-variation for any \(p \in [1, \infty)\).

\begin{lemma}
\label{lemma:1var_has_finite_pvar}
Let \(X \in \pvarSpace{1}{J}{\mathbb{R}^n}\). Then the function \(t \mapsto \|X\|_{1, [0, t]}\) has finite \(p\)-variation for all \(1 \le p < +\infty\).
\end{lemma}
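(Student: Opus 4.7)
The plan is to show that $V(t) := \|X\|_{1,[0,t]}$ is monotone non-decreasing and bounded, deduce immediately that $V \in \pvarSpaceSimple{1}$, and then invoke \textbf{Lemma~\ref{lemma:variation_inclusion}} to promote this to finite $p$-variation for any $p \ge 1$.

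The first step is the additivity of $1$-variation on adjacent subintervals: for $0 \le s \le t \le T$,
\[
    \|X\|_{1,[0,t]} \;=\; \|X\|_{1,[0,s]} + \|X\|_{1,[s,t]}.
\]
The ``$\ge$'' direction is obtained by concatenating any partition of $[0,s]$ with any partition of $[s,t]$ to form a partition of $[0,t]$ whose increments sum to the sum of the two pieces. The ``$\le$'' direction follows by observing that inserting $s$ into an arbitrary partition of $[0,t]$ can only increase the associated sum, by the triangle inequality $|X_{t_{i-1},t_i}| \le |X_{t_{i-1},s}| + |X_{s,t_i}|$ applied to the unique interval containing $s$. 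In particular $V(t) - V(s) = \|X\|_{1,[s,t]} \ge 0$, so $V$ is non-decreasing, and clearly $V(T) = \|X\|_{1,J} < +\infty$ by hypothesis.

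The second step is then routine telescoping. For any partition $\mathcal{D} = (t_i)_{i=0}^n$ of $J$, monotonicity gives
\[
    \sum_i |V(t_{i+1}) - V(t_i)| \;=\; \sum_i \bigl(V(t_{i+1}) - V(t_i)\bigr) \;=\; V(T) - V(0) \;=\; \|X\|_{1,J}.
\]
Taking the supremum over $\mathcal{D}$ yields $\|V\|_{1,J} \le \|X\|_{1,J} < +\infty$, so $V \in \pvarSpaceSimple{1}$. Applying \textbf{Lemma~\ref{lemma:variation_inclusion}} with $1 \le p$ concludes that $V \in \pvarSpaceSimple{p}$ for every $p \in [1,\infty)$.

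The only mildly technical ingredient is the additivity of $1$-variation on adjacent intervals; everything else reduces to a telescoping sum and an appeal to the inclusion lemma, so I do not expect any genuine obstacle.
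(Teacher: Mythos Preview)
Your proof is correct and follows essentially the same approach as the paper: define $f(t)=\|X\|_{1,[0,t]}$, observe it is non-decreasing so that the $1$-variation sum telescopes to $f(T)-f(0)<\infty$, and then apply \textbf{Lemma~\ref{lemma:variation_inclusion}}. The only difference is that you spell out the additivity of $1$-variation on adjacent intervals to justify monotonicity, whereas the paper simply states that monotonicity is ``straightforward to verify''.
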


\begin{proof}
\normalsize
Define the function
\[
    f(t) \coloneqq \|X\|_{1, [0, t]}.
\]
It is straightforward to verify that \(f\) is monotonically increasing on \(J\). Hence, for any partition \(\mathcal{D} = \{0 = t_0 < \dots < t_n = T\}\), we have
\begin{equation*}
\begin{split}
    \sum_{\mathcal{D}} |f(t_{i+1}) - f(t_i)| 
    &= \sum_{\mathcal{D}} f(t_{i+1}) - f(t_i) \\
    &= f(T) - f(0) \\
    &< +\infty.
\end{split}
\end{equation*}
This shows that \(f\) has finite 1-variation. The result now follows from \textbf{Lemma~\ref{lemma:variation_inclusion}}, which implies that any function of finite 1-variation also has finite \(p\)-variation for \(p \ge 1\).
\end{proof}

We now collect several auxiliary inequalities that will be used in subsequent analysis. The first is a simple consequence of Jensen's inequality.

\begin{lemma}
\label{lemma:inequality1}
Let \(p \ge 1\), and let \(x_1, \dots, x_n\) be non-negative real numbers. Then
\[
    (x_1 + \dots + x_n)^p \le n^p(x_1^p + \dots + x_n^p).
\]
\end{lemma}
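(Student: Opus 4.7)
The plan is to give a short, elementary argument via a maximum bound. First, for each index $i$ we have the trivial estimate $x_i \le \max_{1 \le j \le n} x_j$, and summing over $i$ gives
\[
    x_1 + \dots + x_n \le n \max_{1 \le j \le n} x_j.
\]
Since $p \ge 1$, the map $t \mapsto t^p$ is monotonically increasing on $[0, \infty)$, so raising both sides to the $p$-th power preserves the inequality:
\[
    (x_1 + \dots + x_n)^p \le n^p \left( \max_{1 \le j \le n} x_j \right)^p = n^p \max_{1 \le j \le n} x_j^p.
\]

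Next, the maximum of non-negative numbers is dominated by their sum, so $\max_j x_j^p \le x_1^p + \dots + x_n^p$, which combined with the previous display yields the claimed inequality. This step is completely routine and there is no real obstacle; the only mild subtlety is the monotonicity of $t \mapsto t^p$, which relies on $p \ge 1$ (in fact $p > 0$ would suffice here, but $p \ge 1$ is the hypothesis of interest for subsequent applications).

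As a remark for the record, the cited reference to Jensen's inequality produces the sharper bound $n^{p-1}(x_1^p + \dots + x_n^p)$ by applying convexity of $t \mapsto t^p$ to the uniform average $\frac{1}{n}\sum x_i$; since $n^{p-1} \le n^p$ for $n \ge 1$, this also implies the stated form. Either argument suffices, but the max-bound route avoids invoking Jensen and is the most direct.
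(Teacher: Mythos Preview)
Your argument is correct. The paper's own proof proceeds via Jensen's inequality: it places the uniform probability measure on $\{x_1,\dots,x_n\}$, applies convexity of $t\mapsto t^p$ to obtain $\bigl(\tfrac{1}{n}\sum_i x_i\bigr)^p \le \tfrac{1}{n}\sum_i x_i^p$, and then multiplies through by $n^p$, yielding first the sharper bound $n^{p-1}(x_1^p+\dots+x_n^p)$ and then the stated one. Your route through the maximum bound $\sum_i x_i \le n\max_j x_j$ followed by $\max_j x_j^p \le \sum_j x_j^p$ is more elementary, avoiding any appeal to convexity, and in fact works for all $p>0$ since only monotonicity of $t\mapsto t^p$ is used; the price is that you land directly on the constant $n^p$ rather than the tighter $n^{p-1}$. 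For the purposes of this lemma either constant suffices, and your closing remark already identifies the Jensen alternative accurately.
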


\begin{proof}
\normalsize
Define a probability measure on \(\Omega = \{x_1, \dots, x_n\}\) by setting
\[
    \mathbb{P}\{x_i\} = \frac{1}{n}.
\]
Since the function \(t \mapsto t^p\) is convex on \([0, \infty)\), Jensen’s inequality implies
\begin{equation*}
\begin{split}
    \left( \frac{x_1 + \dots + x_n}{n} \right)^p 
    &= \frac{1}{n^p} (x_1 + \dots + x_n)^p \\
    &\le \frac{1}{n} (x_1^p + \dots + x_n^p).
\end{split}
\end{equation*}
Multiplying both sides by \(n^p > 1\) yields
\[
    (x_1 + \dots + x_n)^p \le n^{p - 1}(x_1^p + \dots + x_n^p) \le n^p(x_1^p + \dots + x_n^p).
\]
\end{proof}

The next two lemmas provide tools for bounding variation norms.

\begin{lemma}
\label{lemma:inequality2}
Let \(X \in \pvarSpaceSimple{p}\), and let \(\mathcal{D} = \{t_0 < \dots < t_n\}\) be a partition of \(J\). Then
\[
    \|X\|_{p, J} \le n \left( \sum_{i = 0}^{n-1} \|X\|^p_{p, [t_i, t_{i+1}]} \right)^{\frac{1}{p}}.
\]
\end{lemma}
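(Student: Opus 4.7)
The plan is to reduce the claim to a bookkeeping argument by fixing an arbitrary partition $\mathcal{D}' = \{s_0 < \cdots < s_m\}$ of $J$, refining it with $\mathcal{D}$, and then comparing the resulting sums via \textbf{Lemma~\ref{lemma:inequality1}}. Since $\|X\|_{p,J}$ is defined as a supremum over all such $\mathcal{D}'$, it suffices to prove the corresponding bound for each fixed $\mathcal{D}'$.

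First I would introduce the common refinement $\tilde{\mathcal{D}} \coloneqq \mathcal{D}' \cup \mathcal{D}$, and for each $j$ let $r_j$ denote the number of subintervals of $\tilde{\mathcal{D}}$ contained in $[s_j, s_{j+1}]$. Because $\mathcal{D}$ contains exactly $n+1$ points, of which $t_0 = 0$ and $t_n = T$ are endpoints of $J$ and hence cannot lie in the open interval $(s_j, s_{j+1})$, at most $n-1$ points of $\mathcal{D}$ can fall strictly inside $(s_j, s_{j+1})$. This yields the uniform bound $r_j \le n$ for every $j$.

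Next I would telescope:
\[
    |X_{s_j, s_{j+1}}| \le \sum_{k} |X_{u_k, u_{k+1}}|,
\]
where $u_0 = s_j < u_1 < \cdots < u_{r_j} = s_{j+1}$ enumerate the $\tilde{\mathcal{D}}$-points in $[s_j, s_{j+1}]$, and then apply \textbf{Lemma~\ref{lemma:inequality1}} with $r_j \le n$ terms to obtain
\[
    |X_{s_j, s_{j+1}}|^p \le n^p \sum_{k} |X_{u_k, u_{k+1}}|^p.
\]
Summing over $j$ collects every subinterval of $\tilde{\mathcal{D}}$ exactly once, and each such subinterval is contained in a unique $[t_i, t_{i+1}]$ by construction. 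Regrouping the resulting sum by the index $i$ and bounding each group by $\|X\|^p_{p, [t_i, t_{i+1}]}$ gives
\[
    \sum_{j} |X_{s_j, s_{j+1}}|^p \le n^p \sum_{i=0}^{n-1} \|X\|^p_{p, [t_i, t_{i+1}]}.
\]
Taking the supremum over all partitions $\mathcal{D}'$ of $J$ and then the $p$-th root produces the claimed inequality.

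The main obstacle is combinatorial rather than analytic: one must verify that the count $r_j \le n$ really does hold uniformly in $j$, which hinges on the observation that $t_0$ and $t_n$ cannot contribute as strictly interior points of any $(s_j, s_{j+1})$. Once this is in hand, the remaining steps are a direct application of \textbf{Lemma~\ref{lemma:inequality1}} and a routine regrouping of the refinement sum by the intervals $[t_i, t_{i+1}]$.
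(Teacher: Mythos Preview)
Your proof is correct and follows essentially the same strategy as the paper: refine an arbitrary partition by the fixed partition $\mathcal{D}$, apply \textbf{Lemma~\ref{lemma:inequality1}} to each telescoped increment, and regroup the resulting terms by the intervals $[t_i,t_{i+1}]$. Your write-up is in fact more careful than the paper's, since you explicitly distinguish the arbitrary partition $\mathcal{D}'$ from the fixed $\mathcal{D}$ and justify the key combinatorial bound $r_j \le n$ that makes the factor $n^p$ legitimate.
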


\begin{proof}
\normalsize
Fix a partition \(\mathcal{D} = \{t_0 < \dots < t_n\}\) of \(J\), and for each \(i = 0, \dots, n-1\), let \[\mathcal{D}_i = \{s^i_0 < \dots < s^i_{k_i}\}\] be a refinement of \([t_i, t_{i+1}]\). Then by subadditivity and \textbf{Lemma~\ref{lemma:inequality1}}, we have
\begin{equation*}
\begin{split}
    \sum_{\mathcal{D}} |X_{t_{i+1}} - X_{t_i}|^p 
    &\le \sum_{i=0}^{n-1} \left( \sum_{j=0}^{k_i - 1} |X_{s^i_{j+1}} - X_{s^i_j}| \right)^p \\
    &\le n^p \sum_{i=0}^{n-1} \sum_{j=0}^{k_i - 1} |X_{s^i_{j+1}} - X_{s^i_j}|^p \\
    &\le n^p \sum_{i=0}^{n-1} \|X\|_{p, [t_i, t_{i+1}]}^p.
\end{split}
\end{equation*}
Taking the \(p\)th root on both sides gives the desired inequality.
\end{proof}

\begin{lemma}
\label{lemma:inequality3}
Let \(d_1, \dots, d_n \in \mathbb{N}_+\), and let \(x_i \in \mathbb{R}^{d_i}\). Then
\[
    |(x_1, \dots, x_n)| \le |x_1| + \dots + |x_n|,
\]
where \(|\cdot|\) denotes the Euclidean norm on each respective space.
\end{lemma}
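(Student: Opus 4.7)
The plan is to unpack the Euclidean norm on the concatenated space \(\mathbb{R}^{d_1 + \dots + d_n}\) and reduce the claim to a comparison between the \(\ell^2\) and \(\ell^1\) norms of the vector of magnitudes \((|x_1|, \dots, |x_n|) \in \mathbb{R}^n\). This is fundamentally a two-line observation, so the ``proof'' is really a matter of arranging the definitions cleanly.

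First I would note that the Euclidean norm on \(\mathbb{R}^{d_1 + \dots + d_n}\), applied to the concatenation \((x_1, \dots, x_n)\), satisfies
\[
    |(x_1, \dots, x_n)|^2 = \sum_{i=1}^{n} |x_i|^2,
\]
by the definition of the Euclidean norm as the square root of the sum of squared components, grouped block by block. Writing \(a_i \coloneqq |x_i| \ge 0\), the inequality to be proved becomes
\[
    \sqrt{a_1^2 + \dots + a_n^2} \;\le\; a_1 + \dots + a_n.
\]

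Next I would square both sides (both are non-negative) and observe that
\[
    (a_1 + \dots + a_n)^2 = \sum_{i=1}^{n} a_i^2 + 2 \sum_{1 \le i < j \le n} a_i a_j.
\]
Since every \(a_i \ge 0\), the cross-term \(2 \sum_{i<j} a_i a_j\) is non-negative, so \((a_1 + \dots + a_n)^2 \ge a_1^2 + \dots + a_n^2\), and taking square roots yields the desired inequality.

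There is no substantive obstacle here; the statement is essentially the elementary bound \(\|v\|_2 \le \|v\|_1\) on \(\mathbb{R}^n\), dressed up by allowing each ``coordinate'' to be a vector in its own Euclidean space. An alternative, equally short route would be induction on \(n\) using the ordinary triangle inequality \(\sqrt{a^2 + b^2} \le a + b\) at each step, but the direct squaring argument above is the cleanest.
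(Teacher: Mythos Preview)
Your proof is correct. The paper takes a slightly different route: it writes
\[
(x_1,\dots,x_n) = (x_1,0,\dots,0) + (0,x_2,0,\dots,0) + \dots + (0,\dots,0,x_n)
\]
in \(\mathbb{R}^{d_1+\dots+d_n}\) and applies the triangle inequality there, noting that each summand has Euclidean norm \(|x_i|\). Your argument instead reduces to the scalar inequality \(\|a\|_2 \le \|a\|_1\) for \(a = (|x_1|,\dots,|x_n|)\) and checks it by squaring. Both are one-line arguments; the paper's version is marginally more conceptual in that it never leaves the realm of the triangle inequality, while yours makes the underlying \(\ell^2\)-versus-\(\ell^1\) comparison explicit. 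Neither has any advantage of generality or rigor over the other.
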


\begin{proof}
\normalsize
The result follows by repeated application of the triangle inequality. Specifically, observe that
\[
    (x_1, \dots, x_n) = (x_1, 0, \dots, 0) + (0, x_2, 0, \dots, 0) + \dots + (0, \dots, 0, x_n),
\]
and apply the triangle inequality in \(\mathbb{R}^{d_1 + \dots + d_n}\).
\end{proof}

To simplify notation, we write \(|X_{s,t}| \lesssim |t - s|\) to indicate that \(|X_{s,t}| \le C|t - s|\) for some constant \(C > 0\). The following proposition establishes several regularity results for rough differential equations. The expanded and detailed proof is provided in the appendix.
\begin{proposition}[Regularity]
\label{prop:regularity_RDE}
    Suppose that $p \in [2, 3)$, $\bm{\zeta} \in \mathscr{C}^p$, $M \in \bb{R}_+$, $\roughholder{\zeta}{1}{p}{J} \le M$, $b \in Lip_b$, $\psi, \lambda \in C^2_b$, $\gamma \in \pvarSpaceSimple{\frac{p}{2}}$ and $X$ satisfies the RDE (\ref{RDEequation}) with Gubinelli derivative $X^{\prime} = \lambda(X, \gamma)$. Then we have the following regularity results
    \begin{enumerate}
        \item \label{est1} $\pvar{\psi(X, \gamma)^{\prime}}{p}{J} \lesssim \pvar{X}{p}{J} + \pvar{\gamma}{\frac{p}{2}}{J}$
        \item \label{est2} $\pvar{R^\psi}{\frac{p}{2}}{J} \lesssim \pvar{X}{p}{J}^2 + \pvar{R^X}{\frac{p}{2}}{J} + \pvar{\gamma}{\frac{p}{2}}{J}$
        \item \label{est3} $\pvar{X}{p}{J} \lesssim 1 + \pvar{\gamma}{\frac{p}{2}}{J}^{1 + p}$
        \item \label{est4} $\pvar{R^X}{\frac{p}{2}}{J} \lesssim 1 + \pvar{\gamma}{\frac{p}{2}}{J}^{2 + p}.$
    \end{enumerate}
\end{proposition}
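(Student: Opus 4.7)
The plan is to handle the four regularity estimates in sequence, starting with the local compositional bounds (1) and (2) and then using them to derive the global RDE estimates (3) and (4) via a local-to-global patching argument. For (1), I would expand the Gubinelli derivative $\psi(X,\gamma)' = \partial_x\psi(X,\gamma)\lambda(X,\gamma)$ and exploit $\partial_x\psi \cdot \lambda \in C^1_b$ (since $\psi, \lambda \in C^2_b$) to obtain the Lipschitz-style increment bound $|\psi(X,\gamma)'_{s,t}| \lesssim |X_{s,t}| + |\gamma_{s,t}|$. Raising to the $p$-th power, summing over an arbitrary partition, applying \textbf{Lemma~\ref{lemma:inequality1}} to split $(|X_{s,t}|+|\gamma_{s,t}|)^p$, and invoking the embedding $\pvarSpaceSimple{p/2} \subseteq \pvarSpaceSimple{p}$ from \textbf{Lemma~\ref{lemma:variation_inclusion}} then yields the claim.

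For (2), I would Taylor-expand
\[
\psi(X_t,\gamma_t) - \psi(X_s,\gamma_s) = \partial_x\psi(X_s,\gamma_s)X_{s,t} + \partial_\gamma\psi(X_s,\gamma_s)\gamma_{s,t} + O\!\left(|X_{s,t}|^2 + |\gamma_{s,t}|^2\right)
\]
and substitute $X_{s,t} = \lambda(X_s,\gamma_s)\zeta_{s,t} + R^X_{s,t}$. The leading $\zeta_{s,t}$ contribution is then cancelled by the subtraction in $R^\psi_{s,t} = \psi(X,\gamma)_{s,t} - \partial_x\psi(X_s,\gamma_s)\lambda(X_s,\gamma_s)\zeta_{s,t}$, leaving a pointwise bound $|R^\psi_{s,t}| \lesssim |R^X_{s,t}| + |\gamma_{s,t}| + |X_{s,t}|^2 + |\gamma_{s,t}|^2$ after bounding the partial derivatives by the $C^2_b$-norm of $\psi$. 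Raising to the $p/2$-th power, summing over a partition, and identifying each resulting term with the appropriate variation norm (noting in particular that the $p/2$-variation of $|X_{s,t}|^2$ equals $\|X\|_{p,J}^2$) yields (2), absorbing the higher-order $\gamma$ contributions by the embedding from \textbf{Lemma~\ref{lemma:variation_inclusion}}.

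For (3) and (4), I would apply the mild form
\[
X_{s,t} = \int_s^t b(X_r,\gamma_r)\,dr + \int_s^t \lambda(X_r,\gamma_r)\,d\bm{\zeta}_r
\]
and estimate the rough integral via the remainder bound \eqref{eq:integral_remainder} of \textbf{Theorem~\ref{theorem:rough_integration}}, using (1) and (2) (applied with $\psi = \lambda$) to control the Gubinelli derivative $\lambda(X,\gamma)'$ and the remainder $R^\lambda$. Combined with the Hölder control $\|\zeta\|_{p,[s,t]} \lesssim M|t-s|^{1/p}$ and $\|\zeta^{(2)}\|_{p/2,[s,t]} \lesssim M|t-s|^{2/p}$, this produces a local estimate of the form
\[
\|X\|_{p,[s,t]} + \|R^X\|_{p/2,[s,t]} \le C_0\bigl(1 + \|\gamma\|_{p/2,[s,t]}\bigr) + \varepsilon(t-s)\bigl(\|X\|_{p,[s,t]} + \|R^X\|_{p/2,[s,t]}\bigr),
\]
with $\varepsilon(h)\to 0$ as $h\to 0$. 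Choosing $t-s$ small enough to absorb the self-referential term on the left, and then patching via \textbf{Lemma~\ref{lemma:inequality2}} over $[0,T]$ partitioned into $n$ subintervals on which $\|\gamma\|_{p/2,[t_i,t_{i+1}]}$ is controlled, produces the global bounds (3) and (4).

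The main obstacle will be the patching step. Because the minimal subinterval length required to enforce the contraction depends on $M$ and on $\|\gamma\|_{p/2,J}$, the number $n$ of subintervals needed grows polynomially in $\|\gamma\|_{p/2,J}$, and \textbf{Lemma~\ref{lemma:inequality2}} introduces a factor of $n$ inside the $p$-variation. It is precisely this bookkeeping -- tracking $\|X\|_p$ and $\|R^X\|_{p/2}$ simultaneously through the patching while accounting for the growth of $n$ -- that produces the exponents $1+p$ and $2+p$ in (3) and (4). Identifying the correct subinterval length and verifying that the resulting exponents match is the delicate technical point, and the detailed calculation is deferred to the appendix as the proposition indicates.
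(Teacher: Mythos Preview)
Your proposal is correct and follows essentially the same route as the paper: Lipschitz bound on $\psi(X,\gamma)'=\partial_x\psi(X,\gamma)\lambda(X,\gamma)$ for (1), a Taylor expansion for (2), and for (3)--(4) the rough-integral remainder estimate \eqref{eq:integral_remainder} combined with a local absorption argument and patching via \textbf{Lemma~\ref{lemma:inequality2}}. Your handling of (2) via the explicit substitution $X_{s,t}=\lambda(X_s,\gamma_s)\zeta_{s,t}+R^X_{s,t}$ is in fact cleaner than the paper's version, which writes the Taylor remainder somewhat loosely.

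One small remark on the exponents: you attribute the $1+p$ and $2+p$ to the growth of the number $n$ of patches in $\|\gamma\|_{p/2}$, and worry about the bookkeeping. The paper sidesteps this entirely. It chooses the subinterval length from the conditions $\|\zeta\|_{p,[s,t]},\|\zeta^{(2)}\|_{p/2,[s,t]}\le\tfrac12$ and $\|X\|_{p,I}<\tfrac12$, obtains on each piece $\|X\|_{p,I}\lesssim 1+\|\gamma\|_{p/2,I}$, patches to get $\|X\|_{p,J}\lesssim 1+\|\gamma\|_{p/2,J}^{p}$ (respectively $\|R^X\|_{p/2,J}\lesssim 1+\|\gamma\|_{p/2,J}^{p/2}$), and then simply \emph{weakens} these to the stated exponents $1+p$ and $2+p$ via the trivial inequality $1+r^{p}\lesssim 1+r^{1+p}$. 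So you need not track $n$ as a function of $\gamma$ at all; the stated exponents are not sharp and are obtained by discarding information at the last step.
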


We conclude this section with two key theorems, whose proofs can be found in \cite{allan2019pathwise}. Although the proofs are lengthy, they are conceptually straightforward, relying on standard arguments and the results established above.

\begin{theorem}[Existence \& Uniqueness]
\label{theorem:RDETheorem1}
Suppose that \(b \in \mathrm{Lip}_b\), \(\lambda \in C^3_b\), and \(\bm{\zeta} \in \mathscr{C}^p\). If \(\gamma \in \pvarSpaceSimple{\frac{p}{2}}\) and \(x\) is fixed, then there exists a unique solution \((X, X') \in \mathscr{D}^p_{\zeta}\) to the RDE
\[
    dX_t = b(X_t, \gamma_t)\,dt + \lambda(X_t, \gamma_t)\,d\bm{\zeta}_t,
\]
with initial condition \(X_0 = x\).
\end{theorem}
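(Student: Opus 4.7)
The plan is to construct the solution via a Picard iteration argument in the Banach space of controlled rough paths introduced earlier. Specifically, define the map $\mathcal{M}\colon \mathscr{D}^p_\zeta(J, \mathbb{R}^m) \to \mathscr{D}^p_\zeta(J, \mathbb{R}^m)$ by sending $(Y, Y')$ to $(Z, Z')$ where
\[
Z_t \coloneqq x + \int_0^t b(Y_s, \gamma_s)\,ds + \int_0^t \lambda(Y_s, \gamma_s)\,d\bm{\zeta}_s, \qquad Z'_t \coloneqq \lambda(Y_t, \gamma_t).
\]
The first task is to check that $\mathcal{M}$ is well-defined. Given $(Y, Y') \in \mathscr{D}^p_\zeta$, the chain rule for controlled rough paths combined with $\lambda \in C^3_b$ ensures that $\lambda(Y, \gamma)$ is itself a controlled rough path with Gubinelli derivative $\partial_x \lambda(Y, \gamma)\, Y'$, so the rough integral exists by \textbf{Theorem~\ref{theorem:rough_integration}}. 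The Lebesgue integral involving $b$ is unproblematic since $b \in \mathrm{Lip}_b$ and, by \textbf{Lemma~\ref{lemma:1var_has_finite_pvar}}, $t \mapsto \int_0^t b(Y_s, \gamma_s)\,ds$ has finite $p$-variation.

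Next I would exhibit an invariant ball on a short interval $[0, T_0]$. Combining the regularity bounds of \textbf{Proposition~\ref{prop:regularity_RDE}} with the remainder estimate \eqref{eq:integral_remainder} applied inside the iteration yields an inequality of the shape
\[
\|\mathcal{M}(Y, Y') - (x, \lambda(x, \gamma_0))\|_{\mathscr{D}^p_\zeta([0, T_0])} \le C\bigl(1 + \|(Y, Y')\|_{\mathscr{D}^p_\zeta([0, T_0])}\bigr)\, h(T_0),
\]
for some continuous $h$ with $h(0)=0$, where $C$ depends only on $M$, the coefficient bounds on $b,\lambda$, and $\pvar{\gamma}{\frac{p}{2}}{J}$. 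Choosing $T_0$ sufficiently small makes a large enough closed ball invariant under $\mathcal{M}$.

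The heart of the argument is the contraction estimate. For two elements $(Y_1, Y_1')$ and $(Y_2, Y_2')$ of this ball, one must show
\[
\|\mathcal{M}(Y_1, Y_1') - \mathcal{M}(Y_2, Y_2')\|_{\mathscr{D}^p_\zeta([0, T_0])} \le \kappa(T_0)\, \|(Y_1, Y_1') - (Y_2, Y_2')\|_{\mathscr{D}^p_\zeta([0, T_0])},
\]
with $\kappa(T_0) < 1$. This reduces to difference estimates for $\lambda(Y_1,\gamma) - \lambda(Y_2,\gamma)$ and the associated Gubinelli derivatives and remainders, bounded in terms of the $\mathscr{D}^p_\zeta$ distance between the inputs using the $C^3_b$ regularity of $\lambda$, the Lipschitz regularity of $b$, and \eqref{eq:integral_remainder} once again. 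This is the main technical obstacle: the $\mathscr{D}^p_\zeta$ norm couples four distinct quantities ($X_0$, $X'_0$, $\|X'\|_{p,J}$, $\|R^X\|_{\frac{p}{2},J}$), and each must be controlled separately while tracking how the nonlinearity $\lambda(\cdot,\gamma)$ interacts with the rough driver $\bm{\zeta}$ and the irregular control $\gamma$. Once the contraction is established, Banach's fixed point theorem produces a unique local solution on $[0, T_0]$.

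Finally, to extend to the full interval $J$, I would observe that the horizon $T_0$ chosen above depends only on $M$, $\pvar{\gamma}{\frac{p}{2}}{J}$, and the coefficient bounds, not on the initial datum. Hence the local construction can be iterated on a finite sequence of subintervals $[kT_0, (k+1)T_0]$, using the terminal value at each stage as the initial condition for the next. Concatenating these local solutions yields a unique $(X, X') \in \mathscr{D}^p_\zeta(J, \mathbb{R}^m)$ satisfying the RDE with $X_0 = x$, completing the proof.
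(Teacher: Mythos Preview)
The paper does not give its own proof of this theorem: immediately before the statement it writes ``We conclude this section with two key theorems, whose proofs can be found in \cite{allan2019pathwise}. Although the proofs are lengthy, they are conceptually straightforward, relying on standard arguments and the results established above.'' So there is nothing to compare against beyond this remark.

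Your proposal is exactly the ``standard argument'' the paper alludes to: a Picard map on $\mathscr{D}^p_\zeta$, local contraction via the integral remainder estimate \eqref{eq:integral_remainder}, and global extension by concatenation. This is the correct strategy and is what one finds in the cited reference. One small caveat: you invoke \textbf{Proposition~\ref{prop:regularity_RDE}} inside the invariant-ball step, but that proposition is stated for $X$ \emph{already solving} the RDE with $X'=\lambda(X,\gamma)$; in the Picard iteration $(Y,Y')$ is an arbitrary controlled path, so items~(\ref{est3}) and~(\ref{est4}) are not yet available and cannot be used without circularity. What you actually need there are the composition estimates (items~(\ref{est1}) and~(\ref{est2})), whose proofs in the appendix use only Taylor expansion and the $C^2_b$ bounds on $\psi,\lambda$---not the RDE itself---and hence remain valid for a generic $(Y,Y')\in\mathscr{D}^p_\zeta$. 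With that adjustment your outline is sound.
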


\begin{theorem}
\label{theorem:RDETheorem2}
Suppose \(b \in \mathrm{Lip}_b\), \(\lambda \in C^3_b\), and that two rough paths \(\bm{\zeta}, \bm{\eta} \in \mathscr{C}^p\) satisfy \(\roughholder{\zeta}{1}{p}{} , \roughholder{\eta}{1}{p}{} \le M\) for some \(M > 0\). If \(\gamma, \vartheta \in \pvarSpaceSimple{\frac{p}{2}}\) and the following inclusions
\((X, X') = (X, \lambda(X, \gamma)) \in \mathscr{D}^p_{\zeta}\), 
\((Y, Y') = (Y, \lambda(Y, \vartheta)) \in \mathscr{D}^p_{\eta}\) hold,
then
\[
    \pvar{X' - Y'}{p}{J} \lesssim |x - y| + \|\gamma - \vartheta\|_{\infty, J} + \pvar{\gamma - \vartheta}{p}{J} + \roughMetricPvar{\zeta}{\eta}{p}{J}.
\]
Moreover, if \(\psi \in C^3_b\), then
\begin{equation}
\label{eq:RDEDistance} 
\begin{split}
\left\|\int_0^\cdot \psi(X_s, \gamma_s)\, d\bm{\zeta}_s - \int_0^\cdot \psi(Y_s, \vartheta_s)\, d\bm{\eta}_s \right\|_{p, J} 
& \lesssim |x - y| + \|\gamma - \vartheta\|_{\infty, J} + \pvar{\gamma - \vartheta}{p}{J} \\ 
& + \roughMetricPvar{\zeta}{\eta}{p}{J}. \\
\end{split}
\end{equation}
\end{theorem}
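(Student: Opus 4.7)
The plan is to reduce both estimates to a coupled system of $p$-variation inequalities for $X-Y$, $X'-Y'$, and the corresponding controlled-path remainders, and then to close the system by a Grönwall-type patching argument analogous to the one used in the proof of \textbf{Theorem~\ref{theorem:RDETheorem1}}.

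First, I would exploit the Gubinelli-derivative relations $X' = \lambda(X,\gamma)$ and $Y' = \lambda(Y,\vartheta)$. Because $\lambda \in C^3_b$, a first-order Taylor expansion together with the mean value theorem yields a pointwise bound of the form
\[
\bigl|(X'_t - Y'_t) - (X'_s - Y'_s)\bigr| \lesssim \bigl(|X_s - Y_s| + |\gamma_s - \vartheta_s|\bigr)\bigl(|X_{s,t}| + |\gamma_{s,t}|\bigr) + |X_{s,t} - Y_{s,t}| + |\gamma_{s,t} - \vartheta_{s,t}|,
\]
and taking $p$-variation on both sides --- using \textbf{Lemma~\ref{lemma:inequality3}} to separate the mixed increments and \textbf{Proposition~\ref{prop:regularity_RDE}} to absorb $\|X\|_{p,J}$ and $\|\gamma\|_{\frac{p}{2},J}$ into constants depending on $M$ --- gives an intermediate bound for $\|X'-Y'\|_{p,J}$ in the required variables, with $\|X-Y\|_{p,J}$ and $\|X-Y\|_{\infty,J}$ still appearing on the right. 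To remove those terms I would insert the two RDEs into the identity
\[
X_t - Y_t = (x-y) + \int_0^t \bigl(b(X_s,\gamma_s) - b(Y_s,\vartheta_s)\bigr) ds + \int_0^t \lambda(X_s,\gamma_s) d\bm{\zeta}_s - \int_0^t \lambda(Y_s,\vartheta_s) d\bm{\eta}_s,
\]
control the drift integral by the Lipschitz property of $b$, and handle the rough-integral difference by applying the local remainder bound \eqref{eq:integral_remainder} of \textbf{Theorem~\ref{theorem:rough_integration}} to $Z = \lambda(X,\gamma)$ and $W = \lambda(Y,\vartheta)$, whose Gubinelli derivatives are $\partial_x\lambda(X,\gamma)\lambda(X,\gamma)$ and $\partial_x\lambda(Y,\vartheta)\lambda(Y,\vartheta)$; \textbf{Proposition~\ref{prop:regularity_RDE}} then tames the derivative and remainder norms appearing in that estimate.

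The main obstacle is the self-referential loop produced at the rough-integral step, in which $\|X-Y\|_{p,J}$ reappears on the right-hand side multiplied by $|||\bm{\zeta}|||_{p,J}$, and the non-subadditivity of $p$-variation rules out a direct application of Grönwall's inequality. I would resolve this by choosing a finite partition $0 = \tau_0 < \cdots < \tau_N = T$ such that $|||\bm{\zeta}|||_{p,[\tau_i,\tau_{i+1}]}$ is small enough on every subinterval to make the coefficient of the self-referential term strictly less than one. This is possible because $t \mapsto |||\bm{\zeta}|||_{p,[0,t]}$ is continuous and the a priori Hölder bound $M$ controls how many subintervals are required, so $N = N(M,p,T)$. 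On each $[\tau_i,\tau_{i+1}]$ the self-referential term can then be absorbed into the left-hand side, and \textbf{Lemma~\ref{lemma:inequality1}} and \textbf{Lemma~\ref{lemma:inequality2}} let me reassemble the local bounds into a global one with an $N$-dependent constant. Finally, to obtain \eqref{eq:RDEDistance}, I would apply the same rough-integral stability estimate directly to $\psi(X,\gamma)$ and $\psi(Y,\vartheta)$: the smoothness $\psi \in C^3_b$ makes the composition estimates from \textbf{Proposition~\ref{prop:regularity_RDE}} transfer verbatim, and the first part of the theorem supplies the bound on $\|X'-Y'\|_{p,J}$ needed to control the Gubinelli-derivative term that appears.
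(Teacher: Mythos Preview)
The paper does not actually supply a proof of this theorem: it states that the proofs of \textbf{Theorems~\ref{theorem:RDETheorem1}} and \textbf{\ref{theorem:RDETheorem2}} ``can be found in \cite{allan2019pathwise}'' and remarks only that they are ``lengthy \dots\ conceptually straightforward, relying on standard arguments and the results established above.'' There is therefore no in-paper argument to compare against beyond this description.

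Your outline is consistent with those ``standard arguments'': Taylor-expand the Gubinelli derivatives using $\lambda\in C^3_b$, feed the RDE difference through the remainder bound \eqref{eq:integral_remainder}, and close the self-referential $p$-variation estimate by a small-interval patching argument with \textbf{Lemma~\ref{lemma:inequality2}}. One point you pass over quickly deserves a flag: $X$ is controlled by $\bm{\zeta}$ while $Y$ is controlled by $\bm{\eta}$, so $X-Y$ is not a priori a controlled path with respect to either reference, and the rough-integral difference $\int\lambda(X,\gamma)\,d\bm{\zeta} - \int\lambda(Y,\vartheta)\,d\bm{\eta}$ cannot be handled by a single application of \eqref{eq:integral_remainder}. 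In the cited source this is dealt with by splitting the difference as $\int(\lambda(X,\gamma)-\lambda(Y,\vartheta))\,d\bm{\zeta} + \int\lambda(Y,\vartheta)\,d(\bm{\zeta}-\bm{\eta})$ (or by working with a joint lift), which is where $\roughMetricPvar{\zeta}{\eta}{p}{J}$ genuinely enters. Your sketch is correct in spirit but would need this decomposition made explicit to go through.
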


\subsection{Signatures}
\label{sec:signatures}

We now turn to the notion of \emph{path signatures}, which generalizes the concept of a path's \emph{lift}. Recall that the second-level lift of a continuously differentiable path \(X_t\) is defined by
\[
    \mathbb{X}_{s,t}^{(2)} \coloneqq \int_s^t (X_u - X_s) \otimes dX_u.
\]
If higher-order derivatives exist for both \(f\) and \(X_t\), we can extend the idea from \textbf{Example~\ref{example:mechanics_of_rough_integration}} to incorporate the additional structure revealed by higher-order iterated integrals.

Suppose that \(f\) and \(X_t\) in \textbf{Example~\ref{example:mechanics_of_rough_integration}} are at least \(C^k\). Then, by Taylor’s theorem, we can approximate \(f(X_t)\) as
\[
    f(X_t) \approx f(X_s) + \sum_{l=1}^k \frac{1}{l!} d^l f(X_s)\, (X_t - X_s)^{\otimes l},
\]
where \(d^l f(X_s)\) denotes the \(l\)th differential of \(f\) evaluated at \(X_s\). Consequently, the integral of \(f(X_t)\) against \(X_t\) may be approximated by
\begin{equation*}
\begin{split}
    \int_0^T f(X_t)\, dX_t & \approx \sum_{t_i, t_{i+1} \in \mathcal{D}} \left[ \sum_{l=0}^k \int_{t_i}^{t_{i+1}} \left(\frac{1}{l!} d^l f(X_{t_i})\, (X_t - X_{t_i})^{\otimes l} \right) dX_t \right] \\
    & = \sum_{t_i, t_{i+1} \in \mathcal{D}} \left[ \sum_{l=0}^k f^{(l)}(X_{t_i}) \int_{t_i < u_1 < \cdots < u_l < t_{i+1}} dX_{u_1} \otimes \cdots \otimes dX_{u_l} \right] \\
    & = \sum_{t_i, t_{i+1} \in \mathcal{D}} \left[ \sum_{l=0}^k f^{(l)}(X_{t_i})\, \mathbb{X}_{t_i, t_{i+1}}^{(l)} \right].
\end{split}
\end{equation*}
Here, \(\mathbb{X}_{s,t}^{(l)} \in (\mathbb{R}^n)^{\otimes l}\) represents the \(l\)th-level iterated integral of \(X_t\), encoding higher-order information that refines the approximation of the integral. The collection
\[
    \mathbb{X}_{s,t}^{\le k} \coloneqq \left(1, X_{s,t}, \mathbb{X}_{s,t}^{(2)}, \dots, \mathbb{X}_{s,t}^{(k)} \right)
\]
is called the \emph{\(k\)-truncated signature} of the path \(X_t\), while the full sequence
\[
    \mathbb{X}_{s,t}^{< \infty} \coloneqq \left(1, X_{s,t}, \mathbb{X}_{s,t}^{(2)}, \dots \right)
\]
is referred to as the \emph{signature} of the path.

Throughout this section, let \(V\) denote a finite-dimensional vector space over \(\mathbb{R}\), and fix \(J = [0, T]\). The theory developed here will later be applied in the context of optimal stopping with rough paths. For a comprehensive introduction to signatures, we refer the reader to \cite{lyons}.

Before introducing the formal definitions, we state a basic but useful estimate concerning the size of iterated integrals.

\begin{proposition}
    If \(X \in \pvarSpace{1}{J}{V}\), then for all \(k \in \mathbb{N}_+\),
    \begin{equation*}
        \left| \int_{0 < u_1 < \cdots < u_k < T} dX_{u_1} \otimes \cdots \otimes dX_{u_k} \right| \le \frac{\pvar{X}{1}{J}^k}{k!}.
    \end{equation*}
\end{proposition}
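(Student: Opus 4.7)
The plan is to prove the estimate by induction on $k$, using the total variation function $V(t) \coloneqq \pvar{X}{1}{[0,t]}$, which is monotonically increasing on $J$ with $V(0) = 0$ and $V(T) = \pvar{X}{1}{J}$. The key property is that for every $0 \le s \le u \le T$, the increment satisfies $|X_{s,u}| \le V(u) - V(s)$, so $V$ serves as a scalar ``controller'' for the vector-valued integrator $X$. Because $X$ has finite $1$-variation, all iterated integrals appearing below exist as Riemann--Stieltjes integrals in the standard sense.

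Denote
\[
I_k(s, t) \coloneqq \int_{s < u_1 < \cdots < u_k < t} dX_{u_1} \otimes \cdots \otimes dX_{u_k},
\]
and observe the recursive identity $I_k(s, t) = \int_s^t I_{k-1}(s, u) \otimes dX_u$, which follows directly by separating the innermost variable in the iterated integral. For the base case $k = 1$, $|I_1(s,t)| = |X_{s,t}| \le V(t) - V(s)$. For the inductive step, approximate the outer integral by Riemann--Stieltjes sums of the form $\sum_j I_{k-1}(s, u_j) \otimes X_{u_j, u_{j+1}}$; the triangle inequality together with the sub-multiplicativity of the Euclidean tensor norm, $|a \otimes b| \le |a|\,|b|$, yields
\[
\Bigl| \sum_j I_{k-1}(s, u_j) \otimes X_{u_j, u_{j+1}} \Bigr| \le \sum_j |I_{k-1}(s, u_j)| \, (V(u_{j+1}) - V(u_j)),
\]
and passing to the limit under mesh refinement gives $|I_k(s, t)| \le \int_s^t |I_{k-1}(s, u)| \, dV_u$.

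Applying the inductive hypothesis $|I_{k-1}(s,u)| \le (V(u) - V(s))^{k-1}/(k-1)!$ then produces
\[
|I_k(s, t)| \le \int_s^t \frac{(V(u) - V(s))^{k-1}}{(k-1)!} \, dV_u = \frac{(V(t) - V(s))^k}{k!},
\]
where the final equality follows by the substitution $r = V(u) - V(s)$ or, equivalently, by recognizing the integral as the Lebesgue measure of the $k$-simplex $\{0 < r_1 < \cdots < r_k < V(t) - V(s)\}$. Setting $s = 0$ and $t = T$ and using $V(T) = \pvar{X}{1}{J}$ yields the desired bound. The main obstacle is the rigorous derivation of the control inequality $|I_k(s,t)| \le \int_s^t |I_{k-1}(s,u)| \, dV_u$: one must verify that the Riemann--Stieltjes approximations converge to the corresponding integrals on both sides. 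This is standard, since the inductive bound already supplies continuity (indeed Hölder regularity) of $u \mapsto |I_{k-1}(s,u)|$, while $V$ is of bounded variation -- so the pointwise estimate on sums carries through the limit cleanly.
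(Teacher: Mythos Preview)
Your proof is correct, and it takes a genuinely different route from the paper's argument. The paper assumes (tacitly) that $X$ is absolutely continuous, writes $dX_u = \dot X_u\,du$, and then bounds
\[
\left|\int_{0<u_1<\cdots<u_k<T}\dot X_{u_1}\otimes\cdots\otimes\dot X_{u_k}\,du_1\cdots du_k\right|
\le \int_{0<u_1<\cdots<u_k<T}|\dot X_{u_1}|\cdots|\dot X_{u_k}|\,du_1\cdots du_k
= \frac{1}{k!}\Bigl(\int_0^T|\dot X_u|\,du\Bigr)^k,
\]
using the symmetry of the cube $[0,T]^k$ to identify the simplex integral as $1/k!$ times the product. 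Your induction via the total-variation controller $V(t)=\pvar{X}{1}{[0,t]}$ is the more robust argument: it covers every continuous path of finite $1$-variation without requiring a density $\dot X$, and it makes explicit the mechanism by which the factorial appears (iterated integration of $(V(u)-V(s))^{k-1}$ against $dV$), whereas the paper's symmetry shortcut needs the Lebesgue-density representation to get off the ground. The paper's version is shorter when $X$ is smooth; yours is the one that actually matches the stated hypothesis $X\in\pvarSpace{1}{J}{V}$.
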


\begin{proof}
\normalsize
    The result follows by observing that \(|X_{0,u_i}| \le \pvar{X}{1}{J}\), and expanding the integrand via
    \begin{equation*}
        \left| \int_{0 < u_1 < \cdots < u_k < T} dX_{u_1} \otimes \cdots \otimes dX_{u_k} \right| = \left| \int_{0 < u_1 < \cdots < u_k < T} \dot{X}_{u_1} \otimes \cdots \otimes \dot{X}_{u_k} \, du_1 \cdots du_k \right|.
    \end{equation*}
\end{proof}

\subsubsection{Tensor Algebras}
\label{subsec:tensor_algebras}
We adopt the convention \(V^{\otimes 0} = \mathbb{R}\), and now define the extended tensor algebra.

\begin{definition}[Extended Tensor Algebra]
\label{def:extendedTensorAlgebra}
    The \emph{extended tensor algebra} over a finite-dimensional vector space \(V\) is defined as
    \begin{equation}
    \label{tensorSeries}
        T((V)) \coloneqq \left\{ \bm{a} = (a_0, a_1, a_2, \dots) : a_n \in V^{\otimes n},\ n \in \mathbb{N} \right\},
    \end{equation}
    with addition given by
    \begin{equation}
    \label{tensorSeriesAddition}
        \bm{a} + \bm{b} \coloneqq (a_0 + b_0, a_1 + b_1, a_2 + b_2, \dots),
    \end{equation}
    and multiplication (tensor product) defined by
    \begin{equation}
    \label{tensorSeriesProduct}
        \bm{a} \otimes \bm{b} \coloneqq (c_0, c_1, c_2, \dots),
    \end{equation}
    where each component \(c_n\) is given by
    \begin{equation}
    \label{cn}
        c_n \coloneqq \sum_{k=0}^n a_k \otimes b_{n-k}.
    \end{equation}

    For simplicity, we write \(\bm{a} \otimes \bm{b}\) as \(\bm{a}\bm{b}\), and \(\bm{a}^{\otimes n}\) as \(\bm{a}^n\). Scalar multiplication by \(\lambda \in \mathbb{R}\) is defined component-wise:
    \[
        \lambda \bm{a} \coloneqq (\lambda a_0, \lambda a_1, \lambda a_2, \dots).
    \]
    The unit element in \(T((V))\) is defined by \(\bm{1} \coloneqq (1, 0, 0, \dots)\).
\end{definition}

Suppose now that \(\bm{a} \in T((V))\) and \(a_0 \neq 0\). Then the (multiplicative) inverse of \(\bm{a}\) is given by the geometric series
\begin{equation}
\label{tensorSeriesInverse}
    \bm{a}^{-1} = \frac{1}{a_0} \sum_{n=0}^\infty \left( \bm{1} - \frac{\bm{a}}{a_0} \right)^n,
\end{equation}
where \(a_0 \in \mathbb{R} \setminus \{0\}\).

In the context of \emph{optimal stopping with signatures}, we will work with the \emph{truncated tensor algebra}:
\begin{equation}
\label{truncatedTensorAlgebra}
    T^N(V) \coloneqq \bigoplus_{n=0}^N V^{\otimes n},
\end{equation}
as practical computations are performed in this finite-dimensional space.

We also define the canonical projection maps:
\begin{equation}
\label{pin}
    \pi_n \colon T((V)) \to V^{\otimes n},\quad \bm{a} \mapsto \pi_n(\bm{a}) = a_n,
\end{equation}
and the truncated projection
\begin{equation}
\label{piN}
    \pi_{\le N} \colon T((V)) \to T^N(V),\quad \bm{a} \mapsto \pi_{\le N}(\bm{a}) = (a_0, \dots, a_N).
\end{equation}

\subsubsection{Norms}
On \(V\), we use the \(\ell^\infty\) norm. That is, if \(e_1, \dots, e_n\) denote the standard basis of \(V\), and \(v \in V\) is given by
\[
v = \lambda_1 e_1 + \cdots + \lambda_n e_n,
\]
then \(|v| \coloneqq \max_{1 \le i \le n} |\lambda_i|\). On the dual space \(V^*\), we use the \(\ell^1\) norm, defined by \[|v^*| = |\lambda_1| + \cdots + |\lambda_n|.\]

We also adopt the corresponding norms on the tensor powers of \(V\) and \(V^*\). Specifically, for \(\bm{a} \in T((V))\), we define
\[
|\bm{a}| \coloneqq \sup_{i \in \mathbb{N}} |a_i|,
\]
and for \(\bm{b} \in \bigoplus_{n=0}^\infty T(V^*)\), we define
\[
|\bm{b}| \coloneqq \sum_{i=0}^{\infty} |b_i|.
\]

\subsubsection{Shuffles}
Let \(e_{i_1}^*, \dots, e_{i_n}^*\) be standard basis vectors in \(T(V^*)\). We denote the tensor product \(e_{i_1}^* \otimes \cdots \otimes e_{i_n}^*\) by the word \(i_1 \cdots i_n\).

Let \(\mathcal{W}(\mathcal{A}_d)\) denote the linear span of all finite words formed from the alphabet \(\mathcal{A}_d \coloneqq \{1, \dots, d\}\). The empty word is denoted by \(\emptyset\), and scalar multiplication by elements of \(\mathbb{R}\) is defined in the usual way. For words \(l_1, l_2 \in \mathcal{W}(\mathcal{A}_d)\), we define addition as concatenation:
\[
l_1 + l_2 \coloneqq l_1l_2 \in \mathcal{W}(\mathcal{A}_d).
\]

The degree of a word is defined by
\begin{equation}
\label{wordDegree}
\deg(w) \coloneqq
\begin{dcases}
    n, & \text{if } w = i_1 \cdots i_n, \\
    \max_{1 \le j \le n} \deg(w_j), & \text{if } w = \lambda_1 w_1 + \cdots + \lambda_n w_n.
\end{dcases}
\end{equation}

We now define the \emph{shuffle product} on \(\mathcal{W}(\mathcal{A}_d)\) recursively by:
\[
w \shuffle \emptyset \coloneqq \emptyset \shuffle w \coloneqq w,
\]
\[
wi \shuffle vj \coloneqq (w \shuffle vj)i + (wi \shuffle v)j.
\]
This product is then extended bilinearly to all of \(\mathcal{W}(\mathcal{A}_d) \times \mathcal{W}(\mathcal{A}_d)\). The shuffle product \(\shuffle\) is associative, commutative, and distributive over addition.

\begin{example}[Shuffle Product]
A simple example of the shuffle product is
\begin{equation*}
\begin{split}
    12 \shuffle 3 &= (1 \shuffle 3)2 + (12 \shuffle \emptyset)3 \\
    &= (1 \shuffle 3)2 + 123 \\
    &= ((\emptyset \shuffle 3)1 + (1 \shuffle \emptyset)3)2 + 123 \\
    &= 312 + 132 + 123,
\end{split}
\end{equation*}
since \(\emptyset w = w = w \emptyset\).
\end{example}

Given a polynomial \(P(x) = \lambda_0 + \lambda_1 x + \cdots + \lambda_n x^n\) in a commuting variable \(x\), we define the associated \emph{shuffle polynomial} by
\[
P^{\shuffle}(l) \coloneqq \lambda_0 \emptyset + \lambda_1 l + \cdots + \lambda_n l^{\shuffle n}.
\]

Lastly, define the pairing
\begin{equation}
\label{tensorPairing}
\inner{\cdot}{\cdot} \colon T(V^*) \times T((V)) \to \mathbb{R}, \quad (v^*, w) \mapsto \inner{v^*}{w} = v^*(w),
\end{equation}
and set
\begin{equation}
\label{G}
G(V) \coloneqq \left\{ \bm{a} \in T((V)) \setminus \{ \bm{0} \} : \inner{l_1 \shuffle l_2}{\bm{a}} = \inner{l_1}{\bm{a}} \inner{l_2}{\bm{a}},\ \forall\ l_1, l_2 \in T(V^*) \right\}.
\end{equation}
The set \(G(V)\) is called the set of \emph{group-like elements} of \(T((V))\).

\subsubsection{Rough Path Signatures}
\label{subsec:rough_path_signatures}
A \emph{weakly geometric} $p$\emph{-rough path} is a continuous path $\bb{X} \colon [0, T] \to G^{\lfloor p \rfloor} \subset T^{\lfloor p \rfloor}(V)$ with $\bb{X}_0 = 1$ and finite $p$-variation, defined more generally by
\[
|\bb{X}|_{p\text{-var};[0,T]} \coloneqq \max_{k=1,\dots,\lfloor p \rfloor} \sup_{\mathcal{D} \subset [0,T]} \left( \sum_{t_i \in \mathcal{D}}\left|\pi_k(\bb{X}_{t_i,t_{i+1}})\right|^{\frac{p}{k}} \right)^{\frac{k}{p}}.
\]
We denote the space of weakly geometric $p$-rough paths by $\mathcal{W}\Omega^p_T$, equipped with the metric
\[
d_{p\text{-var};[0,T]}(\bb{X},\bb{Y}) \coloneqq |\bb{X} - \bb{Y}|_{p\text{-var};[0,T]}.
\]

Given a continuous path $X \colon [0, T] \to V$ of finite $p$-variation, its \emph{signature} is defined as
\[
\bb{X}^{<\infty} \colon [0, T] \to T((V)),
\]
the sequence of its iterated integrals. That is,
\[
\pi_k(\bb{X}^{<\infty}) \coloneqq \int_{0 < u_1 < \cdots < u_k = t} dX_{u_1} \otimes \cdots \otimes dX_{u_k}
\]
for each $k \in \mathbb{N}$, with $\pi_0(\bb{X}^{<\infty}) \coloneqq 1$.

The space of \emph{geometric} $p$\emph{-rough paths}, denoted $\Omega_T^p$, consists of all paths $\bb{X} \in \mathcal{W}\Omega_T^p$ that can be approximated in $\lfloor p \rfloor$-variation by a sequence of piecewise smooth paths.

Finally, we define $\Hat{\Omega}^p_T$ as the closure (in $p$-variation) of the canonical lifts
\[
\Hat{\bb{X}}^{\le \lfloor p \rfloor} \coloneqq \pi_{\le \lfloor p \rfloor}(\Hat{\bb{X}}^{<\infty}),
\]
where $X$ ranges over piecewise smooth $\mathbb{R}^d$-valued paths and $\Hat{X}_t \coloneqq (t, X_t)$.

\section{Pathwise Optimal Control}
\label{OC}

Before we explore optimal control from a pathwise perspective, it is instructive to consider a striking phenomenon observed by several authors, including Allan \cite{allan2019pathwise} and Diehl, Friz, and Gassiat \cite{DiehlFrizGassiat}: under certain conditions, the value function can diverge to infinity. A particularly illustrative case arises in the context of a trader seeking to maximize profit without sufficient constraints, which can result in unbounded growth of the value function. This example highlights the critical importance of imposing appropriate constraints in optimal control problems -- constraints that are essential for ensuring meaningful and stable solutions in real-world applications. Understanding such degeneracies is a necessary step toward developing robust methods in the pathwise optimal control framework.

\begin{example}[Degeneracy]
\label{example:degeneracy}
Consider an agent attempting to maximize profit by trading a stock. Suppose that instead of relying on discrete-time price predictions of the form
\[
S_{t_{i+1}} = f(\varphi_{t_i})S_{t_i} + \varepsilon_{t_i},
\]
the agent has access to a continuous-time prediction $\zeta_t$, so that the price $S_t$ evolves according to
\[
dS_t = d\zeta_t + \sigma\, dB_t,
\]
where $f$ is a predictive model based on features $\varphi_{t_i}$, $\varepsilon_{t_i}$ is additive noise, $B_t$ is a standard Brownian motion representing continuous-time noise, and $\zeta_t$ is a time-dependent signal capturing the agent's prediction.

Assume the following conditions:
\begin{enumerate}
    \item $S_0 = \zeta_0 = s$;
    \item \label{ass:infinite_1var} $\zeta_t$ has infinite 1-variation over the interval $J = [0, T]$;
    \item The agent may hold up to $Q \in \mathbb{R}_+$ units of stock, either long or short;
    \item $q_t \in [-Q, Q]$ denotes the inventory position at time $t \in [0, T]$;
    \item The agent's wealth evolves as
    \[
    dX_u^{t,x,q} = q_u\, dS_u = q_u\left(d\zeta_u^{t,s} + \sigma\, dB_u\right);
    \]
    \item The value function is defined as
    \[
    v(t,x) = \sup_q \mathbb{E}\left[X_T^{t,x,q}\right].
    \]
\end{enumerate}

To gain insight, approximate the signal $\zeta_t$ by a smooth path $\eta_t$. Then the agent's wealth satisfies
\[
dX_u^{t,x,q} = q_u\left((\eta^{t,s}_u)^\prime\, du + \sigma\, dB_u\right).
\]
In this setting, the Hamilton-Jacobi-Bellman (HJB) equation takes the classical form:
\[
0 = \frac{\partial v}{\partial t} + \sup_q \left[ q\, \eta\, \frac{\partial v}{\partial x} + \frac{1}{2} \sigma^2 q^2\, \frac{\partial^2 v}{\partial x^2} \right].
\]
Assuming a value function of the form $v(t,x) = x + u(t)$ and substituting into the HJB equation yields
\[
0 = \frac{du}{dt} + \sup_q qu.
\]
This implies that the optimal control is
\[
q^*_t \coloneqq Q\, \text{sgn}(\eta^\prime_t),
\]
and the optimal value function becomes
\begin{equation*}
\begin{split}
v^*(t,x) & = x + Q \int_t^T \text{sgn}(\eta_u^\prime)\, \eta_u^\prime\, du \\
        & = x + Q \int_t^T |\eta_u^\prime|\, du \\
        & = x + Q \cdot \text{(path length of $\eta$ over $[t, T]$)}.
\end{split}
\end{equation*}

As $\eta$ becomes a better approximation of $\zeta$, the path length of $\eta$ tends to infinity by \textbf{Assumption \ref{ass:infinite_1var}}, implying the agent can theoretically achieve infinite profit. This clearly contradicts practical reality and reveals a fundamental issue: the absence of appropriate regularization or cost terms in the control problem can result in ill-posed behavior.

\end{example}

\subsection{The Set-Up}
\label{subsec:the_set-up}

To emphasize that we are working with $p \in [2, 3)$, we adopt distinct notation for the relevant rough path spaces. We define the space of geometric rough paths in this regime. Let $\mathscr{C}^{0, p}_g \subset \mathscr{C}^p$ denote the closure, with respect to the $\varrho_{\frac{1}{p}\text{-Höl}}$ topology, of the canonical lifts of smooth paths. This closure is well-defined by the Stone–Weierstrass theorem.

Fix a geometric rough path $\bm{\zeta} \in \mathscr{C}_g^{0, p}(J, \mathbb{R}^d)$. In this section, we study the following optimal control problem:
\begin{equation}
\label{ocProblem1}
    v(t, x) \coloneqq \inf_{\gamma \in \pvarSpaceSimple{p/2}} J(t, x, \gamma),
\end{equation}
where the cost functional is defined by
\begin{equation}
\label{costFunctional}
    J(t, x, \gamma) = \int_t^T f(X_s^{t, x, \gamma}, \gamma_s) \, ds + \int_t^T \psi(X_s^{t, x, \gamma}, \gamma_s) \, d\bm{\zeta}_s + g(X_T^{t, x, \gamma}),
\end{equation}
and $X_s^{t, x, \gamma}$ solves the RDE \eqref{RDEequation} with initial condition $X_t^{t, x, \gamma} = x$. We refer to $v(t, x)$ and $J(t, x, \gamma)$ as the \emph{value function} and \emph{cost functional}, respectively. It is assumed that the functions satisfy
\[
f \colon \mathbb{R}^m \times \mathbb{R}^k \to \mathbb{R}, \quad \psi \colon \mathbb{R}^m \times \mathbb{R}^k \to \mathcal{L}(\mathbb{R}^d, \mathbb{R}), \quad \text{and} \quad g \colon \mathbb{R}^m \to \mathbb{R}.
\]

\begin{lemma}
\label{ineq5}
Suppose the assumptions of \textbf{Theorem~\ref{theorem:RDETheorem1}} hold. Then
\begin{equation}
\label{throwaway1}
    \left| \int_t^T \psi(X_s^{t, x, \gamma}, \gamma_s) \, d\bm{\zeta}_s \right| \lesssim 1 + \pvar{\gamma}{\frac{p}{2}}{[t, T]}^{2(1 + p)}.
\end{equation}
\end{lemma}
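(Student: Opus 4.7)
The plan is to write the rough integral on $[t,T]$ as its local approximation plus the rough integral remainder from \textbf{Theorem~\ref{theorem:rough_integration}}, and then bound each piece using the regularity results of \textbf{Proposition~\ref{prop:regularity_RDE}}. Concretely, since $X$ solves the RDE with Gubinelli derivative $X' = \lambda(X,\gamma)$ and $\psi \in C^2_b$, the composition $(\psi(X,\gamma), \psi(X,\gamma)')$ with $\psi(X,\gamma)' = \partial_x\psi(X,\gamma)\,\lambda(X,\gamma)$ is a controlled rough path, so \textbf{Theorem~\ref{theorem:rough_integration}} yields
\[
\left|\int_t^T \psi(X_s,\gamma_s)\,d\bm{\zeta}_s\right| \le |\psi(X_t,\gamma_t)|\,|\zeta_{t,T}| + |\psi(X,\gamma)'_t|\,|\zeta^{(2)}_{t,T}| + C\bigl(\|R^{\psi}\|_{\frac{p}{2},[t,T]}\|\zeta\|_{p,[t,T]} + \|\psi(X,\gamma)'\|_{p,[t,T]}\|\zeta^{(2)}\|_{\frac{p}{2},[t,T]}\bigr).
\]

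First I would dispatch the two local terms. Because $\psi, \partial_x\psi, \lambda$ are bounded, both $|\psi(X_t,\gamma_t)|$ and $|\psi(X,\gamma)'_t| = |\partial_x\psi(X_t,\gamma_t)\lambda(X_t,\gamma_t)|$ are uniformly bounded by a constant depending only on $\|\psi\|_{C^2_b}$ and $\|\lambda\|_{C^2_b}$. The increments $|\zeta_{t,T}|$ and $|\zeta^{(2)}_{t,T}|$ are bounded by $M\,T^{1/p}$ and $M\,T^{2/p}$ respectively via the standing assumption $\roughholder{\zeta}{1}{p}{J} \le M$ on the fixed rough path; similarly $\|\zeta\|_{p,[t,T]}$ and $\|\zeta^{(2)}\|_{\frac{p}{2},[t,T]}$ are absorbed into the constant. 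Hence these terms contribute only an $O(1)$ summand.

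The substantive step is to control the two controlled-rough-path norms using \textbf{Proposition~\ref{prop:regularity_RDE}}. Estimate \ref{est1} gives $\pvar{\psi(X,\gamma)'}{p}{[t,T]} \lesssim \pvar{X}{p}{[t,T]} + \pvar{\gamma}{p/2}{[t,T]}$, and combining with estimate \ref{est3} produces $\pvar{\psi(X,\gamma)'}{p}{[t,T]} \lesssim 1 + \pvar{\gamma}{p/2}{[t,T]}^{1+p}$. For $R^\psi$, estimate \ref{est2} yields $\pvar{R^\psi}{p/2}{[t,T]} \lesssim \pvar{X}{p}{[t,T]}^2 + \pvar{R^X}{p/2}{[t,T]} + \pvar{\gamma}{p/2}{[t,T]}$; substituting estimates \ref{est3} and \ref{est4} and using $(1+a)^2 \lesssim 1+a^2$ gives $\pvar{R^\psi}{p/2}{[t,T]} \lesssim 1 + \pvar{\gamma}{p/2}{[t,T]}^{2(1+p)} + \pvar{\gamma}{p/2}{[t,T]}^{2+p}$, and since $2+p \le 2(1+p)$ the middle exponent dominates.

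I expect the only subtle point to be bookkeeping of the exponents: the $2(1+p)$ arises precisely from squaring the $(1+p)$-exponent in estimate \ref{est3} via the $\pvar{X}{p}{[t,T]}^2$ term in \ref{est2}. All other exponents ($1+p$ from $\|\psi(X,\gamma)'\|_p$, and $2+p$ from $\pvar{R^X}{p/2}{[t,T]}$) are no larger than $2(1+p)$ and so are absorbed. Combining the local and remainder bounds then yields
\[
\left|\int_t^T \psi(X_s^{t,x,\gamma},\gamma_s)\,d\bm{\zeta}_s\right| \lesssim 1 + \pvar{\gamma}{\frac{p}{2}}{[t,T]}^{2(1+p)},
\]
as required. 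The main (mild) obstacle is simply ensuring that the constants in the $\lesssim$ relations depend only on $M$, $T$, $p$, and the $C^2_b$-norms of $b, \psi, \lambda$, not on $\gamma$ itself, which follows from inspecting the proof of \textbf{Proposition~\ref{prop:regularity_RDE}} in the appendix.
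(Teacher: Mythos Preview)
Your proposal is correct and follows essentially the same approach as the paper's proof: both apply the remainder estimate from \textbf{Theorem~\ref{theorem:rough_integration}}, bound the two local terms $|\psi(x,\gamma_t)\,\zeta_{t,T}|$ and $|\psi(x,\gamma_t)'\,\zeta^{(2)}_{t,T}|$ by constants using $\psi,\lambda\in C^2_b$ and the Hölder bound on $\bm{\zeta}$, and then control $\|R^\psi\|_{p/2}$ and $\|\psi(X,\gamma)'\|_p$ via estimates \ref{est1}--\ref{est4} of \textbf{Proposition~\ref{prop:regularity_RDE}}, with the dominant exponent $2(1+p)$ coming from squaring the bound $\|X\|_p \lesssim 1 + \|\gamma\|_{p/2}^{1+p}$. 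Your exponent bookkeeping is, if anything, slightly more explicit than the paper's.
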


\begin{proof}
\normalsize
By \textbf{Theorem~\ref{theorem:rough_integration}} and \textbf{Proposition~\ref{prop:regularity_RDE}}, we have
\begin{equation*}
\begin{split}
    \left| \int_t^T \psi(X_s^{t, x, \gamma}, \gamma_s) \, d\bm{\zeta}_s \right| 
    & \lesssim \pvar{R^\psi}{\frac{p}{2}}{[t, T]} \pvar{\zeta}{p}{[t, T]} + \pvar{\psi^\prime}{p}{[t, T]} \pvar{\lift{\zeta}{}}{\frac{p}{2}}{[t, T]} \\
    & \quad + |\psi(x, \gamma_t) \, \zeta_{t, T}| + |\psi(x, \gamma_t)^\prime \, \lift{\zeta}{t,T}| \\
    & \lesssim \left( \pvar{X^{t, x, \gamma}}{p}{[t, T]}^2 + \pvar{R^X}{\frac{p}{2}}{[t, T]} + \pvar{\gamma}{\frac{p}{2}}{[t, T]} \right) \pvar{\zeta}{p}{[t, T]} \\
    & \quad + \left( \pvar{X^{t,x}}{p}{[t, T]} + \pvar{\gamma}{\frac{p}{2}}{[t, T]} \right) \pvar{\lift{\zeta}{}}{\frac{p}{2}}{[t, T]} \\
    & \quad + |\psi(x, \gamma_t) \, \zeta_{t, T}| + |\psi(x, \gamma_t)^\prime \, \lift{\zeta}{t,T}|.
\end{split}
\end{equation*}
The final two terms are bounded by a constant. Moreover, we have
\begin{equation}
    \pvar{R^X}{\frac{p}{2}}{[t, T]} \lesssim 1 + \pvar{\gamma}{\frac{p}{2}}{[t, T]}^{2 + p}
\end{equation}
and
\begin{equation}
\label{throwaway2}
    \pvar{X}{p}{[t, T]} \lesssim 1 + \pvar{\gamma}{\frac{p}{2}}{[t, T]}^{1 + p}.
\end{equation}
Applying \textbf{Lemma~\ref{lemma:inequality1}} to the square of the right-hand side of \eqref{throwaway2} yields the result \eqref{throwaway1}.
\end{proof}

This lemma provides an upper bound on the magnitude of the rough integral \[ \int_t^T \psi(X_s^{t, x, \gamma}, \gamma_s) \, d\bm{\zeta}_s \] in terms of the control $\gamma \in \pvarSpaceSimple{\frac{p}{2}}$. We note that the bound is not intended to be sharp.

Having established that the value function may diverge under minimal constraints, as shown in the previous example, we are naturally led to consider how such degeneracies can be addressed. A key observation is that the cost functional must penalize excessively irregular control paths to ensure well-posedness. This motivates the introduction of a \emph{regularizing cost}, which imposes structure on the admissible control space and prevents pathological behavior.

\begin{definition}[Regularizing Cost]
    Let $\mathcal{S} \subseteq \pvarSpaceSimple{\frac{p}{2}}(J, \mathbb{R}^k)$ be a Banach space. A \emph{regularizing cost} on $\mathcal{S}$ is a function
    \begin{equation}
    \label{regularizingCost}
        \beta \colon \Delta_J \times \pvarSpaceSimple{\frac{p}{2}}(J, \mathbb{R}^k) \to [0, +\infty]
    \end{equation}
    satisfying the following conditions:
    \begin{enumerate}
        \item For fixed $(r, t) \in \Delta_J$, the map $\gamma \mapsto \beta_{r,t}(\gamma)$ is continuous.
        \item $\beta_{r,t}(\gamma) = +\infty$ for all $\gamma \notin \mathcal{S}$.
        \item The growth condition holds:
        \[
        \frac{\beta_{r,t}(\gamma)}{\pvar{\gamma}{\frac{p}{2}}{[r,t]}^{2(1+p)}} \to +\infty \quad \text{as} \quad \pvar{\gamma}{\frac{p}{2}}{[r,t]} \to +\infty.
        \]
    \end{enumerate}
\end{definition}

In practice, the form of the regularizing cost depends on the specific phenomenon being modeled. For example, in the setting of \emph{robust stochastic filtering}, we will see that the regularizing cost naturally takes the form of a negative log-likelihood function.

Now, let $\pvarSpaceSimple{0,p}$ denote the closure of smooth paths in $\pvarSpaceSimple{p}$ with respect to the $|| \cdot ||_p$-seminorm. We now redefine the value function by incorporating the regularizing cost:
\begin{equation}
\label{valueWithRegCost}
    V(t,x) \coloneqq \inf_{\gamma \in \pvarSpaceSimple{0,\frac{p}{2}}} \left\{ J(t,x,\gamma) + \beta_{t,T}(\gamma) \right\}.
\end{equation}

With this adjustment, we can show that the value function remains bounded from below under appropriate conditions.

\begin{proposition}
\label{prop:boundedValue}
    Suppose that the functions $f$ and $g$ in \eqref{costFunctional} are bounded below. Then the value function $V(t, x)$ defined in \eqref{valueWithRegCost} is bounded below.
\end{proposition}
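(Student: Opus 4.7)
The plan is to show that each term in $J(t,x,\gamma) + \beta_{t,T}(\gamma)$ is either bounded below outright, or else dominated by the regularizing cost. First I would handle the two ``easy'' terms: since $f$ is bounded below by some constant $c_f$, we immediately get $\int_t^T f(X_s^{t,x,\gamma},\gamma_s)\,ds \geq c_f(T-t) \geq -|c_f|\,T$; similarly $g(X_T^{t,x,\gamma}) \geq c_g$ for some constant $c_g$. These bounds are uniform in $\gamma$ and in the driving path $\bm{\zeta}$, so they pose no obstacle.

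The nontrivial term is the rough integral $\int_t^T \psi(X_s^{t,x,\gamma},\gamma_s)\,d\bm{\zeta}_s$, which is not bounded below in general but, crucially, is controlled in absolute value. Here I would invoke \textbf{Lemma~\ref{ineq5}} to obtain
\[
\int_t^T \psi(X_s^{t,x,\gamma},\gamma_s)\,d\bm{\zeta}_s \;\geq\; -\,C\!\left(1 + \pvar{\gamma}{\frac{p}{2}}{[t,T]}^{2(1+p)}\right),
\]
for a constant $C > 0$ depending on $\bm{\zeta}$, $\psi$, $b$, $\lambda$, $x$, and $T$, but not on $\gamma$. This is exactly where the exponent $2(1+p)$ that appears in the definition of a regularizing cost comes from.

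Next I would combine this with the growth hypothesis on $\beta$. Writing $N(\gamma) \coloneqq \pvar{\gamma}{\frac{p}{2}}{[t,T]}^{2(1+p)}$, I need to show that the function
\[
\gamma \;\longmapsto\; \beta_{t,T}(\gamma) - C\,N(\gamma)
\]
is bounded below on $\pvarSpace{0,\frac{p}{2}}{J}{\mathbb{R}^k}$. By the growth condition, there exists $M > 0$ such that $\beta_{t,T}(\gamma) \geq 2C\,N(\gamma)$ whenever $\pvar{\gamma}{\frac{p}{2}}{[t,T]} > M$, in which case $\beta_{t,T}(\gamma) - C\,N(\gamma) \geq C\,N(\gamma) \geq 0$. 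On the complementary set $\{\pvar{\gamma}{\frac{p}{2}}{[t,T]} \leq M\}$, the quantity $N(\gamma)$ is uniformly bounded by $M^{2(1+p)}$, and since $\beta$ takes values in $[0,+\infty]$ we simply bound $\beta_{t,T}(\gamma) - C\,N(\gamma) \geq -C\,M^{2(1+p)}$.

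Putting everything together yields
\[
J(t,x,\gamma) + \beta_{t,T}(\gamma) \;\geq\; c_f(T-t) + c_g - C - C\,M^{2(1+p)},
\]
which is a finite constant independent of $\gamma$, so $V(t,x)$ is bounded below. The main obstacle is conceptual rather than technical: one must recognize that the exponent $2(1+p)$ chosen in the definition of a regularizing cost was engineered precisely to dominate the polynomial bound on the rough integral coming from \textbf{Lemma~\ref{ineq5}}, and so the proof is essentially a bookkeeping exercise matching these two growth rates.
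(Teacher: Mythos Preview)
Your proposal is correct and follows essentially the same route as the paper: bound the rough integral via \textbf{Lemma~\ref{ineq5}}, then use the growth condition on $\beta$ to absorb the resulting polynomial in $\pvar{\gamma}{\frac{p}{2}}{[t,T]}$, leaving only the $f$- and $g$-terms which are bounded below by assumption. If anything, your explicit case split (large vs.\ small $\frac{p}{2}$-variation) spells out what the paper compresses into a single $\lesssim$ step.
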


\begin{proof}
\normalsize
By \textbf{Propositions~\ref{prop:regularity_RDE}} and \textbf{\ref{ineq5}}, we have the estimate
\[
\left| \int_t^T \psi\, d\bm{\zeta}_s \right| \lesssim 1 + \pvar{\gamma}{\frac{p}{2}}{[t, T]}^{2(1+p)} \lesssim 1 + \frac{\beta_{t,T}(\gamma)}{2},
\]
where the second inequality follows from the growth condition in \textbf{Definition~\ref{regularizingCost}}. Hence,
\[
J(t,x,\gamma) + \beta_{t,T}(\gamma) \ge \int_t^T f\, ds + g(X_T^{t,x,\gamma}) + \frac{\beta_{t,T}(\gamma)}{2} - C,
\]
for some constant $C > 0$. Since $f$ and $g$ are bounded below, the right-hand side is bounded below, which proves the claim.
\end{proof}

\subsection{Dynamic Programming Principle}
\label{subsec:dymanic_programming_principle}

In general, regularizing costs are not assumed to be additive; that is, one does not require $\beta_{s,t} = \beta_{s,u} + \beta_{u,t}$ for all $s < u < t$. However, to establish the \emph{dynamic programming principle} (DPP), it is necessary to identify a class of costs for which such additivity holds. Once this structure is in place, the DPP provides a natural path to deriving the infinitesimal form of the optimal control problem -- namely, the \emph{Hamilton–Jacobi–Bellman} (HJB) equation.

Before turning to the DPP itself, we first justify that the infimum in the value function may be taken over a suitably regular subclass of admissible controls.

\begin{lemma}
\label{valueWithRegCostLemma}
Suppose $\mathcal{S} \subseteq \pvarSpaceSimple{0,\frac{p}{2}}$ contains all smooth functions from $J$ to $\mathbb{R}^k$. Then
\begin{equation}
\label{valueWithRegCostLemmaEquation}
\begin{split}
    V(t,x) & = \inf_{\gamma \in \pvarSpaceSimple{0,\frac{p}{2}}} \left\{ J(t,x,\gamma) + \beta_{t,T}(\gamma) \right\} \\
           & = \inf_{\gamma \in \mathcal{S}} \left\{ J(t,x,\gamma) + \beta_{t,T}(\gamma) \right\}.
\end{split}
\end{equation}
\end{lemma}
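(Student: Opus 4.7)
The plan is to prove equality of the two infima by the standard two-step inclusion argument, with the non-trivial direction handled through density of smooth paths combined with continuity of the functional in the $\frac{p}{2}$-variation topology. The easy inequality $V(t,x) \le \inf_{\gamma \in \mathcal{S}}\{J(t,x,\gamma) + \beta_{t,T}(\gamma)\}$ is immediate, since $\mathcal{S} \subseteq \pvarSpaceSimple{0,\frac{p}{2}}$ means every admissible control for the right-hand infimum is also admissible for the left-hand infimum.

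For the reverse inequality, I would fix an arbitrary $\gamma \in \pvarSpaceSimple{0,\frac{p}{2}}$ with finite total cost (otherwise there is nothing to prove) and exploit the defining property of $\pvarSpaceSimple{0,\frac{p}{2}}$ as the closure of smooth paths under the $\| \cdot \|_{\frac{p}{2}, J}$-seminorm. This yields a sequence of smooth paths $\gamma^n$, which by the assumption on $\mathcal{S}$ lie in $\mathcal{S}$, and which may be chosen with $\gamma^n_t = \gamma_t$ so as to guarantee uniform convergence on $J$ in addition to $\frac{p}{2}$-variation convergence. It then suffices to show
\[
    J(t,x,\gamma^n) + \beta_{t,T}(\gamma^n) \longrightarrow J(t,x,\gamma) + \beta_{t,T}(\gamma),
\]
since this forces $\inf_{\gamma' \in \mathcal{S}}\{J(t,x,\gamma') + \beta_{t,T}(\gamma')\} \le J(t,x,\gamma) + \beta_{t,T}(\gamma)$, and taking the infimum over $\gamma$ gives the required bound.

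The three terms of $J$ are handled as follows. Continuity of $\beta_{t,T}$ is built into \textbf{Definition~\ref{regularizingCost}}. For the two integrals and the terminal cost I would invoke \textbf{Theorem~\ref{theorem:RDETheorem2}} with $\bm{\zeta} = \bm{\eta}$, $x = y$, and $(\gamma, \vartheta) = (\gamma, \gamma^n)$: the estimate \eqref{eq:RDEDistance} bounds the difference of the rough integrals by $\|\gamma - \gamma^n\|_{\infty, J} + \pvar{\gamma - \gamma^n}{p}{J}$. By \textbf{Lemma~\ref{lemma:variation_inclusion}}, $\frac{p}{2}$-variation convergence implies $p$-variation convergence, and both the uniform and the $p$-variation pieces vanish in the limit. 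The companion estimate on $X' - Y'$ (together with the Gubinelli derivative structure) gives uniform convergence of the state trajectories $X^{t,x,\gamma^n} \to X^{t,x,\gamma}$, whence continuity of $g$ handles the terminal cost and continuity of $f$ combined with dominated convergence handles $\int_t^T f(X^{t,x,\gamma^n}_s, \gamma^n_s)\, ds$ (the required uniform-in-$n$ boundedness of $X^{t,x,\gamma^n}$ on $[t,T]$ follows from \textbf{Proposition~\ref{prop:regularity_RDE}} applied to the uniformly bounded sequence $\gamma^n$).

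The main obstacle is reconciling the various topologies: the ambient space $\pvarSpaceSimple{0,\frac{p}{2}}$ is defined via a seminorm that controls only increments, whereas the estimate \eqref{eq:RDEDistance} also needs $\|\gamma - \gamma^n\|_{\infty, J}$ and $\pvar{\gamma - \gamma^n}{p}{J}$. Matching initial conditions (so that the $L^\infty$ difference is controlled by increments) and invoking \textbf{Lemma~\ref{lemma:variation_inclusion}} to step from $\frac{p}{2}$- to $p$-variation is the delicate bookkeeping that makes the continuity argument go through; once this is in place, the remaining pieces are immediate applications of results already established.
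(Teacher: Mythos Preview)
Your proposal is correct and follows essentially the same approach as the paper: approximate an arbitrary $\gamma \in \pvarSpaceSimple{0,\frac{p}{2}}$ by smooth paths $\gamma^n \in \mathcal{S}$, then pass to the limit in the cost using \textbf{Theorem~\ref{theorem:RDETheorem2}} and the continuity of $\beta$. If anything, your version is more careful than the paper's, which asserts $\|\gamma^n - \gamma\|_\infty \to 0$ first and then claims $\pvar{\gamma^n - \gamma}{\frac{p}{2}}{J} \to 0$ as a consequence (the implication really runs the other way, as you correctly note), and which leaves the treatment of the $f$, $g$, and $\psi$ terms implicit.
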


\begin{proof}
\normalsize
Fix $\gamma \in \pvarSpaceSimple{0,\frac{p}{2}}$. By assumption, there exists a sequence $(\gamma^n)_{n=1}^\infty \subset \mathcal{S}$ such that \[ \|\gamma^n - \gamma\|_\infty \to 0 \] as $n \to \infty$. Then $\pvar{\gamma^n - \gamma}{\frac{p}{2}}{J} \to 0$ as $n \to \infty$, and continuity of the rough integral and cost functional (see \textbf{Theorem~\ref{theorem:RDETheorem2}}) ensures that the corresponding values $J(t,x,\gamma^n) + \beta_{t,T}(\gamma^n)$ converge to $J(t,x,\gamma) + \beta_{t,T}(\gamma)$. Hence, the infimum over $\pvarSpaceSimple{0,\frac{p}{2}}$ is the same as the infimum over $\mathcal{S}$.
\end{proof}

\begin{remark}
To retain the dynamic programming principle, it is sufficient to work with a class of additive regularizing costs. One such example arises when the control path is induced by an $L^q$-integrable function. Suppose $u \in W^{1,q}$, and define the control path by
\[
\gamma_s^{t,a,u} \coloneqq a + \int_t^s u_y \, dy,
\]
where $W^{1,q}$ is the Sobolev space of functions with one weak derivative in $L^q$. Then the regularizing cost
\[
\beta_{s,r}(\gamma^{t,a,u}) \coloneqq \varepsilon \int_s^r |u_y|^q \, dy
\]
is clearly additive in time by the properties of integration. We will see that regularizing costs of this form are sufficient for our analysis.
\end{remark}

\begin{definition}[Value Function]
\label{FinalVal}
Suppose $u_s \in \mathcal{W}^{1,q}$ for $s \in J$, and define the control path $\gamma^{t,a,u}$ by $d\gamma_s^{t,a,u} = u_s \, ds$ with initial condition $\gamma_t^{t,a,u} = a$. Then the value function is given by
\begin{equation}
\label{finalValueFunction}
    v(t,x,a) \coloneqq \inf_{u \in L^q} \left\{ J(t,x,\gamma^{t,a,u}) + \varepsilon \int_t^T |u_s|^q \, ds \right\}.
\end{equation}
\end{definition}

\noindent
The regularizing cost in the definition above ensures additivity and thereby enables the application of the dynamic programming principle. The proof proceeds in a manner analogous to the classical case; see, for example, \cite{yong}.
\begin{theorem}[Dynamic Programming Principle]
\label{DPP}
Let $X_s^{t,x,a,u} \coloneqq X_s^{t,x,\gamma^{t,a,u}}$. Suppose \[ 1 \leq q < +\infty \] and $r \in [t, T]$. Then the value function satisfies
\begin{equation}
\label{DPPequation}
\begin{split}
    v(t,x,a) = \inf_{u \in L^q} \Bigg\{ & v\big(r, X_r^{t,x,a,u}, \gamma_r^{t,a,u}\big) + \int_t^r f\big(X_s^{t,x,a,u}, \gamma_s^{t,a,u}\big) \, ds \\
    & + \int_t^r \psi\big(X_s^{t,x,a,u}, \gamma_s^{t,a,u}\big) \, d\bm{\zeta}_s + \varepsilon \int_t^r |u_s|^q \, ds \Bigg\}.
\end{split}
\end{equation}
\end{theorem}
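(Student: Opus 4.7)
The plan is to prove \eqref{DPPequation} through the classical two-sided inequality argument for dynamic programming, adapted to accommodate the rough integral term. Write $\mathcal{V}(t,x,a;r)$ for the right-hand side and aim to establish $v(t,x,a) \le \mathcal{V}$ and $v(t,x,a) \ge \mathcal{V}$ separately. The argument rests on three structural properties: additivity of each cost term in \eqref{finalValueFunction} when $[t,T]$ is split at $r$; the flow property of the RDE \eqref{RDEequation} -- namely that $X^{t,x,a,u}|_{[r,T]}$ coincides with the solution restarted at $(r, X_r^{t,x,a,u}, \gamma_r^{t,a,u})$ driven by $u|_{[r,T]}$, which follows from uniqueness in \textbf{Theorem~\ref{theorem:RDETheorem1}}; and the stability of the admissible class $L^q([t,T])$ under concatenation at an interior point.

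For $v(t,x,a) \le \mathcal{V}$, I would fix an arbitrary $u \in L^q([t,T])$ and $\delta > 0$, and choose $\tilde{u}^\delta \in L^q([r,T])$ that is $\delta$-optimal for the problem defining $v(r, X_r^{t,x,a,u}, \gamma_r^{t,a,u})$. Concatenating yields $u^\delta \in L^q([t,T])$ equal to $u$ on $[t,r)$ and to $\tilde{u}^\delta$ on $[r,T]$; the induced path $\gamma^{t,a,u^\delta}$ is continuous at $r$, and by the flow property the associated state equals $X^{t,x,a,u}$ on $[t,r]$ and the restart solution on $[r,T]$. Splitting every integral at $r$ and applying $\delta$-optimality of $\tilde u^\delta$ on the tail then gives
\[
    v(t,x,a) \;\le\; J\bigl(t,x,\gamma^{t,a,u^\delta}\bigr) + \varepsilon\int_t^T |u^\delta_s|^q\,ds \;\le\; \mathcal{V}(t,x,a;r) + \delta,
\]
after which sending $\delta \downarrow 0$ and taking an infimum over $u$ completes this direction.

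For the reverse inequality, given any $u \in L^q([t,T])$ set $\tilde u \coloneqq u|_{[r,T]}$. The flow property identifies $X^{t,x,a,u}|_{[r,T]}$ with the restart solution at $(r, X_r^{t,x,a,u}, \gamma_r^{t,a,u})$ driven by $\tilde u$, so splitting each integral and the regularizer at $r$ and bounding the tail cost below by $v(r, X_r^{t,x,a,u}, \gamma_r^{t,a,u})$ yields
\[
    J\bigl(t,x,\gamma^{t,a,u}\bigr) + \varepsilon\int_t^T |u_s|^q\,ds \;\ge\; \int_t^r f\,ds + \int_t^r \psi\,d\bm{\zeta}_s + \varepsilon\int_t^r |u_s|^q\,ds + v\bigl(r, X_r^{t,x,a,u}, \gamma_r^{t,a,u}\bigr).
\]
The right-hand side is precisely an element of the set whose infimum defines $\mathcal{V}(t,x,a;r)$, so taking the infimum over $u$ delivers $v(t,x,a) \ge \mathcal{V}(t,x,a;r)$.

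The principal technical obstacle is the additivity of the rough integral $\int_t^T \psi(X_s, \gamma_s)\,d\bm{\zeta}_s$ at the splitting time $r$. Unlike the Lebesgue integrals in the drift, running cost, and regularizer, this is not automatic from \eqref{eq:rough_integral_equation}: one must verify that the Gubinelli-controlled pair built from the concatenated control coincides on each subinterval with the pair built separately from $u$ and $\tilde u^\delta$. I would handle this by restricting to partitions in \eqref{eq:rough_integral_equation} containing $r$ as a node -- which is permissible since the rough integral is partition-independent -- so that the compensated Riemann sums split at $r$, the state and Gubinelli-derivative trajectories agree piecewise by the flow property, and Chen's relation \eqref{ChensRelation} ensures $\zeta^{(2)}$ splits correctly across the join. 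The stability estimate \eqref{eq:RDEDistance} of \textbf{Theorem~\ref{theorem:RDETheorem2}} together with \textbf{Lemma~\ref{valueWithRegCostLemma}} provides a safety net: if smooth approximation of concatenated controls is required to legitimize the splitting, stability lets us pass to the limit in each cost term simultaneously.
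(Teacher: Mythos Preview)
Your proposal is correct and follows precisely the route the paper indicates: the paper does not supply a proof of \textbf{Theorem~\ref{DPP}} but states only that ``the proof proceeds in a manner analogous to the classical case; see, for example, \cite{yong},'' and your two-sided inequality via concatenation and restriction of controls is exactly that classical argument. If anything, you have gone further than the paper by explicitly isolating the one genuinely rough-path-specific issue --- additivity of $\int \psi\,d\bm{\zeta}$ at the splitting time $r$ --- and sketching a resolution via partition-independence and the flow property, which the paper leaves implicit.
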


\subsection{Generalized Control Problem}
\label{sec:GOC}

We now generalize the value function defined in (\ref{finalValueFunction}). Observe that the regularization cost can be absorbed into the integral involving the function \(f\). Fixing a Banach space \((U, \|\cdot\|_U)\), we reformulate the optimal control problem as follows:
\begin{equation}
\label{reformualedOC}
\begin{split}
    J(t,x,a,u) \coloneqq & \int_t^T f(X^{t,x,a,u}_s, \gamma^{t,a,u}_s) \, ds \\
    & + \int_t^T \psi(X^{t,x,a,u}_s, \gamma^{t,a,u}_s) \, d\bm{\zeta}_s + g(X^{t,x,a,u}_T, \gamma^{t,a,u}_T),
\end{split}
\end{equation}
where the controlled processes evolve according to
\begin{equation}
\label{process_txau}
    dX_s^{t,x,a,u} = b(X_s^{t,x,a,u}, \gamma_s^{t,a,u}) \, ds + \lambda(X_s^{t,x,a,u}, \gamma_s^{t,a,u}) \, d\bm{\zeta}_s,
\end{equation}
\begin{equation}
    d\gamma_s^{t,a,u} = h(\gamma_s^{t,a,u}, u_s) \, ds,
\end{equation}
\begin{equation}
    u \colon J \to U,
\end{equation}
and the value function is given by
\begin{equation}
\label{reformulatedVF}
    v(t,x,a) \coloneqq \inf_{u \in L^{\infty}} J(t,x,a,u).
\end{equation}

The following assumptions, lemma, and corollary will be used in proving the main results in the next section.

\begin{assumption}
\label{assumption}
We impose the following conditions:
\begin{enumerate}
    \item \(b \in \text{Lip}_b\) and \(\lambda, \psi \in C^3_b\),
    
    \item \(f(x, a, u)\) and \(g(x, a)\) are continuous, bounded below, and Lipschitz continuous in \((x,a)\); moreover, \(f\) is uniformly continuous in \(u\),
    
    \item \(h(a, u)\) is continuous, Lipschitz in \(a\), and uniformly continuous in \(u\); it is also bounded in \(a\), locally uniformly in \(u\), and satisfies, for some \(\delta \ge 1\),
    \[
        \sup_{a \in \mathbb{R}^k} \frac{|h(a,u)|}{\|u\|_U^\delta} \to 0 \quad \text{as } \|u\|_U \to +\infty,
    \]
    
    \item with the same \(\delta\) as above,
    \[
        \inf_{x \in \mathbb{R}^m,\, a \in \mathbb{R}^k} \frac{|f(x,a,u)|}{\|u\|_U^{2\delta(1+p)}} \to +\infty \quad \text{as } \|u\|_U \to +\infty.
    \]
\end{enumerate}
\end{assumption}

\begin{lemma}
\label{corollaryLemma}
    Assume \textbf{Assumption~\ref{assumption}} holds. Then there exists a constant \( 0 < C < +\infty \) such that
    \begin{equation}
        \left| \int_t^T \psi(X_s^{t,x,a,u}, \gamma_s^{t,a,u}) \, d\bm{\zeta}_s \right| \le C + \frac{1}{2} \int_t^T f(X_s^{t,x,a,u}, \gamma_s^{t,a,u}, u_s) \, ds.
    \end{equation}
\end{lemma}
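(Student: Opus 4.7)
The plan is to reduce everything to \textbf{Lemma~\ref{ineq5}}, which already furnishes the bound $|\int_t^T\psi\,d\bm\zeta_s|\lesssim 1+\pvar{\gamma}{p/2}{[t,T]}^{2(1+p)}$, and then to use the ODE $d\gamma^{t,a,u}_s=h(\gamma^{t,a,u}_s,u_s)\,ds$ together with the growth conditions in \textbf{Assumption~\ref{assumption}} parts~(3) and~(4) to dominate the $\gamma$-term by a small multiple of $\int_t^T f\,ds$ plus a constant.

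First I would invoke \textbf{Lemma~\ref{ineq5}} with $\gamma=\gamma^{t,a,u}$ (its hypotheses are met since \textbf{Assumption~\ref{assumption}} supplies $b\in\mathrm{Lip}_b$, $\lambda,\psi\in C^3_b$, and $\bm\zeta\in\mathscr{C}^{0,p}_g\subset\mathscr{C}^p$), obtaining a finite constant $C_1$ independent of $u$ with
\[
\left|\int_t^T\psi(X_s^{t,x,a,u},\gamma_s^{t,a,u})\,d\bm\zeta_s\right|\le C_1\bigl(1+\pvar{\gamma^{t,a,u}}{p/2}{[t,T]}^{2(1+p)}\bigr).
\]
Since $p/2\ge 1$, the computation in the proof of \textbf{Lemma~\ref{lemma:variation_inclusion}} gives $\pvar{\gamma^{t,a,u}}{p/2}{[t,T]}\le\pvar{\gamma^{t,a,u}}{1}{[t,T]}$, and the triangle inequality applied to $\gamma^{t,a,u}_{s_i,s_{i+1}}=\int_{s_i}^{s_{i+1}}h(\gamma^{t,a,u}_r,u_r)\,dr$ yields $\pvar{\gamma^{t,a,u}}{1}{[t,T]}\le\int_t^T|h(\gamma^{t,a,u}_s,u_s)|\,ds$.

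Next I would convert the two growth conditions of \textbf{Assumption~\ref{assumption}} into pointwise estimates. Part~(3) provides, for each fixed $\epsilon>0$, a threshold $K_\epsilon$ with $|h(a,u)|\le\epsilon\|u\|_U^\delta$ whenever $\|u\|_U>K_\epsilon$; combined with the uniform-in-$a$ local bound on $h$ for $\|u\|_U\le K_\epsilon$, this gives $|h(a,u)|\le C_h(\epsilon)+\epsilon\|u\|_U^\delta$ pointwise. Applying \textbf{Lemma~\ref{lemma:inequality1}} and then Jensen's inequality with exponent $2(1+p)$ in $s$ yields
\[
\pvar{\gamma^{t,a,u}}{p/2}{[t,T]}^{2(1+p)}\le C_3(\epsilon)+C_4(\epsilon)\int_t^T\|u_s\|_U^{2\delta(1+p)}\,ds.
\]
Part~(4) similarly produces, for each $M>0$, a threshold $K_M$ with $\|u\|_U^{2\delta(1+p)}\le f(x,a,u)/M$ for $\|u\|_U>K_M$; combining this with the lower bound $f\ge -C_f$ from part~(2) gives the pointwise inequality $\|u\|_U^{2\delta(1+p)}\le C_5(M)+(f(x,a,u)+C_f)/M$, uniformly in $(x,a)$.

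Chaining the above yields
\[
\left|\int_t^T\psi\,d\bm\zeta_s\right|\le C_6(\epsilon,M)+\frac{C_1C_4(\epsilon)}{M}\int_t^T f(X_s^{t,x,a,u},\gamma_s^{t,a,u},u_s)\,ds,
\]
after which I would fix $\epsilon=1$ (so that $C_4(1)$ becomes a definite number) and then choose $M$ large enough that $C_1C_4(1)/M\le 1/2$, giving the claim with $C\coloneqq C_6(1,M)$. The main obstacle is purely bookkeeping: each use of \textbf{Assumption~\ref{assumption}} introduces a free parameter with a companion constant that must be made quantitative from a limiting statement, and $\epsilon$ has to be chosen \emph{before} $M$ so that $C_4$ is not itself being shrunk by the same factor one is using to absorb the $\int f$ coefficient.
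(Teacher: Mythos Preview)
Your proof is correct and follows essentially the same route as the paper: invoke \textbf{Lemma~\ref{ineq5}}, pass from $p/2$-variation to $1$-variation via monotonicity, rewrite the $1$-variation as $\int_t^T|h(\gamma_s,u_s)|\,ds$, and then use the growth conditions in \textbf{Assumption~\ref{assumption}}(3)--(4) to dominate this by $\frac{1}{2}\int_t^T f\,ds$ plus a constant. The only cosmetic difference is that the paper applies H\"older's inequality in one step to reach $\int_t^T|h|^{2(1+p)}\,ds$ and then appeals to (3) and (4) jointly, whereas you unpack the two growth conditions separately with explicit parameters $\epsilon$ and $M$; your more detailed bookkeeping is in fact a welcome expansion of the paper's terse final line ``the claim then follows directly from \textbf{Assumption~\ref{assumption}}.''
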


\begin{proof}
\normalsize
    Recall that the function \( p \mapsto \pvar{\gamma}{p}{J} \) is non-increasing for \( 1 \le p < +\infty \). By \textbf{Lemma~\ref{ineq5}} and Hölder’s inequality, we obtain
    \begin{equation*}
    \begin{split}
        \left| \int_t^T \psi(X_s^{t,x,a,u}, \gamma_s^{t,a,u}) \, d\bm{\zeta}_s \right| 
        & \le C\left( 1 + \pvar{\gamma^{t,a,u}}{\frac{p}{2}}{[t,T]}^{2(1+p)} \right) \\
        & \le C\left( 1 + \pvar{\gamma^{t,a,u}}{1}{[t,T]}^{2(1+p)} \right) \\
        & = C\left( 1 + \left[ \int_t^T |h(\gamma_s^{t,a,u}, u_s)| \, ds \right]^{2(1+p)} \right) \\
        & \le C \left( 1 + T^{\frac{2(1+p)}{p'}} \int_t^T |h(\gamma_s^{t,a,u}, u_s)|^{2(1+p)} \, ds \right),
    \end{split}
    \end{equation*}
    where \( p' \) is the Hölder conjugate of \( 2(1+p) \) and we have used that \( T - t \le T \). The claim then follows directly from \textbf{Assumption~\ref{assumption}}.
\end{proof}

\begin{corLemma}
\label{corollary}
    Suppose \( K \subseteq \bb{R}^m \times \bb{R}^k \) is compact. Then there exists \( M > 0 \) such that for all \( (t,x,a) \in J \times K \), the admissible controls \( u \in U \) may be restricted to those satisfying
    \[
        \pvar{\gamma^{t,a,u}}{\frac{p}{2}}{J} \le M.
    \]
\end{corLemma}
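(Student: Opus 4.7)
The plan is to show that any $\varepsilon$-optimal control in the infimum defining $v(t,x,a)$ has $\frac{p}{2}$-variation bounded uniformly over $(t,x,a) \in J \times K$. The three main ingredients are (i) a uniform upper bound on $v(t,x,a)$ produced by a fixed reference control, (ii) the lower bound for $J$ from \textbf{Lemma~\ref{corollaryLemma}}, and (iii) the growth conditions in \textbf{Assumption~\ref{assumption}} coupling $\|u\|_U$, $|h|$, and $|f|$.

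First, I fix a reference control $u_0 \in U$. Since $|h(\cdot,u_0)|$ is bounded (local boundedness of $h$ in $u$, uniform in $a$), the driving path $\gamma^{t,a,u_0}$ has uniformly bounded $1$-variation. By \textbf{Proposition~\ref{prop:regularity_RDE}}, $X^{t,x,a,u_0}$ then stays in a bounded set uniformly in $(t,x,a) \in J \times K$. Continuity of $f$ and $g$, the estimate from \textbf{Lemma~\ref{ineq5}}, and compactness of $J \times K$ yield $M_0 \coloneqq \sup_{(t,x,a) \in J \times K} J(t,x,a,u_0) < +\infty$, hence $v(t,x,a) \le M_0$ on $J \times K$. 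Restricting attention to $\varepsilon$-optimal controls $u$, I may assume $J(t,x,a,u) \le M_0 + \varepsilon$. Combining with \textbf{Lemma~\ref{corollaryLemma}} and the lower bound of $g$ produces a constant $M_1$, independent of $(t,x,a) \in J \times K$, for which
\[
\int_t^T f(X_s^{t,x,a,u}, \gamma_s^{t,a,u}, u_s)\,ds \le M_1.
\]

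Next, I transfer this bound to $\|\gamma^{t,a,u}\|_{\frac{p}{2}, J} \le \|\gamma^{t,a,u}\|_{1, J} = \int_t^T |h(\gamma_s^{t,a,u}, u_s)|\,ds$. By \textbf{Assumption~\ref{assumption}}(3)--(4), I pick $R > 0$ so that $|h(a,u)| \le \|u\|_U^\delta$ and $f(x,a,u) \ge \|u\|_U^{2\delta(1+p)}$ whenever $\|u\|_U \ge R$. On the set $\{s : \|u_s\|_U \le R\}$, $|h|$ is bounded by a constant from local boundedness; on $\{s : \|u_s\|_U > R\}$, Hölder's inequality with exponent $2(1+p)$ controls $\int \|u_s\|_U^\delta\,ds$ by a power of $\int \|u_s\|_U^{2\delta(1+p)}\,ds$, which in turn is bounded via the lower bound $f \ge \|u\|_U^{2\delta(1+p)}$ on this set, the boundedness of $f$ from below on the complement, and the $M_1$-estimate. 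Combining the two regimes gives a uniform constant $M$ as required.

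The main obstacle is making $M_0$ genuinely uniform over $J \times K$, since \textbf{Assumption~\ref{assumption}} provides only pointwise continuity of $f$ and $g$. This is resolved by the continuous dependence of the RDE solution on $(t,x,a)$ (via \textbf{Theorem~\ref{theorem:RDETheorem2}}) together with compactness, which traps all reference trajectories in a common compact set and lets pointwise estimates be promoted to uniform ones; the remaining chaining of growth conditions in the final step is routine.
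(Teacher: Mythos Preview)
Your proposal is correct and follows essentially the same route as the paper: fix a reference control to bound the value function uniformly on $J\times K$, use \textbf{Lemma~\ref{corollaryLemma}} to turn this into a uniform bound on $\int_t^T f\,ds$, and then convert that into a bound on $\|\gamma^{t,a,u}\|_{\frac{p}{2},J}$ via the growth conditions in \textbf{Assumption~\ref{assumption}}. The paper's proof is terser---it simply asserts that the proof of \textbf{Lemma~\ref{corollaryLemma}} already contains the needed link between the cost and $\|\gamma\|_{\frac{p}{2}}$---whereas you spell out the splitting into $\{\|u_s\|_U\le R\}$ and $\{\|u_s\|_U>R\}$ and the H\"older step explicitly, and you take more care with the uniformity of $M_0$ over $J\times K$; but the underlying argument is the same.
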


\begin{proof}
\normalsize
    From \textbf{Lemma~\ref{corollaryLemma}}, we have the estimate
    \[
        J(t,x,a,u) \ge \frac{1}{2} \int_t^T f(X_s^{t,x,a,u}, \gamma_s^{t,a,u}, u_s) \, ds - \Tilde{C}
    \]
    for some constant \( \Tilde{C} > 0 \). Fix any control \( u^* \in U \). We may then discard any control \( u \) for which
    \[
        \frac{1}{2} \int_t^T f(X_s^{t,x,a,u}, \gamma_s^{t,a,u}, u_s) \, ds - \Tilde{C} \ge \sup_{(t^*,x^*,a^*)} J(t^*,x^*,a^*,u^*).
    \]
    In other words, controls yielding such large values can be excluded from consideration. As the proof of \textbf{Lemma~\ref{corollaryLemma}} provides an upper bound on \( \pvar{\gamma^{t,a,u}}{\frac{p}{2}}{J} \) in terms of the cost, the desired bound follows from \textbf{Assumption~\ref{assumption}} and the arbitrariness of \( u^* \).
\end{proof}

\subsection{Rough HJB Equation and Solutions}
\label{roughHJBsection}
We now derive the rough Hamilton-Jacobi-Bellman (HJB) equation, which governs the value function in a rough path optimal control problem. In classical control theory, the HJB equation arises as a dynamic programming principle applied to the value function. When the system is driven by a rough path, however, the formulation and analysis become more delicate. To handle the irregularity of the driving signal, we begin by approximating the rough path with a sequence of smooth paths. This allows us to use classical results to derive the HJB equation for the smooth setting and then pass to the limit, obtaining a well-defined equation in the rough path sense.

Fix a geometric rough path $\roughpath{\zeta} \in \mathscr{D}^p$, and consider a sequence of smooth paths $(\eta^n_t)_{n = 1}^\infty$ such that $\eta_t^n \to \zeta_t$ uniformly as $n \to +\infty$. We lift each $\eta^n_t$ to a rough path by defining
\begin{equation*}
    (\eta^n)^{(2)}_{s,t} = \int_s^t \eta^n_{s,u} \otimes d\eta^n_u,
\end{equation*}
and set the lifted path as $\bm{\eta}^n = (\eta^n, (\eta^n)^{(2)})$.

Now consider the controlled differential equation driven by the smooth path $\bm{\eta}^n$:
\begin{equation}
\label{smoothHJB}
    dX_s^{t,x,a,u,n} = b(X_s^{t,x,a,u,n}, \gamma_s^{t,a,u})ds + \lambda(X_s^{t,x,a,u,n}, \gamma_s^{t,a,u}) d\bm{\eta}_s^n,
\end{equation}
where the integral is understood in the Riemann-Stieltjes sense. For each $n$, we define the associated controlled rough path solution as
\begin{equation}
    \left( X^{t,x,a,u,n}, (X^{t,x,a,u,n})^\prime \right) = \left( X^{t,x,a,u,n}, \lambda(X^{t,x,a,u,n}, \gamma^{t,a,u}) \right).
\end{equation}
Then, as $n \to +\infty$, we recover the rough differential equation (\ref{process_txau}) in the sense of \textbf{Theorem~\ref{theorem:RDETheorem1}}.

The rough HJB equation is obtained by first solving the classical HJB equation corresponding to each smooth path $\eta^n$ using standard results (e.g., from \cite{bardi}) and then passing to the limit as $n \to +\infty$. We now state the result formally.

\begin{theorem}[Rough HJB Equation]
\label{roughHJB}
    Suppose the setup of equations \textnormal{(\ref{reformualedOC})--(\ref{reformulatedVF})} holds as given at the start of \textnormal{\S\ref{sec:GOC}}. Then the value function $v$ satisfies the rough HJB equation
    \begin{equation}
    \label{roughHJBequation}
        -dv - b \cdot \nabla_x v \, dt - \inf_{u \in U} \left\{ h \cdot \nabla_a v + f \right\} dt - \left( \lambda \cdot \nabla_x v + \psi \right) \, d\bm{\zeta} = 0,
    \end{equation}
    subject to the terminal condition
    \begin{equation}
    \label{roughHJBterminalCondition}
        v(T,x,a) = g(x,a),
    \end{equation}
    where $\nabla_y$ denotes the gradient with respect to the variable $y$.
\end{theorem}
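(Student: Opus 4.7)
The plan is to derive the rough HJB equation by smooth approximation: replace $\bm{\zeta}$ by a sequence of smooth paths $\bm{\eta}^n$ converging to $\bm{\zeta}$ in the rough path metric, establish a classical HJB equation for each, and then pass to the rough limit. This strategy is already sketched in the paragraphs preceding the theorem, and its feasibility rests on the RDE stability estimates in \textbf{Theorem~\ref{theorem:RDETheorem2}} combined with the dynamic programming principle \textbf{Theorem~\ref{DPP}}.

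First, since $\bm{\zeta} \in \mathscr{C}^{0,p}_g$ is a geometric rough path, I would select smooth paths $(\eta^n)_{n \ge 1}$ whose canonical lifts $\bm{\eta}^n$ satisfy $\roughMetricHolder{\zeta}{\eta^n}{p}{J} \to 0$. For each $n$, let $X^{t,x,a,u,n}$ be given by \eqref{smoothHJB}, let $J^n$ be the cost functional obtained by replacing the rough integral in \eqref{reformualedOC} with the Riemann--Stieltjes integral against $\eta^n$, and define $v^n(t,x,a) \coloneqq \inf_{u \in L^\infty} J^n(t,x,a,u)$. Because each driver is smooth, the classical control theory (e.g.\ \cite{bardi}) applies via \textbf{Theorem~\ref{DPP}}, yielding in the viscosity sense
\begin{equation*}
-\partial_t v^n - b \cdot \nabla_x v^n - \inf_{u \in U}\{h \cdot \nabla_a v^n + f\} - (\lambda \cdot \nabla_x v^n + \psi)\,\dot{\eta}^n_t = 0,
\end{equation*}
with terminal data $v^n(T,x,a) = g(x,a)$.

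Next, I would establish uniform-on-compacta convergence $v^n \to v$ by combining the continuity estimate \eqref{eq:RDEDistance} of \textbf{Theorem~\ref{theorem:RDETheorem2}} with \textbf{Corollary~\ref{corollary}}: the latter reduces the infimum to those $u$ for which $\pvar{\gamma^{t,a,u}}{\frac{p}{2}}{J}$ is uniformly bounded, and the former implies $J^n(t,x,a,u) \to J(t,x,a,u)$ uniformly in this restricted class. Upon passing to the limit in the smooth HJB equation, the drift and control-minimization terms converge by standard stability of viscosity solutions; the delicate term is the last. Here one uses that the pair $(\lambda \cdot \nabla_x v^n + \psi,\ \partial_x\lambda \cdot \nabla_x v^n \cdot \lambda + \cdots)$ is a controlled rough path with respect to $\bm{\eta}^n$, whose limit against $\bm{\zeta}$ is, by \textbf{Theorem~\ref{theorem:rough_integration}}, precisely the rough integral $(\lambda \cdot \nabla_x v + \psi)\,d\bm{\zeta}$ appearing in \eqref{roughHJBequation}.

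The principal obstacle is regularity: the value function $v$ is typically only a viscosity solution, so $\nabla_x v$ need not be continuous, let alone admit a canonical Gubinelli derivative. To close the argument one must therefore (i) interpret \eqref{roughHJBequation} in a suitable pathwise/viscosity-test-function formulation, in which the rough integral is tested against smooth $\varphi$ in place of $v$ itself, and (ii) verify that the controlled-rough-path structure of the integrand is preserved in the limit despite this loss of regularity. Once these technicalities are handled, the combination of uniform convergence of $v^n$ and the rough-integral convergence from \textbf{Theorem~\ref{theorem:rough_integration}} yields \eqref{roughHJBequation} together with the terminal condition \eqref{roughHJBterminalCondition}.
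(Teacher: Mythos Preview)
Your proposal is considerably more ambitious than the paper's own proof of this theorem. In the paper, Theorem~\ref{roughHJB} is essentially definitional: the entire proof consists of declaring that equation~\eqref{roughHJBequation} is to be \emph{interpreted} as the limiting form of the classical HJB equations
\[
-dv - b \cdot \nabla_x v \, dt - \inf_{u \in U}\{h \cdot \nabla_a v + f\}\,dt - (\lambda \cdot \nabla_x v + \psi)\,d\bm{\eta}^n = 0,
\]
with no further argument. The actual analytic content you attempt here --- the locally uniform convergence $v^n \to v$ via \textbf{Theorem~\ref{theorem:RDETheorem2}} and \textbf{Corollary~\ref{corollary}} --- is deferred in the paper to the subsequent \textbf{Definition~\ref{viscosity}} (which formalises what ``solving'' \eqref{roughHJBequation} means) and \textbf{Theorem~\ref{mainOC}} (which carries out the convergence estimate). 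So your first two paragraphs are aimed at the right target but at the wrong theorem.

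This mismatch also explains the ``principal obstacle'' you identify. You worry that $\nabla_x v$ may fail to be continuous and hence that $(\lambda\cdot\nabla_x v + \psi)$ may not be a genuine controlled rough path; you then propose a test-function viscosity formulation to circumvent this. The paper never faces this difficulty, because it never attempts to make sense of \eqref{roughHJBequation} as a bona fide rough integral of $\lambda\cdot\nabla_x v + \psi$ against $\bm{\zeta}$. Instead, the meaning of ``$v$ satisfies the rough HJB equation'' is \emph{by definition} that $v$ is the locally uniform limit of classical viscosity solutions $v^{\bm{\eta}^n}$ as $\bm{\eta}^n \to \bm{\zeta}$ in $\varrho_{\frac{1}{p}\text{-H\"ol}}$. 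Under that convention there is nothing to prove for Theorem~\ref{roughHJB} beyond writing down the approximating equations, and the regularity issue you raise simply does not arise.
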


\begin{proof}
\normalsize
    Explicitly, we interpret the rough HJB equation \textnormal{(\ref{roughHJBequation})} as the limiting form of the sequence of classical HJB equations:
    \begin{equation}
        -dv - b \cdot \nabla_x v \, dt - \inf_{u \in U} \left\{ h \cdot \nabla_a v + f \right\} dt - \left( \lambda \cdot \nabla_x v + \psi \right) \, d\bm{\eta}^n = 0.
    \end{equation}
\end{proof}

Before showing that the function $v$ from \textbf{Theorem~\ref{roughHJB}} is the unique viscosity solution of \textnormal{(\ref{roughHJBequation})}, we introduce the appropriate notion of convergence.

\begin{definition}[Viscosity Solution]
\label{viscosity}
    Suppose that $v^{\bm{\eta}^n} \to v^{\bm{\zeta}}$ as $n \to +\infty$ in the sense of \textbf{Theorem~\ref{roughHJB}}. Then $v^{\bm{\zeta}}$ is said to solve the rough HJB equation \textnormal{(\ref{roughHJBequation})} in the viscosity sense if the convergence $v^{\bm{\eta}^n} \to v^{\bm{\zeta}}$ is locally uniform and the driving signals $\bm{\eta}^n$ converge to $\bm{\zeta}$ in the $\frac{1}{p}$-Hölder rough path metric $\varrho_{\frac{1}{p}-\text{Höl}}$.
\end{definition}

We are now ready to state the central result of this section. The following theorem establishes that the value function introduced in the generalized control formulation is the unique viscosity solution to the rough Hamilton-Jacobi-Bellman (HJB) equation. It also shows that this value function depends continuously on the driving rough path, which is essential for the stability and robustness of the control problem under model perturbations. This result extends classical control theory to the rough path setting and forms the theoretical foundation for rough optimal control.
\begin{theorem}[Uniqueness]
\label{mainOC}
    Suppose \textbf{Assumption~\ref{assumption}} holds. Then the value function defined in \textnormal{(\ref{reformulatedVF})} is a viscosity solution of the rough HJB equation \textnormal{(\ref{roughHJBequation})} with terminal condition \textnormal{(\ref{roughHJBterminalCondition})}, in the sense of \textbf{Definition~\ref{viscosity}}.

    Moreover, for each fixed \((t,x,a)\), the map
    \[
        \bm{\zeta} \mapsto v^{\bm{\zeta}}(t,x,a)
    \]
    is uniformly continuous with respect to both the \(\frac{1}{p}\)-Hölder and \(p\)-variation rough path metrics \(\varrho_{\text{\(\frac{1}{p}\)-Höl}, J}\) and \(\varrho_{p, J}\), where \(\bm{\zeta} \in \mathscr{C}^{0,p}_g(J, \mathbb{R}^d)\).
\end{theorem}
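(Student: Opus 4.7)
The plan is to treat the two assertions in reverse order: I would establish the continuous dependence $\bm{\zeta} \mapsto v^{\bm{\zeta}}(t,x,a)$ first, since the viscosity claim in the sense of \textbf{Definition~\ref{viscosity}} is essentially a stability statement that follows from it. The starting point is \textbf{Corollary~\ref{corollary}}, which, for $(x,a)$ ranging over a compact set, allows the infimum in \eqref{reformulatedVF} to be restricted to admissible controls $u$ producing paths $\gamma^{t,a,u}$ with $\pvar{\gamma^{t,a,u}}{\frac{p}{2}}{J} \le M$. Write $\mathcal{U}_M$ for this reduced admissible class; it then suffices to prove uniform continuity of $\bm{\zeta} \mapsto J^{\bm{\zeta}}(t,x,a,u)$ on $\mathcal{U}_M$, with a modulus independent of $u$.

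Next, fix two geometric rough paths $\bm{\zeta}, \bm{\eta} \in \mathscr{C}^{0,p}_g$ satisfying $\roughholder{\zeta}{1}{p}{J}, \roughholder{\eta}{1}{p}{J} \le N$ and a control $u \in \mathcal{U}_M$. Applying \textbf{Theorem~\ref{theorem:RDETheorem2}} with $\gamma = \vartheta = \gamma^{t,a,u}$ and identical initial conditions collapses the rough-integral estimate \eqref{eq:RDEDistance} to
\[
\left\| \int_t^\cdot \psi(X^{\bm{\zeta}}_s, \gamma_s)\, d\bm{\zeta}_s - \int_t^\cdot \psi(X^{\bm{\eta}}_s, \gamma_s)\, d\bm{\eta}_s \right\|_{p, J} \lesssim \roughMetricPvar{\zeta}{\eta}{p}{J},
\]
with an analogous bound on $\pvar{X^{\bm{\zeta}} - X^{\bm{\eta}}}{p}{J}$. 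Combined with the Lipschitz continuity of $f$ and $g$ in $(x,a)$ from \textbf{Assumption~\ref{assumption}}, this yields, for some modulus $\omega$ vanishing at the origin,
\[
|J^{\bm{\zeta}}(t,x,a,u) - J^{\bm{\eta}}(t,x,a,u)| \le C_{M,N}\, \omega\bigl( \roughMetricPvar{\zeta}{\eta}{p}{J} \bigr),
\]
uniformly in $u \in \mathcal{U}_M$. Passing to the infimum over $u$ transfers the estimate to $v^{\bm{\zeta}}$; the $\frac{1}{p}$-Hölder case is an immediate consequence of the embedding $\pvar{\xi}{p}{J} \le T^{1/p}\|\xi\|_{\frac{1}{p}\text{-H\"ol}}$ already exploited in \textbf{Proposition~\ref{prop:rough_finite_pvar}}, applied to both $\zeta - \eta$ and $\zeta^{(2)} - \eta^{(2)}$.

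For the viscosity assertion, take smooth lifts $(\bm{\eta}^n)$ approximating $\bm{\zeta}$ in the $\frac{1}{p}$-Hölder rough path metric, which exist by definition of $\mathscr{C}^{0,p}_g$. Classical control theory (e.g.~\cite{bardi}) ensures that each $v^{\bm{\eta}^n}$ is the unique viscosity solution of the classical HJB equation driven by $\bm{\eta}^n$ with terminal data $g$. The continuous dependence established above, together with an exhaustion by compact sets in $(t,x,a)$, yields $v^{\bm{\eta}^n} \to v^{\bm{\zeta}}$ locally uniformly. This is precisely the content of \textbf{Definition~\ref{viscosity}}, so $v^{\bm{\zeta}}$ is the rough viscosity solution of \eqref{roughHJBequation}--\eqref{roughHJBterminalCondition}, and uniqueness is built into the definition.

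The hard part will be extracting a modulus $\omega$ and constant $C_{M,N}$ in the cost estimate that are genuinely independent of $u \in \mathcal{U}_M$. The constants implicit in \textbf{Theorem~\ref{theorem:RDETheorem2}} and \textbf{Proposition~\ref{prop:regularity_RDE}} depend polynomially on $\pvar{\gamma^{t,a,u}}{\frac{p}{2}}{J}$, so the uniform bound $M$ from \textbf{Corollary~\ref{corollary}} must be married to the growth estimates of \textbf{Lemma~\ref{ineq5}} and to a uniform a priori bound on $\roughholder{\zeta}{1}{p}{J}, \roughholder{\eta}{1}{p}{J}$ (which is legitimate, since local uniform continuity need only be verified on bounded sets of rough paths). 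Once these dependencies are pinned down to $M$, $N$, and the chosen compact neighborhood, the rest of the argument reduces to bookkeeping on the pieces of the cost functional.
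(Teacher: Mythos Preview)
Your proposal is correct and follows essentially the same strategy as the paper: restrict to controls with uniformly bounded $\tfrac{p}{2}$-variation via \textbf{Corollary~\ref{corollary}}, apply the stability estimate of \textbf{Theorem~\ref{theorem:RDETheorem2}} together with the Lipschitz continuity of $f$ and $g$ to bound $|v^{\bm{\zeta}} - v^{\bm{\eta}}|$ by $\roughMetricPvar{\zeta}{\eta}{p}{J}$, and then pass to smooth approximants to conclude the viscosity claim. Your explicit discussion of why the implicit constants are uniform in $u \in \mathcal{U}_M$ is more careful than the paper's presentation, but the underlying argument is the same.
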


\begin{proof}
\normalsize
    Let \(\bm{\eta} \in \mathscr{C}^{0,p}_g(J, \mathbb{R}^d)\) be another geometric rough path such that
    \[
        \left\|\bm{\zeta}\right\|_{\frac{1}{p}\text{-Höl},J} \le M 
        \quad \text{and} \quad 
        \left\|\bm{\eta}\right\|_{\frac{1}{p}\text{-Höl},J} \le M
    \]
    for some fixed \(M > 0\). Then the hypotheses of \textbf{Theorem~\ref{theorem:RDETheorem2}} are satisfied. Define the corresponding controlled paths
    \[
        X^{\bm{\zeta}}_s \coloneqq X^{t,x,a,u,\bm{\zeta}}_s, \qquad X^{\bm{\eta}}_s \coloneqq X^{t,x,a,u,\bm{\eta}}_s.
    \]
    By \textbf{Corollary~\ref{corollary}}, we may restrict to controls \(u\) such that \(\pvar{\gamma^{t,a,u}}{\frac{p}{2}}{J} \le L\) for some constant \(L > 0\). Applying \textbf{Theorem~\ref{theorem:RDETheorem2}} yields
    \[
        \pvar{X^{\bm{\zeta}} - X^{\bm{\eta}}}{\infty}{J} \lesssim \roughMetricPvar{\zeta}{\eta}{p}{J}
    \]
    and
    \[
        \left\| \int_0^\cdot \psi(X^{\bm{\zeta}}_s, \gamma^{t,a,u}_s) d\bm{\zeta}_s - \int_0^\cdot \psi(X^{\bm{\eta}}_s, \gamma^{t,a,u}_s) d\bm{\eta}_s \right\|_{p,J} \lesssim \roughMetricPvar{\zeta}{\eta}{p}{J}.
    \]

    Let \(U^L\) denote the set of all controls \(u\) such that \(d\gamma_s^{t,a,u} = u_s ds\) and \(\pvar{\gamma}{\frac{p}{2}}{J} \le L\). Then for any \((t,x,a)\),
    \begin{equation*}
    \begin{split}
        |v^{\bm{\zeta}}(t,x,a) - v^{\bm{\eta}}(t,x,a)| 
        & \le \sup_{u \in U^L} \Bigg| \int_t^T \Big( f(X^{\bm{\zeta}}_s, \gamma_s, u_s) - f(X^{\bm{\eta}}_s, \gamma_s, u_s) \Big) ds \\
        & \quad + \int_t^T \psi(X^{\bm{\zeta}}_s, \gamma_s) d\bm{\zeta}_s - \int_t^T \psi(X^{\bm{\eta}}_s, \gamma_s) d\bm{\eta}_s \\
        & \quad + g(X_T^{\bm{\zeta}}, \gamma_T) - g(X_T^{\bm{\eta}}, \gamma_T) \Bigg| \\
        & \lesssim \sup_{u \in U^L} \left( \int_t^T |X^{\bm{\zeta}}_s - X^{\bm{\eta}}_s| ds + \roughMetricPvar{\zeta}{\eta}{p}{J} + |X_T^{\bm{\zeta}} - X_T^{\bm{\eta}}| \right) \\
        & \lesssim \roughMetricPvar{\zeta}{\eta}{p}{J} \\ 
        & \lesssim \roughMetricHolder{\zeta}{\eta}{p}{J},
    \end{split}
    \end{equation*}
    where we used the Lipschitz continuity of \(f\) and \(g\).

    Finally, taking a sequence \((\bm{\eta}^n)\) of smooth paths such that
    \[
        \lim_{n \to +\infty} \roughMetricHolder{\zeta}{\eta^n}{p}{J} = 0
    \]
    and applying the above bound completes the proof.
\end{proof}

Additionally, we present a key result known as the \emph{verification theorem}, which asserts that if one can find a pair of functions \( w \) and \( \gamma^* \) satisfying the rough HJB equation, then \( w \) coincides with the \emph{value function} and \( \gamma^* \) is the optimal control. The proof follows a similar strategy to that of \textbf{Theorem~\ref{roughHJB}}, using approximation by classical control problems.
\begin{theorem}[Verification Theorem]
\label{theorem:verification_theorem}
    Suppose the setup in equations \textnormal{(\ref{reformualedOC})--(\ref{reformulatedVF})} holds as specified at the beginning of \textnormal{\S\ref{sec:GOC}}. If the function \( w \) satisfies the rough HJB equation \textnormal{(\ref{roughHJBequation})--(\ref{roughHJBterminalCondition})} with associated control \( \gamma^* \), then \( w \) is the unique \emph{value function} and \( \gamma^* \) is the \emph{optimal control}.
\end{theorem}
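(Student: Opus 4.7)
The plan is to reduce the rough verification theorem to its classical counterpart via the same smooth-approximation scheme that underpins the proofs of \textbf{Theorem~\ref{roughHJB}} and \textbf{Theorem~\ref{mainOC}}. First I would fix a sequence of smooth geometric rough paths $\bm{\eta}^n$ whose canonical lifts converge to $\bm{\zeta}$ in $\varrho_{\frac{1}{p}\text{-H\"ol},J}$, subject to a uniform bound $|||\bm{\eta}^n|||_{\frac{1}{p}\text{-H\"ol},J} \le M$. By \textbf{Definition~\ref{viscosity}}, the hypothesis that $w$ solves the rough HJB equation is precisely the assertion that a sequence $w^n$ of classical solutions of the HJB equation driven by $\bm{\eta}^n$ converges to $w$ locally uniformly, along with an associated family of classical optimizers $\gamma^{*,n}$ converging to the given $\gamma^*$.

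The next step is to invoke the classical verification theorem (cf.\ \cite{yong}) at level $n$. Since $\bm{\eta}^n$ is smooth, $w^n$ satisfies a genuine classical HJB PDE with terminal datum $g$, and $\gamma^{*,n}$ attains the infimum in the Hamiltonian pointwise. A chain-rule computation applied to $s \mapsto w^n\big(s, X_s^{t,x,a,u,n}, \gamma_s^{t,a,u}\big)$ then yields the smooth-level identities $w^n = v^n$ (the value function of the smooth control problem driven by $\bm{\eta}^n$) together with the optimality of the control $u^{*,n}$ generating $\gamma^{*,n}$. To transfer these equalities to the rough level I would pass to the limit $n \to +\infty$ using two facts already in hand: locally uniform convergence $v^n \to v^{\bm{\zeta}}$ from \textbf{Theorem~\ref{mainOC}}, and locally uniform convergence $w^n \to w$ from the viscosity hypothesis. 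Together these force $w = v^{\bm{\zeta}}$, and the uniqueness part of \textbf{Theorem~\ref{mainOC}} immediately yields uniqueness of $w$.

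For the second conclusion -- the optimality of $\gamma^*$ -- I would pass the smooth-level equality $w^n(t,x,a) = J^n(t,x,a,u^{*,n})$ to the limit. The Lipschitz hypotheses on $f$ and $g$ in \textbf{Assumption~\ref{assumption}}, combined with the cost-continuity estimate~(\ref{eq:RDEDistance}) of \textbf{Theorem~\ref{theorem:RDETheorem2}}, then give $J^n(t,x,a,u^{*,n}) \to J(t,x,a,u^*)$, provided $\gamma^{*,n} \to \gamma^*$ both in sup-norm and in $\frac{p}{2}$-variation. This control-side convergence is the principal obstacle I expect: the strategy would be to invoke \textbf{Corollary~\ref{corollary}} to restrict \emph{a priori} to a family of controls uniformly bounded in $\pvarSpaceSimple{\frac{p}{2}}$, then extract a convergent subsequence via a Helly-type compactness argument, and finally identify the limit with the prescribed $\gamma^*$ using continuity of the Hamiltonian minimizer together with the uniqueness of $v^{\bm{\zeta}}$ just established. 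Every remaining step is routine given the machinery assembled in $\S\ref{OC}$.
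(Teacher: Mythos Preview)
Your proposal is correct and follows essentially the same approach as the paper: approximate $\bm{\zeta}$ by smooth paths $\bm{\eta}^n$, apply the classical verification theorem at each level $n$ to get $w^n = v^n$ with optimizer $(\gamma^*)^n$, and then pass to the limit via \textbf{Theorem~\ref{mainOC}}. In fact your write-up is considerably more detailed than the paper's own three-sentence proof; in particular, you explicitly flag the control-side convergence $\gamma^{*,n} \to \gamma^*$ as the delicate step and sketch a compactness argument for it, whereas the paper simply asserts that letting $n \to \infty$ via \textbf{Theorem~\ref{mainOC}} ``obtain[s] the desired result'' without isolating this issue.
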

\begin{proof}
\normalsize
    As in \textbf{Theorem~\ref{roughHJB}}, we approximate the rough HJB equation by a sequence of control problems driven by smooth paths \( \bm{\zeta}^n \). If $w^n$ and $(\gamma^*)^n$ satisify the classical HJB equation with respect to $\bm{\zeta}^n$, then $w^n$ is the unique \emph{value function} and $(\gamma^*)^n$ is the unique \emph{optimal control}. By \textbf{Theorem~\ref{mainOC}}, we may let $n \to \infty$ to obtain the desired result.
\end{proof}

We have thus established that three classical results -- the \emph{dynamic programming principle}, the \emph{HJB equation}, and the \emph{verification theorem} -- remain valid in the setting of geometric rough paths with \( p \in [2, 3) \). This provides a rigorous foundation for addressing stochastic filtering problems in a rough path framework.

\section{Pathwise Robust Filtering}
\label{sec:pathwise_robust_filtering}
Stochastic filtering is a fundamental mathematical technique for inferring the evolving state of an unobserved or partially observed ``signal process'' from noisy observations. More precisely, we are interested in estimating the state of a signal process governed by the stochastic differential equation  
\[
    dS_t = \alpha_t S_t \,dt + \sigma_t \,dB_t^1,
\]
where \(S_t\) denotes the signal at time \(t\), \(\alpha_t\) is a time-dependent growth rate, \(\sigma_t\) is the volatility, and \(B_t^1\) is a Brownian motion introducing randomness into the system.

Simultaneously, we observe a process \(Y_t\), which is related to the signal through the equation  
\[
    dY_t = c_t S_t \,dt + dB_t^2,
\]
where \(c_t\) modulates the influence of the signal on the observations, and \(B_t^2\) is another Brownian motion representing observation noise. The challenge lies in estimating the unobserved signal \(S_t\) using only the observed data \(Y_t\).

This estimation is complicated by the fact that the parameters \(\alpha_t\) and \(\sigma_t\) are typically unknown or imprecisely estimated in practice. This parameter uncertainty is a central obstacle in classical filtering approaches and motivates the development of methods that are less sensitive to such inaccuracies.

The filtering problem has been extensively studied -- see, for example, the comprehensive treatment by Bain and Crisan \cite{bain}. A wide range of techniques has been developed to estimate both the hidden state and the model parameters, with applications spanning finance, engineering, and environmental science.

In this section, we focus on \emph{robust filtering}, a framework designed to yield reliable estimates even in the presence of parameter uncertainty. The key idea is to penalize poor parameter fits in a principled way, thus promoting more accurate and stable state estimates derived from observed data. Robust filtering is especially valuable when model parameters are misspecified or vary with time.

Finally, we show that the robust filtering problem admits a natural reformulation as a pathwise optimal control problem. This perspective is powerful, as it allows us to apply the tools developed in Section \S\ref{OC} to construct effective and theoretically sound filtering methods.

\subsection{The Kalman-Bucy Filter}
\label{subsec:kalman-bucy_filter}
To ground our discussion of robust filtering, we begin with the classical Kalman-Bucy filter, which provides a solution to the filtering problem when both the signal and observation processes are linear and driven by Gaussian noise. This model serves as a foundational example and will inform the structure of our robust, pathwise formulation.

The signal and observation processes are given by the stochastic differential equations
\begin{equation}
\label{signal}
    dS_t = \alpha_t S_t \,dt + \sigma_t \,dB_t^1,
\end{equation}
\begin{equation}
\label{observation}
    dY_t = c_t S_t \,dt + dB_t^2,
\end{equation}
where \eqref{signal} describes the unobserved \emph{signal} process \(S_t\), and \eqref{observation} models the observed process \(Y_t\). We assume \(S_t\) takes values in \(\mathbb{R}^m\), \(Y_t\) in \(\mathbb{R}^d\), and that \(Y_0 = 0\). The initial condition for the signal is \(S_0 \sim \mathcal{N}(\mu_0, \Sigma_0)\). The coefficients are given by time-dependent functions:
\begin{itemize}
    \item \(\alpha \colon J \to \mathbb{R}^{m \times m}\) (drift matrix),
    \item \(\sigma \colon J \to \mathbb{R}^{m \times l}\) (diffusion matrix),
    \item \(c \colon J \to \mathbb{R}^{d \times m}\) (observation matrix).
\end{itemize}

We allow for correlation between the Brownian motions \(B_t^1\) and \(B_t^2\), and assume their quadratic covariation satisfies
\[
    d\langle B^1, B^2 \rangle_t = \rho_t \,dt,
\]
where \(\rho_t \in \mathbb{R}^{l \times d}\). We further assume that
\[
    I - \rho_t \rho_t^\top
\]
is positive semi-definite for all \(t\), ensuring that the correlation structure is well-defined.

Let \(\mathcal{Y}_t\) denote the completed filtration generated by the observation process \((Y_s)_{s \le t}\). The conditional expectation
\[
    q_t = \mathbb{E}[S_t \mid \mathcal{Y}_t]
\]
provides the optimal estimate of the signal given the observations up to time \(t\). The Kalman-Bucy filter asserts that \(q_t\) satisfies the stochastic differential equation
\begin{equation}
    dq_t = \alpha_t q_t \,dt + \left( R_t c_t^\top + \sigma_t \rho_t \right) \left( dY_t - c_t q_t \,dt \right),
\end{equation}
where \(R_t\), the conditional covariance matrix
\[
    R_t = \mathbb{E}[(S_t - q_t)(S_t - q_t)^\top \mid \mathcal{Y}_t],
\]
evolves according to the Riccati differential equation
\begin{equation}
    \frac{dR_t}{dt} = \sigma_t \sigma_t^\top + \alpha_t R_t + R_t \alpha_t^\top - \left( R_t c_t^\top + \sigma_t \rho_t \right) \left( c_t R_t + \rho_t^\top \sigma_t^\top \right).
\end{equation}

This classical setting illustrates the structure and potential of filtering theory in the linear-Gaussian case. In the next section, we move beyond this framework to develop a pathwise and robust formulation that accommodates model uncertainty.

\subsection{Robust Filtering}
\label{subsec:robust_filtering}
Building on the Kalman-Bucy framework, we now introduce the machinery required to model filtering under uncertainty -- what we refer to as \emph{robust filtering}. In this formulation, the key idea is to replace fixed model parameters with dynamically controlled variables, allowing us to hedge against potential misspecification. This leads naturally to a control-theoretic interpretation of the filtering problem.

Let \(\gamma_t \coloneqq (\alpha_t, \sigma_t, c_t, \rho_t) \in \Gamma\) denote a vector of control variables, where the control space is defined as
\[
    \Gamma \coloneqq \mathbb{R}^{m \times m} \times \mathbb{R}^{m \times l} \times \mathbb{R}^{d \times m} \times \Upsilon,
\]
and
\[
    \Upsilon \coloneqq \left\{ \rho_t \in \mathbb{R}^{l \times d} \colon I - \rho_t \rho_t^\top \ \text{is positive definite} \right\}.
\]
Here, \(\alpha_t\), \(\sigma_t\), \(c_t\), and \(\rho_t\) are viewed as control inputs, rather than fixed coefficients, allowing the filter to adaptively respond to uncertainty in the system.

To quantify belief about the signal \(S_t\) given observations up to time \(t\), we introduce a convex expectation operator that penalizes poor model choices. Given a bounded Borel function \(\varphi \colon \mathbb{R}^m \to \mathbb{R}\), the \emph{robust conditional expectation} is defined as
\begin{equation}
\label{convexExpectation}
    \mathcal{E}(\varphi(S_t) \mid \mathcal{Y}_t) \coloneqq \esssup_{(\gamma, \mu_0, \Sigma_0)} \left\{ \mathbb{E}^{\gamma, \mu_0, \Sigma_0} \left[ \varphi(S_t) \mid \mathcal{Y}_t \right] - \left( \frac{1}{k_1} \beta(\gamma, \mu_0, \Sigma_0 \mid \mathcal{Y}_t) \right)^{k_2} \right\},
\end{equation}
where \(k_1 > 0\) and \(k_2 \ge 1\) control the degree of penalization. The essential supremum is taken over admissible model parameters \((\gamma, \mu_0, \Sigma_0)\), and the penalty function \(\beta\) encodes a measure of model plausibility, which we now define.

\begin{definition}[Admissible Controls]
\label{admissibleControls}
    Let \(\mathcal{A}\) denote the set of all \emph{admissible controls} \(\gamma_t \in \Gamma\), where \(\gamma \colon J \to \Gamma\) is absolutely continuous with bounded derivative.
\end{definition}

\begin{definition}[Penalty Function]
\label{penaltyDefinition}
    The penalty function \(\beta\) is given by a negative log-likelihood:
    \begin{equation}
    \label{penaltyFunction}
        \beta_t(\gamma, \mu_0, \Sigma_0 \mid \mathcal{Y}_t) = - \ln \left( \pi_t(\gamma, \mu_0, \Sigma_0) L_t(\gamma, \mu_0, \Sigma_0 \mid \mathcal{Y}_t) \right),
    \end{equation}
    where \(\pi_t\) is the prior density and \(L_t\) is the likelihood of the observation path under the parameters \((\gamma, \mu_0, \Sigma_0)\).
\end{definition}

This robust framework naturally yields confidence intervals and point estimates that are resistant to model misspecification.

\begin{remark}[Robust Point Estimate and Confidence Interval]
\label{researchEstimateAndCI}
    Using the convex expectation \(\mathcal{E}\), one can define a \emph{robust point estimate} as the minimizer of the mean squared error:
    \begin{equation}
    \label{robustPointEstimate}
        \argmin_{\xi \in \mathbb{R}} \mathcal{E} \left( (\varphi(S_t) - \xi)^2 \mid \mathcal{Y}_t \right).
    \end{equation}
    Additionally, a robust confidence interval is given by
    \begin{equation}
    \label{robustCI}
        \left[ -\mathcal{E} \left( -\varphi(S_t) \mid \mathcal{Y}_t \right),\ \mathcal{E} \left( \varphi(S_t) \mid \mathcal{Y}_t \right) \right],
    \end{equation}
    where the endpoints reflect the range of plausible values under model uncertainty.
\end{remark}

To proceed further with the formulation of the robust filtering problem, we impose a structural assumption on the prior density:

\begin{assumption}
\label{logPriorAssumption}
    \emph{We assume that the negative log-prior density admits the representation}
    \begin{equation}
    \label{logPriorEquation}
        -\ln (\pi_t(\gamma, \mu_0, \Sigma_0)) = \int_0^t z(q_s, R_s, \gamma_s) \, ds + g(\mu_0, \Sigma_0),
    \end{equation}
    \emph{where} \(z\) \emph{is a measurable function and} \(g\) \emph{captures dependence on the initial conditions.}
\end{assumption}

Next, we turn to the log-likelihood component in the penalty function \eqref{penaltyFunction}. It can be expressed as a Radon-Nikodym derivative:
\begin{equation}
\label{radonNyk}
    L_t(\gamma, \mu_0, \Sigma_0 \mid \mathcal{Y}_t) = \left( \frac{d\mathbb{P}^{\gamma, \mu_0, \Sigma_0}}{d\mathbb{P}^{\gamma^*, \mu_0^*, \Sigma_0^*}} \right)_{\mathcal{Y}_t},
\end{equation}
where \((\gamma^*, \mu_0^*, \Sigma_0^*)\) denotes a fixed set of reference parameters. It turns out that this likelihood can be computed explicitly using the innovation process.

Recall from \cite{bain} that the \emph{innovation process} is given by
\begin{equation}
\label{innovProcess}
    dV_s = dY_s - c_s q_s \, ds,
\end{equation}
which is a \(\mathcal{Y}_t\)-adapted Brownian motion under the measure \(\mathbb{P}^{\gamma, \mu_0, \Sigma_0}\). Furthermore, the conditional mean and innovation process under the reference model, denoted \(q^*_s\) and \(V^*_s\), satisfy
\begin{equation}
\label{conditionalInnov}
    dV_s = dV^*_s - (c_s q_s - c_s^* q^*_s) \, ds.
\end{equation}
Applying Girsanov’s theorem \cite{oksendal}, the likelihood is then given by
\begin{equation}
    L_t(\gamma, \mu_0, \Sigma_0 \mid \mathcal{Y}_t)
    = \exp \left( \int_0^t (c_s q_s - c_s^* q^*_s) \cdot dV^*_s
    - \frac{1}{2} \int_0^t |c_s q_s - c_s^* q^*_s|^2 \, ds \right).
\end{equation}

Substituting the definition \(dV^*_s = dY_s - c_s^* q_s^* ds\) into the expression above yields
\begin{equation}
\label{logExpression}
\begin{split}
    -\ln L_t(\gamma, \mu_0, \Sigma_0 \mid \mathcal{Y}_t)
    &= - \int_0^t (c_s q_s - c_s^* q^*_s) \cdot dY_s
    + \frac{1}{2} \int_0^t \left( |c_s q_s|^2 - |c_s^* q^*_s|^2 \right) ds \\
    &= - \int_0^t c_s q_s \cdot dY_s
    + \frac{1}{2} \int_0^t |c_s q_s|^2 \, ds + \text{const},
\end{split}
\end{equation}
where the last equality follows from treating the reference parameters as fixed, so their contribution may be absorbed into a constant.

In what follows, we disregard the additive constant and consider the penalty function to be defined up to an arbitrary constant offset.

To prepare for the reformulation of the robust filtering problem as a pathwise optimal control problem, we now convert the Itô integral appearing in equation \eqref{logExpression} into its Stratonovich form. Specifically, we write
\begin{equation}
\label{stratonovich}
    - \int_0^t c_s q_s \cdot dY_s = - \int_0^t c_s q_s \circ dY_s + \frac{1}{2} \langle cq, Y \rangle_t,
\end{equation}
where $\circ$ denotes the Stratonovich integral. As shown in \cite{allan2019pathwise}, the quadratic covariation term satisfies
\begin{equation}
\label{quadVar}
    \langle cq, Y \rangle_t = \int_0^t \text{trace} \left( c_s ( R_s c_s^\top + \sigma_s \rho_s ) \right) ds.
\end{equation}

Substituting \eqref{stratonovich} and \eqref{quadVar} into \eqref{logExpression}, we obtain the following expression for the negative log-likelihood:
\begin{equation}
    - \ln L_t(\gamma, \mu_0, \Sigma_0 | \mathcal{Y}_t) = - \int_0^t c_s q_s \circ dY_s + \frac{1}{2} \int_0^t \left( |c_s q_s|^2 + \text{trace} \left( c_s ( R_s c_s^\top + \sigma_s \rho_s ) \right) \right) ds.
\end{equation}

To streamline the notation, we define
\begin{equation*}
    w(q, R, \gamma) \coloneqq z(q, R, \gamma) + \frac{1}{2} \left( |cq|^2 + \text{trace} \left( c(Rc^\top + \sigma \rho) \right) \right),
\end{equation*}
and
\begin{equation*}
    \psi(q, \gamma) \coloneqq -cq,
\end{equation*}
so that the convex expectation \eqref{convexExpectation} becomes
\begin{equation}
\label{finalConvex}
\begin{split}
    \mathcal{E}(\varphi(S_t) | \mathcal{Y}_t) & = \esssup_{\gamma, \mu_0, \Sigma_0} \bigg\{ \bb{E} \left[ \varphi(S_t) \mid \mathcal{Y}_t \right] \\
    & \quad - \left( \frac{1}{k_1} \left( \int_0^t w(q_s, R_s, \gamma_s) ds + \int_0^t \psi(q_s, \gamma_s) \circ dY_s + g(\mu_0, \Sigma_0) \right) \right)^{k_2} \bigg\}.
\end{split}
\end{equation}

This representation sets the stage for the pathwise optimal control formulation introduced in the following section.

\subsection{Lifting into Rough Path Space}
\label{subsec:lifting_into_rough_path_space}

In practice, filtering is typically carried out with respect to a fixed realization of the observation path. That is, given a specific sample path \(\zeta_t = Y_t(\omega)\), one seeks to solve the filtering problem pathwise. However, this requires lifting the path \(\zeta_t\) into rough path space.

We define the \emph{lift} of \(\zeta_t\) as
\begin{equation}
    \lift{\zeta}{s,t} = \int_s^t Y_{s,r}(\omega) \otimes \circ dY_r(\omega),
\end{equation}
so that \(\roughpath{\zeta} \in \mathscr{C}_g^{0,p}\) for \(p \in (2,3)\).

With this lift, the prediction process \(q_t\) satisfies the rough differential equation
\begin{equation}
    dq_t = \alpha_t q_t dt + \left( R_t c_t^\top + \sigma_t \rho_t \right) \left( d\zeta_t - c_t q_t dt \right).
\end{equation}

Moreover, using \textbf{Theorem \ref{theorem:rough_integration}}, we obtain the increment representation
\begin{equation}
\begin{split}
    q_{s,t} & = \int_s^t \left( R_r c_r^\top + \sigma_r \rho_r \right) d\zeta_r + O(|t-s|) \\
    & = \left( R_r c_r^\top + \sigma_r \rho_r \right) \zeta_{s,t} + O(|t-s|).
\end{split}
\end{equation}

This implies that the Gubinelli derivative of \(\psi(q, \gamma) = -cq\) is given by
\[
    \psi(q, \gamma)^\prime = -c \left( R_r c_r^\top + \sigma_r \rho_r \right).
\]

Consequently, the rough integral
\begin{equation*}
    \int_0^\cdot \psi(q_s, \gamma_s) d\bm{\zeta}_s
\end{equation*}
exists and coincides with the Stratonovich integral.

\subsection{The Optimal Control Problem}
\label{subsec:the_optimal_control_problem}

We are now ready to recast the filtering problem as a pathwise optimal control problem. Define the functional
\begin{equation}
\label{functional}
    k_t(\mu, \Sigma) \coloneqq \inf \left\{ \int_0^t w(q_s, R_s, \gamma_s) ds + \int_0^t \psi(q_s, \gamma_s) d\bm{\zeta}_s + g(q_0, R_0) \right\},
\end{equation}
where the infimum is taken over all trajectories \(\gamma, q_0, R_0\) such that \((q_t, R_t) = (\mu, \Sigma)\). In addition, we set \(g(\mu_0, \Sigma_0) = +\infty\) for all \((\mu_0, \Sigma_0) \notin \bb{R}^m \times \mathcal{S}_+^m\), where \(\mathcal{S}_+^m\) denotes the space of symmetric, positive-definite \(m \times m\) matrices.

This formulation allows us to express the convex expectation in (\ref{finalConvex}) as follows:

\begin{lemma}
\label{ceLemma}
Let \(\phi(\cdot | \mu, \Sigma)\) denote the probability density function of the \(N(\mu, \Sigma)\) distribution. If \(\roughpath{\zeta}\) is defined as above, then for any bounded measurable function \(\varphi\), we have
\begin{equation}
\label{ceLemmaEquation}
    \ce{S}{t} = \sup \left\{ \int_{\bb{R}^m} \varphi(x) d\phi(x | \mu, \Sigma) - \left( \frac{1}{k_1} k_t(\mu, \Sigma) \right)^{k_2} \right\},
\end{equation}
where the supremum is taken over all \((\mu, \Sigma) \in \bb{R}^m \times \mathcal{S}_+^m\).
\end{lemma}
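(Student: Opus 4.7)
The plan is to establish (\ref{ceLemmaEquation}) by reorganizing the essential supremum in (\ref{finalConvex}) as a nested optimization: first fix the terminal statistics $(\mu, \Sigma) = (q_t, R_t)$, and then minimize the penalty term over all admissible trajectories reaching these values. The key structural observation is that the expectation $\mathbb{E}^{\gamma, \mu_0, \Sigma_0}[\varphi(S_t) \mid \mathcal{Y}_t]$ depends on the admissible triple $(\gamma, \mu_0, \Sigma_0)$ \emph{only} through the terminal pair $(q_t, R_t)$, while the penalty depends on the full trajectory.

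First, I would invoke the Kalman-Bucy filter from \S\ref{subsec:kalman-bucy_filter}: under any admissible model, the conditional law of $S_t$ given $\mathcal{Y}_t$ is Gaussian with mean $q_t$ and covariance $R_t$, so
$$\mathbb{E}^{\gamma, \mu_0, \Sigma_0}[\varphi(S_t) \mid \mathcal{Y}_t] = \int_{\mathbb{R}^m} \varphi(x)\, d\phi(x \mid q_t, R_t).$$
Setting $(\mu, \Sigma) := (q_t, R_t)$, the essential supremum in (\ref{finalConvex}) then decomposes as
$$\esssup_{(\gamma, \mu_0, \Sigma_0)} = \sup_{(\mu, \Sigma)}\ \esssup_{(\gamma, \mu_0, \Sigma_0):(q_t, R_t) = (\mu, \Sigma)}.$$
Since $x \mapsto x^{k_2}$ is monotone increasing on $[0, \infty)$ for $k_2 \ge 1$, and the Gaussian integral is constant in the inner optimization, maximizing the penalized expression over the inner class is equivalent to minimizing the penalty there. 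Combining Assumption~\ref{logPriorAssumption} with the Stratonovich-to-rough-path conversion derived just before the lemma, the penalty takes the form
$$\beta_t = \int_0^t w(q_s, R_s, \gamma_s)\, ds + \int_0^t \psi(q_s, \gamma_s)\, d\bm{\zeta}_s + g(\mu_0, \Sigma_0),$$
which is exactly the objective in (\ref{functional}). Because $(\mu_0, \Sigma_0)$ is the initial condition for $(q_s, R_s)$ under the rough Kalman-Bucy equation of \S\ref{subsec:lifting_into_rough_path_space}, specifying an admissible $(\gamma, \mu_0, \Sigma_0)$ with $(q_t, R_t) = (\mu, \Sigma)$ is equivalent to specifying a trajectory $(q_s, R_s, \gamma_s)$ with $(q_0, R_0) = (\mu_0, \Sigma_0)$ and terminal value $(\mu, \Sigma)$. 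Hence the inner infimum of $\beta_t$ coincides with $k_t(\mu, \Sigma)$, and the convention $g = +\infty$ outside $\mathbb{R}^m \times \mathcal{S}_+^m$ restricts the outer supremum to this set, yielding (\ref{ceLemmaEquation}).

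The main obstacle will be justifying the passage from the a.s.-valued essential supremum in (\ref{finalConvex}) to a pathwise supremum along the fixed realization $\zeta = Y(\omega)$. This is precisely where the rough path lift is essential: once the Stratonovich integral is replaced by the rough integral against $\bm{\zeta}$, the penalty becomes a continuous functional of $\gamma$ in the $\frac{p}{2}$-variation topology by \textbf{Theorem~\ref{theorem:RDETheorem2}}, so the infimum defining $k_t(\mu, \Sigma)$ is well-posed pathwise and commutes with the essential supremum. A subsidiary technical point is that the inner infimum need not be attained within the class $\mathcal{A}$ of \textbf{Definition~\ref{admissibleControls}}; I would handle this by an approximation argument in the spirit of \textbf{Lemma~\ref{valueWithRegCostLemma}}, exploiting the coercivity of $w$ and $g$ (analogous to \textbf{Proposition~\ref{prop:boundedValue}}) to restrict to a compact sublevel set and the density of absolutely continuous controls to pass the infimum over $\mathcal{A}$ to the infimum over trajectories appearing in (\ref{functional}).
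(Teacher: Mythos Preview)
The paper does not supply its own proof of this lemma: immediately after the statement it writes ``A proof of \textbf{Lemma \ref{ceLemma}} can be found in \cite{ac}.'' There is therefore nothing in the paper to compare your argument against.

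On its own merits, your outline is the natural one and is consistent with the surrounding text. The two structural ingredients you identify are exactly right: (i) under the linear-Gaussian Kalman--Bucy model the conditional law of $S_t$ given $\mathcal{Y}_t$ is $\mathcal{N}(q_t,R_t)$, so the expectation term factors through the terminal pair; and (ii) the penalty, after Assumption~\ref{logPriorAssumption} and the Stratonovich-to-rough conversion carried out in the paragraphs preceding the lemma, is precisely the objective in (\ref{functional}), so the inner minimization over trajectories with prescribed terminal value yields $k_t(\mu,\Sigma)$. The monotonicity of $x\mapsto x^{k_2}$ on $[0,\infty)$ then lets you pull the infimum inside. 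Your flagged obstacles---replacing the essential supremum by a pathwise supremum along the fixed realization $\zeta=Y(\omega)$, and the density/approximation of admissible controls---are genuine and are presumably what the cited reference handles in detail; they are not addressed anywhere in the present paper.
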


A proof of \textbf{Lemma \ref{ceLemma}} can be found in \cite{ac}. Thus, for any pair \((q^{t, \mu, \Sigma}, R^{t, \mu, \Sigma})\) satisfying the terminal condition \((q_t, R_t) = (\mu, \Sigma)\), the filtering problem reduces to the optimal control problem
\begin{equation}
\label{filteringOC}
    k_t(\mu, \Sigma) = \inf_\gamma \left\{ \int_0^t w(q_s, R_s, \gamma_s) ds + \int_0^t \psi(q_s, \gamma_s) d\bm{\zeta}_s + g(q_0, R_0) \right\}.
\end{equation}

Note, however, that this formulation lacks a regularizing cost, which is necessary to avoid degeneracy. To address this, we introduce control dynamics of the form
\begin{equation*}
    d\gamma_s^{t,a,u} = h(\gamma_s^{t,a,u}, u_s)ds,
\end{equation*}
where
\begin{equation*}
    h \colon \Gamma \times U \to U, \quad u \colon J \to U \ \text{is bounded},
\end{equation*}
and \(U \coloneqq \bb{R}^{m \times m} \times \bb{R}^{m \times l} \times \bb{R}^{d \times m} \times \bb{R}^{l \times d}\).

We also impose a new terminal condition:
\begin{equation}
\label{terminalCondition}
    (q_t^{t, \mu, \Sigma, a, u}, R_t^{t, \mu, \Sigma, a, u}, \gamma_t^{t, a, u}) = (\mu, \Sigma, a).
\end{equation}

Furthermore, we allow the functions \(w\) and \(g\) to depend on the initial control \(\gamma_0\) without affecting the result of \textbf{Lemma \ref{ceLemma}}, as shown in \cite{allan2019pathwise}. Define the revised functional
\begin{equation}
\label{newFunctional}
    \Tilde{k}_t(\mu, \Sigma) \coloneqq \inf_{a \in \Gamma} v(t, \mu, \Sigma, a),
\end{equation}
where the value function \(v\) is given by
\begin{equation*}
\begin{split}
    v(t, \mu, \Sigma, a) & \coloneqq \inf_{u \text{ bounded}} \bigg\{ \int_0^t w(q_s, R_s, \gamma_s, u_s) ds + \int_0^t \psi(q_s, \gamma_s) d\bm{\zeta}_s \\
    & \qquad + g(q_0, R_0, \gamma_0) \bigg\},
\end{split}
\end{equation*}
with all processes indexed as \(q_s = q_s^{t, \mu, \Sigma, a, u}\), \(R_s = R_s^{t, \mu, \Sigma, a, u}\), and \(\gamma_s = \gamma_s^{t, a, u}\).

This formulation defines a fully regularized, pathwise optimal control problem for robust filtering.

\subsection{The Associated HJB Equation}
\label{subsec:the_associated_hjb_equation}
We are now in a position to derive the rough Hamilton–Jacobi–Bellman (HJB) equation corresponding to the robust filtering problem. For notational simplicity, we introduce the following system of equations:
\begin{equation*}
    dq_s^{t, \mu, \Sigma, a, u} = b_\mu(q_s^{t, \mu, \Sigma, a, u}, R_s^{t, \Sigma, a, u}, \gamma_s^{t, a, u})\,ds + \lambda(R_s^{t, \Sigma, a, u}, \gamma_s^{t, a, u})\,d\zeta_s, \quad q_t^{t, \mu, \Sigma, a, u} = \mu,
\end{equation*}
\begin{equation*}
    dR_s^{t, \Sigma, a, u} = b_\Sigma(R_s^{t, \Sigma, a, u}, \gamma_s^{t, a, u})\,ds, \quad R_t^{t, \Sigma, a, u} = \Sigma,
\end{equation*}
\begin{equation*}
    d\gamma_s^{t, a, u} = h(\gamma_s^{t, a, u}, u_s)\,ds, \quad \gamma_t^{t, a, u} = a,
\end{equation*}
where \(\gamma = (\alpha, \sigma, c, \rho)\) is the control. The drift and diffusion terms are defined as
\begin{align*}
    b_\mu(q, R, \gamma) &\coloneqq \alpha q - \left( R c^\top + \sigma \rho \right) c q, \\
    b_\Sigma(R, \gamma) &\coloneqq \sigma \sigma^\top + \alpha R + R \alpha^\top - \left( R c^\top + \sigma \rho \right) \left( c R + \rho^\top \sigma^\top \right), \\
    \lambda(R, \gamma) &\coloneqq R c^\top + \sigma \rho.
\end{align*}

\begin{remark}[Backward Control Problem]
The system above defines a backward control problem: it is initialized at the terminal time \(t\), with the cost functional \(g(q_0, R_0, \gamma_0)\) acting on the initial states. In order to apply the results from Section~\S\ref{OC}, this backward problem must be appropriately reformulated.
\end{remark}

We now define some notation and state the regularity conditions required for the derivation of the associated HJB equation. First, let
\[
    \|A\| \coloneqq \text{trace}(A^\top A), \quad \|\gamma\| \coloneqq \max\{ \|\alpha\|, \|\sigma\|, \|c\|, \|\rho\| \},
\]
and for \(A \in \mathcal{S}_+^m\), let \(\lambda_{\min}(A)\) and \(\lambda_{\max}(A)\) denote its minimum and maximum eigenvalues, respectively.

We now impose the following assumptions:

\begin{assumption}
\label{filteringAssumptions}
\
\begin{itemize}
    \item The functions \(w(q, R, \gamma, u)\) and \(g(q, R, \gamma)\) are continuous, bounded from below, and locally Lipschitz in \((q, R, \gamma)\), uniformly in \(u\).
    
    \item The function \(h(\gamma, u)\) is continuous; satisfies \(\{ h(\gamma, u) \colon u \in U \} = U\) for all \(\gamma \in \Gamma\); is Lipschitz in \(\gamma\), uniformly in \(u\); is bounded in \(\gamma\), locally uniformly in \(u\); and for some \(\delta_1 > 0\),
    \[
        \sup_{\gamma \in \Gamma} \frac{\|h(\gamma, u)\|}{\|u\|^{\delta_1}} \to 0 \quad \text{as } \|u\| \to \infty.
    \]
    
    \item There exists \(\delta_2 > \delta_1\) such that
    \[
        \frac{|f(q, R, \gamma, u)|}{\left(1 + |q| + \|R\|^2 + \|\gamma\|^2\right) \|u\|^{\delta_2} + \left(1 + |q|^2 + \|R\|^2\right)(1 + \|\gamma\|^4)} \to \infty
    \]
    as \(|q| + \|R\| + \|\gamma\| + \|u\| \to \infty\).
    
    \item The terminal cost \(g\) satisfies:
    \[
        \frac{|g(q, R, \gamma)|}{|q|^2 + (1 + \|R\|)(1 + \|\gamma\|^2)} \to \infty \quad \text{as } |q| + \|R\| + \|\gamma\| \to \infty,
    \]
    and
    \[
        \inf_{(q, \gamma) \in \bb{R}^m \times \Gamma} |g(q, R, \gamma)| \to \infty \quad \text{as } \lambda_{\min}(R) \to 0.
    \]
    
    \item Furthermore, \(g\) diverges as the observation noise degenerates:
    \[
        \inf_{q, R, \alpha, \sigma, c} |g(q, R, \gamma)| \to \infty \quad \text{as } \lambda_{\max}(\rho \rho^\top) \to 1.
    \]
    
    \item The control growth is bounded by the observation structure:
    \[
        \|h(\gamma, u)\| \le (1 - \lambda_{\max}(\rho \rho^\top)) \|u\| \quad \text{for all } (\gamma, u) \in \Gamma \times U.
    \]
\end{itemize}
\end{assumption}

The condition \(\{ h(\gamma, u) \colon u \in U \} = U\) ensures that for any given terminal condition \((t, \mu, \Sigma, a)\), there exists an admissible control such that the state trajectories remain in their respective domains. This is essential for the well-posedness and finiteness of the value function \(v\), as shown in \cite{allan2019pathwise}.

The next two results mirror \textbf{Lemma~\ref{corollaryLemma}} and \textbf{Corollary~\ref{corollary}}, and serve as crucial steps in establishing the main result of this section -- namely, an analog of \textbf{Theorem~\ref{mainOC}} for the filtering setting.

\begin{lemma}
\label{lemmaFilt}
Suppose \textbf{Assumption~\ref{filteringAssumptions}} holds. Then, for any terminal condition \((t, \mu, \Sigma, a)\) and control \(u\), the following estimate holds:
\begin{equation*}
    \left| \int_0^t \psi(q_s, \gamma_s)\,d\bm{\zeta}_s \right| \le C + \frac{1}{2} \left( \int_0^t w(q_s, R_s, \gamma_s)\,ds + g(q_0, R_0, \gamma_0) \right),
\end{equation*}
for some constant \(C > 0\).
\end{lemma}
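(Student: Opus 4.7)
The argument parallels the proof of Lemma~\ref{corollaryLemma}, adapted to two features specific to the filtering setting: the coefficients $b_\mu, b_\Sigma, \lambda$ are not globally bounded in $(q, R, \gamma)$, so regularity bounds carry state-dependent prefactors, and the terminal cost $g(q_0, R_0, \gamma_0)$ must appear on the right-hand side to absorb configurations that the integral of $w$ cannot control on its own. The first step is to apply the rough-integration remainder estimate~\eqref{eq:integral_remainder} to the controlled rough path $(\psi(q_s,\gamma_s), \psi(q_s,\gamma_s)')$, where $\psi(q,\gamma) = -cq$ has Gubinelli derivative $\psi' = -c\lambda(R,\gamma)$; combined with locally-bounded analogues of Proposition~\ref{prop:regularity_RDE} (obtained by a standard localisation argument on balls in $(q,R,\gamma)$-space to accommodate the polynomial growth of the coefficients), this produces an estimate of the form
\[
\left|\int_0^t \psi(q_s,\gamma_s)\,d\bm{\zeta}_s\right| \lesssim P(|q_0|,\|R_0\|,\|\gamma_0\|)\bigl(1+\pvar{\gamma}{\frac{p}{2}}{[0,t]}^{2(1+p)}\bigr),
\]
in which $P$ is a polynomial capturing the initial-state contribution of $|\psi(q_0,\gamma_0)|$, $|\psi'(q_0,\gamma_0)|$, and the constants in the adapted regularity bounds.

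Next I would convert the $p/2$-variation of $\gamma$ into a tractable $L^{2(1+p)}$-integral of $h$: since $d\gamma_s = h(\gamma_s,u_s)\,ds$, Lemma~\ref{lemma:variation_inclusion} together with Hölder's inequality gives
\[
\pvar{\gamma}{\frac{p}{2}}{[0,t]}^{2(1+p)} \le \left(\int_0^t |h(\gamma_s,u_s)|\,ds\right)^{2(1+p)} \lesssim T^{2(1+p)/p'}\int_0^t |h(\gamma_s,u_s)|^{2(1+p)}\,ds.
\]
By Assumption~\ref{filteringAssumptions}, $|h(\gamma,u)|^{2(1+p)} \lesssim \|u\|^{2(1+p)\delta_1}$ for $\|u\|$ large, and the coercivity of $w$ in the direction $\|u\|^{\delta_2}$ (with $\delta_2 > \delta_1$ chosen large enough that $2(1+p)\delta_1 < \delta_2$, as permitted by the assumption) ensures that, outside a compact set of $u$-values, the product of the polynomial prefactor $P$ with this $u$-growth is dominated by $\tfrac{1}{2}\int_0^t w(q_s,R_s,\gamma_s)\,ds$ up to an additive constant.

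The main obstacle is then to control $P(|q_0|,\|R_0\|,\|\gamma_0\|)$ in the regimes where the $w$-absorption breaks down, namely near the boundary of $\Upsilon$ where $\lambda_{\max}(\rho\rho^\top) \to 1$ and where $\lambda_{\min}(R_0) \to 0$. Here the coercivity of $g$ prescribed in Assumption~\ref{filteringAssumptions} -- its divergence as $\lambda_{\min}(R) \to 0$ and as $\lambda_{\max}(\rho\rho^\top) \to 1$ -- combined with the structural constraint $\|h(\gamma,u)\| \le (1-\lambda_{\max}(\rho\rho^\top))\|u\|$ that damps the control velocity near those boundaries, forces $g(q_0,R_0,\gamma_0)$ to dominate the residual prefactor, which is then absorbed into $\tfrac{1}{2}\,g(q_0,R_0,\gamma_0)$. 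Assembling these three absorptions and collecting all residual constants into $C$ yields the stated inequality. The most delicate piece of bookkeeping, and where I expect the proof to require the most care, is verifying that the various polynomial bounds can be uniformly balanced against the coercive growth of $w$ and $g$ along every admissible trajectory, so that the constant $C$ is genuinely independent of $(t,\mu,\Sigma,a,u)$.
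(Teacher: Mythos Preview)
Your proposal follows essentially the same approach the paper sketches: apply the rough-integration remainder estimate \eqref{eq:integral_remainder} to $(\psi(q,\gamma),\psi(q,\gamma)')$, bound the four quantities $|\psi|$, $|\psi'|$, $\pvar{R^{\psi}}{p/2}{[0,t]}$, $\pvar{\psi'}{p}{[0,t]}$, and then invoke Assumption~\ref{filteringAssumptions} to absorb the resulting terms into $\tfrac{1}{2}\int_0^t w\,ds + \tfrac{1}{2}g$. The paper does not give a full proof here but points to \cite{allan2019pathwise}, and your outline matches that sketch, including the correct identification of why $g$ must appear (unbounded coefficients, boundary degeneracies in $R$ and $\rho$).

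One small slip: you write that $\delta_2$ may be ``chosen large enough that $2(1+p)\delta_1 < \delta_2$, as permitted by the assumption''. Assumption~\ref{filteringAssumptions} does not permit this; it asserts the existence of \emph{some} $\delta_2>\delta_1$ for which the coercivity of $w$ holds, not that you may take $\delta_2$ as large as you like. The absorption still goes through, but not via the direct comparison $|h|^{2(1+p)}\lesssim \|u\|^{2(1+p)\delta_1}$ against $\|u\|^{\delta_2}$ that you suggest. Rather, the state-dependent prefactors in the coercivity denominator of $w$ (the $(1+|q|+\|R\|^2+\|\gamma\|^2)\|u\|^{\delta_2}$ and $(1+|q|^2+\|R\|^2)(1+\|\gamma\|^4)$ terms) are structured precisely to match the polynomial growth that the rough-integral bound actually produces in this setting, so the exponent bookkeeping is different from your sketch. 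This is exactly the ``delicate piece of bookkeeping'' you flag at the end, and the detailed argument in \cite{allan2019pathwise} handles it; just be aware that your stated route through the exponents is not quite the one that works.
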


The proof is technical and relies on the simplified notation introduced at the beginning of this section, as well as on \textbf{Assumption~\ref{filteringAssumptions}}. The key idea is to apply \textbf{Theorem~\ref{theorem:rough_integration}} and to derive bounds involving \(|\psi(q, \gamma)|\), \(\| \psi(q, \gamma)' \|\), and the \(p\)-variation norms \(\pvar{R^{\psi(q, \gamma)}}{p/2}{[0, t]}\) and \(\pvar{\psi(q, \gamma)'}{p}{[0, t]}\). For full details, see \cite{allan2019pathwise}.

\begin{corLemma}
\label{filteringCorollary}
Suppose \(K \subseteq \bb{R}^m \times \mathcal{S}_+^m \times \Gamma\) is compact. Then, for terminal conditions \((t, \mu, \Sigma, a) \in [0, T] \times K\), the optimal control problem can be restricted to controls \(u\) such that
\[
    \|q\|_\infty, \|R\|_\infty, \|\gamma\|_\infty, \pvar{R}{1}{[0, t]}, \pvar{\gamma}{1}{[0, t]} \le M
\]
for some constant \(M > 0\).
\end{corLemma}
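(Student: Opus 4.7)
The plan is to mirror the strategy of \textbf{Corollary~\ref{corollary}}, adapted now to the richer filtering state space. First, applying \textbf{Lemma~\ref{lemmaFilt}} yields the lower bound
\[
    v(t, \mu, \Sigma, a) \ge \frac{1}{2}\int_0^t w(q_s, R_s, \gamma_s, u_s)\,ds + \frac{1}{2}g(q_0, R_0, \gamma_0) - C.
\]
Next, I would fix any admissible reference control $u^*$ and set $\Lambda \coloneqq \sup_{(t^*, \mu^*, \Sigma^*, a^*) \in [0,T] \times K} J(t^*, \mu^*, \Sigma^*, a^*, u^*)$, which is finite by compactness of $[0,T] \times K$ together with continuity of $J$. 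Since $v \le \Lambda$ uniformly on this set, any candidate control $u$ producing $J(t, \mu, \Sigma, a, u) > \Lambda + 1$ may be discarded without altering the infimum. Consequently, it suffices to restrict attention to controls satisfying
\[
    \int_0^t w(q_s, R_s, \gamma_s, u_s)\,ds + g(q_0, R_0, \gamma_0) \le N,
\]
for a constant $N$ depending only on $K$.

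With this uniform cost bound in hand, the pointwise and variational estimates would then be extracted from \textbf{Assumption~\ref{filteringAssumptions}} in sequence. The coercivity of $g$ in $(q, R, \gamma)$, together with its divergence as $\lambda_{\min}(R) \to 0$ and as $\lambda_{\max}(\rho\rho^\top) \to 1$, yields uniform bounds on $|q_0|$, $\|R_0\|$, $\|\gamma_0\|$, as well as the non-degeneracy estimates $\lambda_{\min}(R_0) \ge \kappa$ and $\lambda_{\max}(\rho_0 \rho_0^\top) \le 1 - \kappa$ for some $\kappa > 0$. The super-linear growth of the running cost in $(q, R, \gamma, u)$ then converts the integrated bound into simultaneous control of a suitable $L^{\delta_2}$-type norm of $u$ and pointwise-in-time bounds on $|q_s|$, $\|R_s\|$, $\|\gamma_s\|$. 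The structural inequality $\|h(\gamma, u)\| \le (1 - \lambda_{\max}(\rho\rho^\top))\|u\|$ combined with this control on $u$ and Hölder's inequality then bounds $\pvar{\gamma}{1}{[0,t]} = \int_0^t \|h(\gamma_s, u_s)\|\,ds$, and hence also $\|\gamma\|_\infty$ via the terminal value $\gamma_t = a \in K$. The bound on $\pvar{R}{1}{[0,t]}$ would follow immediately from the ODE $dR_s = b_\Sigma(R_s, \gamma_s)\,ds$, since $b_\Sigma$ is polynomial in its arguments and both $R$ and $\gamma$ are by now controlled pointwise, while $\|q\|_\infty$ is obtained by feeding the established bounds on $R$, $\gamma$, and $\roughholder{\zeta}{1}{p}{J}$ into \textbf{Proposition~\ref{prop:regularity_RDE}} applied to the RDE for $q$, and combining the resulting $p$-variation bound with the terminal value $q_t = \mu \in K$.

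The main obstacle is the careful ordering of these implications to avoid circularity. In particular, bounding $\|\gamma\|_\infty$ requires non-degeneracy of $\rho\rho^\top$ along the whole trajectory, not just at time $0$; the coercivity of $g$ enforcing $\lambda_{\max}(\rho_0 \rho_0^\top) \le 1 - \kappa$ at the initial time must be propagated to all $s \in [0, t]$ using the $\gamma$-dynamics and the bound $\|h(\gamma, u)\| \le (1 - \lambda_{\max}(\rho\rho^\top))\|u\|$, guaranteeing that $\rho_s \rho_s^\top$ cannot reach the boundary of $\Upsilon$ within finite cost. Once this chain of implications closes, the remaining estimates follow from the ODE/RDE structure together with \textbf{Proposition~\ref{prop:regularity_RDE}}.
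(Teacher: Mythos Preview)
Your proposal is correct and follows essentially the same strategy as the paper's own proof, which is itself only a three-sentence sketch: it points to \textbf{Corollary~\ref{corollary}} for the bound on $\pvar{\gamma}{1}{[0,t]}$, invokes the ODE for $R$ together with \textbf{Assumption~\ref{filteringAssumptions}} to obtain $\|R\|_\infty$ and hence $\pvar{R}{1}{[0,t]}$, and then states that $\|q\|_\infty$ ``follows directly from the differential equation governing $q$.'' Your write-up is considerably more detailed than the paper's and correctly isolates the ordering issue (propagating the non-degeneracy $\lambda_{\max}(\rho\rho^\top)<1$ along the trajectory) that the paper leaves implicit.

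One small refinement: invoking \textbf{Proposition~\ref{prop:regularity_RDE}} for the $q$-bound is heavier than necessary and technically does not apply verbatim, since that proposition assumes $\lambda\in C^2_b$ whereas here $\lambda(R,\gamma)=Rc^\top+\sigma\rho$ is polynomial and unbounded. The cleaner route, and presumably what the paper has in mind, is to observe that $\lambda$ does not depend on $q$; once $R$ and $\gamma$ are bounded with finite $1$-variation, the integral $\int_0^\cdot \lambda(R_s,\gamma_s)\,d\bm{\zeta}_s$ is a Young integral with an a priori bound (via \textbf{Theorem~\ref{youngIntegral}}), and the remaining drift $b_\mu$ is linear in $q$ with bounded coefficients, so Gr\"onwall gives $\|q\|_\infty$ directly from the terminal value $q_t=\mu\in K$.
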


\begin{proof}
\normalsize
The bound on \(\pvar{\gamma}{1}{[0,t]}\) follows by arguments similar to those in \textbf{Corollary~\ref{corollaryLemma}}. The process \(R_t\) remains in a bounded set due to its ODE and \textbf{Assumption~\ref{filteringAssumptions}}, implying \(\|R\|_\infty < \infty\) and thus \(\pvar{R}{1}{[0, t]} < \infty\). Finally, boundedness of \(\|q\|_\infty\) follows directly from the differential equation governing \(q\).
\end{proof}

We now proceed to derive the HJB equation. As in \S\ref{roughHJBsection}, we approximate the rough path \(\bm{\zeta}\) by a sequence of smooth rough paths \(\bm{\eta}^n = (\eta^n, (\eta^n)^{(2)})\) via the Stone–Weierstrass theorem, solve the control problem in the smooth setting, and take the limit to obtain the rough HJB equation.

Before stating the result, let \(A : B\) denote the inner product in the appropriate inner product space. For matrices \(A, B\), this is given by \(A : B \coloneqq \text{trace}(A^\top B)\).

\begin{theorem}[Rough HJB Equation for Filtering]
\label{filteringHJBtheorem}
Suppose \textbf{Assumption~\ref{filteringAssumptions}} holds. Then the value function \(v\) satisfies the following rough HJB equation:
\begin{equation}
\label{filteringHJBequation}
    dv + \left( b_\mu \cdot \nabla_\mu v + b_\Sigma : \nabla_\Sigma v \right) dt + \sup_{u \in U} \left\{ h : \nabla_a v - w \right\} dt + \left( \lambda \cdot \nabla_\mu v - \psi \right) d\bm{\zeta} = 0,
\end{equation}
with terminal condition
\begin{equation}
\label{terminalConditionHJB}
    v(0, \mu, \Sigma, a) = g(\mu, \Sigma, a).
\end{equation}
\end{theorem}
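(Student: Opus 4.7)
The plan is to parallel the proof strategy of Theorem~\ref{roughHJB} from Section~\ref{OC}, specializing to the filtering coefficients $(b_\mu, b_\Sigma, \lambda, h, w, g, \psi)$. The first step is to reformulate the backward control problem (where the state is initialized at the terminal time $t$ and the cost $g(q_0, R_0, \gamma_0)$ is evaluated at time $0$) as a forward problem via the time reversal $\tilde{s} = t - s$, so that the system matches the template of equations~(\ref{reformualedOC})--(\ref{reformulatedVF}). Under this transformation, the initial-time cost becomes a terminal cost, and the rough integral against $\bm{\zeta}$ transforms via the corresponding time-reversed lift.

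Next, I would invoke Stone--Weierstrass to obtain a sequence of smooth paths $\eta^n \in C^\infty(J, \mathbb{R}^d)$ with canonical lifts $\bm{\eta}^n = (\eta^n, (\eta^n)^{(2)})$ satisfying $\roughMetricHolder{\zeta}{\eta^n}{p}{J} \to 0$. For each smooth driver the problem reduces to a classical control problem, and standard viscosity-solution theory (e.g., \cite{bardi}) yields a unique solution $v^n$ of the classical HJB equation
\begin{equation*}
    dv^n + (b_\mu \cdot \nabla_\mu v^n + b_\Sigma : \nabla_\Sigma v^n)\,dt + \sup_{u \in U}\{h : \nabla_a v^n - w\}\,dt + (\lambda \cdot \nabla_\mu v^n - \psi)\,d\bm{\eta}^n = 0,
\end{equation*}
with boundary condition $v^n(0, \mu, \Sigma, a) = g(\mu, \Sigma, a)$. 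Invoking the analogue of Theorem~\ref{mainOC} for the filtering setup, one obtains $v^n \to v$ locally uniformly on compacts in $(t, \mu, \Sigma, a)$, so that $v$ solves~(\ref{filteringHJBequation})--(\ref{terminalConditionHJB}) in the viscosity sense of Definition~\ref{viscosity}.

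The main obstacle is verifying that the filtering coefficients satisfy the hypotheses needed to apply Theorem~\ref{mainOC} (essentially Assumption~\ref{assumption}) when only Assumption~\ref{filteringAssumptions} is available. This is delicate because $b_\mu$, $b_\Sigma$, and $\lambda$ are quadratic in $\gamma = (\alpha, \sigma, c, \rho)$ and hence fail the global $\mathrm{Lip}_b$ and $C_b^3$ requirements. The resolution is to exploit Corollary~\ref{filteringCorollary} to restrict attention to compact sets $K \subseteq \mathbb{R}^m \times \mathcal{S}_+^m \times \Gamma$, on which the processes $q$, $R$, $\gamma$ remain uniformly bounded with controlled $1$-variation. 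On such sets the coefficients are effectively bounded and smooth, and Lemma~\ref{lemmaFilt} plays the role of Lemma~\ref{corollaryLemma}, absorbing the rough-integral contribution $\int_0^t \psi\, d\bm{\zeta}_s$ into the running cost $w$ and the penalty $g$, which preserves well-posedness of the value function along the approximating sequence. Once this local reduction is in place, the continuity estimate established in the proof of Theorem~\ref{mainOC} (together with Theorem~\ref{theorem:RDETheorem2} to handle convergence of the state trajectories and rough integrals) yields $v^n \to v$ locally uniformly, completing the derivation.
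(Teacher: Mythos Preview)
Your proposal is correct and follows essentially the same route as the paper: approximate the geometric rough path $\bm{\zeta}$ by canonical lifts $\bm{\eta}^n$ of smooth paths via Stone--Weierstrass, write down the classical HJB equation for each smooth driver, and interpret the rough HJB equation~(\ref{filteringHJBequation}) as the limit. Indeed, the paper does not give a separate proof block for Theorem~\ref{filteringHJBtheorem}; the argument is the single sentence preceding the statement (``As in \S\ref{roughHJBsection}, we approximate \ldots and take the limit''), exactly paralleling the one-line proof of Theorem~\ref{roughHJB}.

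Where you go further than the paper is in the second half of your proposal: the time-reversal to convert the backward problem into forward form, the localisation via Corollary~\ref{filteringCorollary} to cope with the failure of global $\mathrm{Lip}_b$/$C_b^3$ bounds on $b_\mu, b_\Sigma, \lambda$, and the locally uniform convergence $v^n \to v$. These are the ingredients the paper uses not here but in the \emph{subsequent} result, Theorem~\ref{theorem:filtering} (Uniqueness), where the viscosity-solution property in the sense of Definition~\ref{viscosity} is actually established. So your sketch effectively merges the content of Theorem~\ref{filteringHJBtheorem} and Theorem~\ref{theorem:filtering} into one argument; this is perfectly coherent, but be aware that for the present statement the paper treats the HJB equation as essentially definitional (the limit of the smooth HJBs), deferring the analytic work to the uniqueness theorem.
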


To ensure uniqueness of the solution to (\ref{filteringHJBequation}), one typically restricts attention to the class \(\mathcal{H}\) of value functions \(\Tilde{v}(t, \mu, \Sigma, a)\) that diverge to \(\pm\infty\) as any of the following occur:
\begin{itemize}
    \item \(|\mu| + \|\Sigma\| + \|a\| \to \infty\),
    \item \(\lambda_{\min}(\Sigma) \to 0\),
    \item \(\lambda_{\max}(\rho^\top \rho) \to 1\) when \(\rho\) is random.
\end{itemize}
Justification for this condition is discussed in \cite{ac}, and is beyond the scope of the present work.

We now present a result that is needed to establish the main theorem of this section.

\begin{theorem}[Young Integral]
\label{youngIntegral}
Let \(V\) and \(W\) be Banach spaces, and suppose \(1 \le p, q \le \infty\) satisfy \(\frac{1}{p} + \frac{1}{q} > 1\). If \(X \in \pvarSpace{p}{J}{V}\) and \(Y \in \pvarSpace{q}{J}{\bm{L}(V, W)}\), then for each \(t \in J\),
\begin{equation}
    \int_0^t Y_s\,dX_s = \lim_{|\mathcal{D}| \to 0} \sum_\mathcal{D} Y_{t_i} \left( X_{t_{i+1}} - X_{t_i} \right),
\end{equation}
and
\begin{equation}
\label{youngIneq}
    \pvar{\int_0^\cdot \left( Y_s - Y_0 \right) dX_s}{p}{J} \lesssim \pvar{Y}{q}{J} \pvar{X}{p}{J}.
\end{equation}
\end{theorem}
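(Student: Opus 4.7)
The plan is to establish existence of the Young integral via the classical Young pigeonhole argument, and then derive the \(p\)-variation bound as a consequence of the same pointwise estimate applied on arbitrary subintervals.

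First, for any partition \(\mathcal{D} = \{0 = t_0 < \cdots < t_n = t\}\), introduce the Riemann sum \(S(\mathcal{D}) \coloneqq \sum_{i=0}^{n-1} Y_{t_i}\, X_{t_i, t_{i+1}}\), and note the telescoping identity under removal of a single interior point:
\[
S(\mathcal{D}) - S(\mathcal{D} \setminus \{t_j\}) = (Y_{t_j} - Y_{t_{j-1}})\, X_{t_j, t_{j+1}}.
\]
Introducing the superadditive controls \(\omega_1(s,r) \coloneqq \pvar{X}{p}{[s,r]}^p\) and \(\omega_2(s,r) \coloneqq \pvar{Y}{q}{[s,r]}^q\), the sum \(\sum_{j=1}^{n-1}[\omega_2(t_{j-1}, t_j) + \omega_1(t_j, t_{j+1})]\) is controlled by \(2(\omega_1(0,t) + \omega_2(0,t))\), so by pigeonhole there exists an index \(j^*\) with each summand bounded by \(\tfrac{2}{n-1}(\omega_1(0,t) + \omega_2(0,t))\). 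Hölder's inequality then bounds the corresponding change in \(S(\mathcal{D})\) by a constant multiple of \((n-1)^{-(1/p+1/q)}\, \pvar{X}{p}{[0,t]}\, \pvar{Y}{q}{[0,t]}\).

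Next, iteratively remove points down to the trivial partition \(\{0, t\}\); the total error is controlled by \(\sum_{n \ge 2}(n-1)^{-(1/p+1/q)}\), which is summable \emph{precisely} because \(\tfrac{1}{p} + \tfrac{1}{q} > 1\) -- this is exactly where the hypothesis enters. The resulting uniform estimate
\[
\left|S(\mathcal{D}) - Y_0\, X_{0,t}\right| \lesssim \pvar{X}{p}{[0,t]}\, \pvar{Y}{q}{[0,t]},
\]
applied on common refinements, yields a Cauchy criterion, so the limit \(\int_0^t Y_s\,dX_s\) exists and is independent of the partition sequence. Running the same argument on any subinterval gives the local remainder bound \(\left|\int_s^t Y_r\,dX_r - Y_s\, X_{s,t}\right| \lesssim \pvar{X}{p}{[s,t]}\, \pvar{Y}{q}{[s,t]}\).

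For the \(p\)-variation bound on \(I_t \coloneqq \int_0^t (Y_s - Y_0)\,dX_s\), decompose the increment as \(I_{s,t} = (Y_s - Y_0)\, X_{s,t} + r_{s,t}\), where \(|r_{s,t}| \lesssim \pvar{X}{p}{[s,t]}\, \pvar{Y}{q}{[s,t]}\). Raising to the \(p\)th power, summing over a partition, and using the uniform bound \(|Y_s - Y_0| \le \pvar{Y}{q}{J}\), the trivial bound \(\pvar{Y}{q}{[s,t]} \le \pvar{Y}{q}{J}\), and the superadditivity of \(\omega_1\), yields \(\pvar{I}{p}{J} \lesssim \pvar{Y}{q}{J}\, \pvar{X}{p}{J}\) after taking the supremum. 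The principal technical obstacle is executing the pigeonhole selection together with verifying the tail summability via the Hölder exponent pairing; once these are in hand, extending the pointwise estimate to a \(p\)-variation estimate is essentially routine.
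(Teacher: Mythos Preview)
The paper does not prove this result itself; immediately after stating it, the text reads ``A proof of this theorem can be found in \cite{lyons}.'' Your argument is precisely the classical Young pigeonhole construction that appears in that reference (and in Friz--Victoir, Friz--Hairer, etc.), and it is correct in outline and in its use of the hypothesis $\tfrac{1}{p}+\tfrac{1}{q}>1$ for summability.

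One cosmetic point worth tightening: the single-removal bound that actually drops out of your pigeonhole step is
\[
\Bigl(\tfrac{2}{\,n-1\,}\bigl(\omega_1(0,t)+\omega_2(0,t)\bigr)\Bigr)^{1/p+1/q},
\]
which is not literally a constant times $(n-1)^{-1/p-1/q}\,\pvar{X}{p}{[0,t]}\,\pvar{Y}{q}{[0,t]}$. The standard remedy is to first normalize to $\pvar{X}{p}{[0,t]}=\pvar{Y}{q}{[0,t]}=1$ by homogeneity, run the argument to obtain a universal constant, and then scale back; nothing in your sketch is affected. The passage from the uniform remainder estimate to the Cauchy criterion also relies on continuity of the control $\omega_1+\omega_2$ (hence of $X$ and $Y$), which is built into the paper's definition of $\pvarSpace{p}{J}{V}$ but is worth stating when you write it out in full.
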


A proof of this theorem can be found in \cite{lyons}. We conclude this section with the central result.

\begin{theorem}[Uniqueness]
\label{theorem:filtering}
The value function \(v\) defined in \textbf{Theorem~\ref{filteringHJBtheorem}} is a viscosity solution of the HJB equation in the sense of \textbf{Definition~\ref{viscosity}}. Moreover, the map
\[
\bm{\zeta} \mapsto v^{\bm{\zeta}}(t, \mu, \Sigma, a), \quad \bm{\zeta} \in \mathscr{C}^{0,p}_g
\]
is locally uniformly continuous with respect to both rough path metrics \(\roughMetricHolder{\cdot}{\cdot}{p}{J}\) and \(\roughMetricPvar{\cdot}{\cdot}{p}{J}\), locally uniformly in \((t, \mu, \Sigma, a)\).
\end{theorem}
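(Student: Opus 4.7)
The plan is to mirror the argument used in the proof of \textbf{Theorem~\ref{mainOC}}, adapted to the filtering setting, which has the additional state variables $R$ and $\gamma$. First I would establish local uniform continuity of $v^{\bm{\zeta}}$ with respect to $\bm{\zeta}$ in both rough path metrics, and then deduce the viscosity property by approximating $\bm{\zeta}$ by smooth rough paths $\bm{\eta}^n \to \bm{\zeta}$, for which the classical HJB equation derived in \textbf{Theorem~\ref{filteringHJBtheorem}} is solved uniquely by $v^{\bm{\eta}^n}$. The continuity bound then automatically promotes this convergence to the locally uniform convergence required by \textbf{Definition~\ref{viscosity}}.

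The second step proceeds as follows. Fix a compact set $K \subseteq \mathbb{R}^m \times \mathcal{S}_+^m \times \Gamma$ and let $\bm{\zeta}, \bm{\eta} \in \mathscr{C}^{0,p}_g$ satisfy $\roughholder{\zeta}{1}{p}{J}, \roughholder{\eta}{1}{p}{J} \le M$. By \textbf{Corollary~\ref{filteringCorollary}}, the infimum defining $v^{\bm{\zeta}}(t, \mu, \Sigma, a)$ over $(t, \mu, \Sigma, a) \in J \times K$ can be restricted to controls $u$ such that $\|q\|_\infty$, $\|R\|_\infty$, $\|\gamma\|_\infty$, $\pvar{R}{1}{[0,t]}$ and $\pvar{\gamma}{1}{[0,t]}$ are all bounded by a constant $L = L(M, K)$. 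A key simplification is that the $R$- and $\gamma$-dynamics are ODEs independent of $\bm{\zeta}$, so for each admissible $u$ we have $R^{\bm{\zeta}} = R^{\bm{\eta}}$ and $\gamma^{\bm{\zeta}} = \gamma^{\bm{\eta}}$. Applying \textbf{Theorem~\ref{theorem:RDETheorem2}} to the $q$-RDE then yields
\[
    \pvar{q^{\bm{\zeta}} - q^{\bm{\eta}}}{p}{[0,t]} \lesssim \roughMetricPvar{\zeta}{\eta}{p}{J},
\]
together with the analogous bound on the difference of the rough integrals of $\psi(q, \gamma) = -cq$ against $\bm{\zeta}$ and $\bm{\eta}$.

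Combining these estimates with the local Lipschitz continuity of $w$ and $g$ guaranteed by \textbf{Assumption~\ref{filteringAssumptions}}, I expect
\[
    \left|v^{\bm{\zeta}}(t, \mu, \Sigma, a) - v^{\bm{\eta}}(t, \mu, \Sigma, a)\right|
    \lesssim \roughMetricPvar{\zeta}{\eta}{p}{J}
    \lesssim \roughMetricHolder{\zeta}{\eta}{p}{J},
\]
uniformly in $(t, \mu, \Sigma, a) \in J \times K$. Taking a smooth sequence $\bm{\eta}^n \to \bm{\zeta}$ in $\varrho_{\frac{1}{p}\text{-H\"ol}, J}$ and combining this bound with the fact that each $v^{\bm{\eta}^n}$ solves the classical HJB equation driven by $\bm{\eta}^n$ verifies \textbf{Definition~\ref{viscosity}}.

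The main obstacle I anticipate is the coupling among $q$, $R$, and $\gamma$ in the $q$-dynamics: although $R$ and $\gamma$ are pathwise independent of $\bm{\zeta}$, the coefficients $b_\mu(q, R, \gamma)$ and $\lambda(R, \gamma) = Rc^\top + \sigma\rho$ are only locally Lipschitz and locally $C^3_b$ because $\Gamma$ and $\mathcal{S}_+^m$ are noncompact in the underlying problem. Consequently, the global hypotheses of \textbf{Theorem~\ref{theorem:RDETheorem2}} do not apply directly, and a careful localization step -- replacing the coefficients by smooth compactly supported extensions that agree with the originals on the $L$-ball produced by \textbf{Corollary~\ref{filteringCorollary}} -- is needed before the stability estimate can be invoked. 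Making this localization compatible with the subsequent smooth approximation $\bm{\eta}^n \to \bm{\zeta}$, so that the bounds $L$ can be chosen independently of $n$, is the most delicate part of the argument.
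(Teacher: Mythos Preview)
Your overall strategy---restrict to a compact set, invoke \textbf{Corollary~\ref{filteringCorollary}} to obtain uniform a~priori bounds, show stability of the $q$-trajectory and of the rough $\psi$-integral under perturbation of $\bm{\zeta}$, and then pass to smooth approximations---matches the paper's proof exactly. The one place where the paper proceeds differently is in the estimate for $q^{\bm{\zeta}} - q^{\bm{\eta}}$. You propose to apply \textbf{Theorem~\ref{theorem:RDETheorem2}} directly to the $q$-RDE and then repair the failure of the global $C^3_b$ hypotheses by localizing the coefficients. The paper instead exploits the special structure of the filtering equation: since $\lambda(R,\gamma)=Rc^\top+\sigma\rho$ does not depend on $q$, and $R,\gamma$ are bounded-variation ODE solutions independent of the driver, the rough integral $\int\lambda(R,\gamma)\,d\bm{\zeta}$ is in fact a Young integral. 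The paper therefore invokes \textbf{Theorem~\ref{youngIntegral}} to bound $\bigl|\int_s^t(R_rc_r^\top+\sigma_r\rho_r)\,d(\eta-\zeta)_r\bigr|\lesssim\|\eta-\zeta\|_{p,J}$, and then a Gr\"onwall argument on the $ds$-part yields $\|q^{\bm{\eta}}-q^{\bm{\zeta}}\|_{\infty}\lesssim\|\eta-\zeta\|_{p,J}$ directly---no localization needed for this step. Your route would also succeed, but the paper's Young-integral shortcut is simpler and sidesteps precisely the obstacle you identified; it is worth noting that this simplification is available only because the diffusion coefficient in the filtering problem is $q$-independent, a feature absent in the general control setting of \textbf{Theorem~\ref{mainOC}}.
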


\begin{proof}
\normalsize
Let \(\bm{\eta} \in \mathscr{C}^{0,p}_g\) be another rough path such that, without loss of generality, \(\roughMetricHolder{\zeta}{\eta}{p}{J} \le 1\). Define
\[
M^* \coloneqq \max \left\{ \roughholder{\zeta}{1}{p}{J}, \roughholder{\eta}{1}{p}{J} \right\}.
\]
Let \(q^{\bm{\zeta}}, q^{\bm{\eta}}\) denote the state trajectories driven by \(\bm{\zeta}, \bm{\eta}\), respectively, and similarly define \(v^{\bm{\zeta}}, v^{\bm{\eta}}\). Let the bound \(M\) be as defined in \textbf{Corollary~\ref{filteringCorollary}}, applying to both \(\bm{\zeta}\) and \(\bm{\eta}\).

From inequality (\ref{youngIneq}), we have
\[
\left| \int_s^t \left( R_r c^\top_r + \sigma_r \rho_r \right)\,d(\eta - \zeta)_r \right| \lesssim \pvar{\eta - \zeta}{p}{J},
\]
which implies
\[
|q_s^{\bm{\eta}} - q_s^{\bm{\zeta}}| \lesssim \int_s^t |q_r^{\bm{\eta}} - q_r^{\bm{\zeta}}|\,dr + \pvar{\eta - \zeta}{p}{J}.
\]
By Grönwall’s inequality, this gives
\[
\pvar{q^{\bm{\eta}} - q^{\bm{\zeta}}}{\infty}{J} \lesssim \pvar{\eta - \zeta}{p}{J}.
\]

Applying \textbf{Theorem~\ref{theorem:RDETheorem2}}, we obtain
\[
\pvar{\int_0^\cdot \psi(q_s^{\bm{\eta}}, \gamma_s)\,d\bm{\eta}_s - \int_0^\cdot \psi(q_s^{\bm{\zeta}}, \gamma_s)\,d\bm{\zeta}_s}{p}{J} \lesssim \roughMetricPvar{\eta}{\zeta}{p}{J}.
\]

Now, for any terminal condition \((t, \mu, \Sigma, a) \in J \times K\), we estimate:
\begin{equation*}
\begin{split}
|v^{\bm{\eta}}(t, \mu, \Sigma, a) - v^{\bm{\zeta}}(t, \mu, \Sigma, a)| & \le \sup_{u} \Bigg| \int_0^t \left[ w(q_s^{\bm{\eta}}, R_s, \gamma_s, u_s) - w(q_s^{\bm{\zeta}}, R_s, \gamma_s, u_s) \right]\,ds \\
& \quad + \int_0^t \psi(q_s^{\bm{\eta}}, \gamma_s)\,d\bm{\eta}_s - \int_0^t \psi(q_s^{\bm{\zeta}}, \gamma_s)\,d\bm{\zeta}_s \\
& \quad + g(q_0^{\bm{\eta}}, R_0, \gamma_0) - g(q_0^{\bm{\zeta}}, R_0, \gamma_0) \Bigg| \\
& \lesssim \sup_u \left( \int_0^t |q_s^{\bm{\eta}} - q_s^{\bm{\zeta}}|\,ds + \roughMetricPvar{\eta}{\zeta}{p}{J} + |q_0^{\bm{\eta}} - q_0^{\bm{\zeta}}| \right) \\
& \lesssim \roughMetricPvar{\eta}{\zeta}{p}{J} \\
& \lesssim \roughMetricHolder{\eta}{\zeta}{p}{J}.
\end{split}
\end{equation*}

Here, the supremum is taken over all controls \(u\) such that \(\gamma\) satisfies the required assumptions. The remainder of the proof follows the same structure as that of \textbf{Theorem~\ref{mainOC}}.
\end{proof}

\section{Optimal Stopping with Signatures}
\label{sec:optimal_stopping_with_signatures}

Before exploring optimal stopping using path signatures, it is helpful to begin with a concrete example that illustrates its practical relevance. Continuing our focus on finance, we consider the setting of American option pricing -- a fundamental problem in the theory of financial derivatives.

An American \emph{call} option grants the holder the right, but not the obligation, to purchase an underlying asset at a predetermined price, known as the strike price, at any time before the option's expiration. For example, suppose an agent holds an American call option on a stock with a strike price of \$20. The agent may choose to exercise the option at any point before it expires. If the stock price remains at or below \$20, the option will not be exercised, as doing so would yield no gain. However, if the stock price rises above \$20, the agent can exercise the option, buy the stock at the lower strike price, and sell it at the market price to realize a profit.

Viewed from a different angle, suppose the agent is considering purchasing an American call option with a three-month expiry and a \$20 strike. A natural question arises: what is a fair price for this contract? One widely accepted formulation for determining the fair price is:
\begin{equation}
\label{example:american_option}
V = \sup_\tau \mathbb{E} \left[ \exp(-r \tau) \max(S_\tau - 20, 0) \right],
\end{equation}
where \(\tau\) ranges over all stopping times between the present and expiry, \(S_\tau\) denotes the stock price at time \(\tau\), and \(r\) is the risk-free interest rate. This expression represents the maximum expected discounted payoff from exercising the option, and thus reflects the option's fair value. Intuitively, an agent seeking a positive expected profit should not pay more than this amount for the contract.

More generally, our goal is to study problems of the form (\ref{example:american_option}), but where the underlying process \(S_t\) is modeled as a rough path. This introduces new mathematical challenges, particularly around defining and computing optimal stopping times in a pathwise setting. A key objective of this section is to develop the theoretical and computational tools needed to reformulate such problems in rough path space, and to demonstrate how these tools make the theory of optimal stopping both numerically tractable and practically applicable in the rough path framework.

\subsection{Stopped Rough Paths}
\label{subsec:stopped_rough_paths}

We begin with a definition. To model adaptedness to a filtration, we consider restrictions of rough paths defined on the interval $[0, T]$ to subintervals $[0, s] \subseteq [0, T]$. 

Recall from $\S$\ref{subsec:rough_path_signatures} that $\Hat{\Omega}^p_T$ denotes the closure of the canonical lifts $\Hat{\bb{X}}^{\le \lfloor p \rfloor}$ of piecewise $\bb{R}^d$-valued smooth paths $X$, where $\Hat{X}_t \coloneqq (t, X_t)$ for $0 \le t \le T$. If \[ \Hat{\bb{X}}_t = (t, X_t) \in \Hat{\Omega}_s^p, \] we extend $\Hat{\bb{X}}$ to an element of $\Hat{\Omega}_T^p$ by setting $\Tilde{\bb{X}}_t = (t, X_{s \land t})$. Formally:

\begin{definition}[Stopped Rough Paths]
\label{def:stopped_rough_paths}
    Let $0 < T < \infty$. The \emph{space of stopped rough paths} is defined as
    \begin{equation}
    \label{eq:stoppe_rough_paths}
        \Lambda_T^p \coloneqq \bigcup_{t \in [0, T]} \Hat{\Omega}_t^p,
    \end{equation}
    where each $\Hat{\bb{X}} \in \Hat{\Omega}_t^p$ is extended to an element of $\Hat{\Omega}_T^p$ as described above.

    We equip $\Lambda_T^p$ with the metric
    \begin{equation}
    \label{eq:stopped_rough_path_metric}
        d(\hat{\bb{X}}|_{[0,t]}, \hat{\bb{Y}}|_{[0,s]}) \coloneqq d_{p\text{-var};[0,t]}(\bb{X}|_{[0,t]}, \Tilde{\bb{Y}}|_{[0,t]}) + |t - s|,
    \end{equation}
    where we assume $s \le t$.
\end{definition}

\subsection{Randomized Stopping Times}
\label{subsec:randomized_stopping_times}

Since we will reformulate the optimal stopping problem in terms of rough path signatures, we require a lemma showing that stopping times with respect to the filtration generated by a rough path can be represented as measurable functions of the rough path.

The proof of the following lemma can be found in the appendix:
\begin{lemma}
\label{lemma:indicator}
    Let $\Hat{\bb{X}}$ be a stochastic process in $\Hat{\Omega}^p_T$, and let $\bb{F} \coloneqq (\bb{F}_t)_{0 \le t \le T}$ be the filtration generated by $\Hat{\bb{X}}$. If $\tau$ is a stopping time with respect to $\bb{F}$, then there exists a Borel measurable function $\theta \colon \Lambda_T^p \to \{0, 1\}$ such that $\theta(\Hat{\bb{X}}(\omega)|_{[0,t]}) = \mathds{1}_{\{\tau(\omega) \le t\}}$.
\end{lemma}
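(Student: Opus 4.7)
The plan is to apply the Doob--Dynkin factorization lemma, combined with a gluing argument over the continuous family of terminal times $t \in [0,T]$. The key structural fact to exploit is that $\tau$ being a stopping time with respect to $\bb{F}$ means that, for every fixed $t$, the event $\{\tau \le t\}$ lies in $\bb{F}_t = \sigma(\Hat{\bb{X}}_s : s \le t)$, which coincides with the $\sigma$-algebra generated by the restriction $\Hat{\bb{X}}|_{[0,t]}$ viewed as a random element of the Polish space $\Hat{\Omega}_t^p$.

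First, for each fixed $t \in [0,T]$, I would invoke Doob--Dynkin to produce a Borel measurable function $\theta_t \colon \Hat{\Omega}_t^p \to \{0,1\}$ such that $\theta_t(\Hat{\bb{X}}(\omega)|_{[0,t]}) = \mathds{1}_{\{\tau(\omega) \le t\}}$. Since each stopped path $\xi \in \Lambda_T^p$ belongs to exactly one slice $\Hat{\Omega}_{t(\xi)}^p$ (its terminal time $t(\xi)$ is recoverable from the time-coordinate $\Hat{X}_t = (t,X_t)$), the natural candidate is $\theta(\xi) \coloneqq \theta_{t(\xi)}(\xi)$, which is well defined pointwise and clearly satisfies the required identity on each slice.

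Verifying that this glued $\theta$ is genuinely Borel measurable on the full space $\Lambda_T^p$ is the main obstacle. The difficulty is that the family $(\theta_t)_{t \in [0,T]}$ is uncountable, and the Doob--Dynkin construction determines each $\theta_t$ only up to a $\bb{P}$-null set, with no a priori guarantee of joint measurability in $t$. The cleanest way to resolve this is to work with the joint map $\Psi \colon \Omega \times [0,T] \to \Lambda_T^p$, $\Psi(\omega,t) \coloneqq \Hat{\bb{X}}(\omega)|_{[0,t]}$, which is jointly Borel because stopping is continuous in $t$ with respect to the metric of Definition~\ref{def:stopped_rough_paths} and $\Hat{\bb{X}}$ is measurable in $\omega$. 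The stopping-time property ensures that $\mathds{1}_{\{\tau(\omega) \le t\}}$ is constant on the fibers of $\Psi$, and hence $\sigma(\Psi)$-measurable. Since $\Lambda_T^p$ is a Polish space with a countably generated Borel $\sigma$-algebra, Blackwell's theorem then yields a single Borel function $\theta \colon \Lambda_T^p \to \{0,1\}$ with $\theta \circ \Psi = \mathds{1}_{\{\tau \le \cdot\}}$, which is precisely the required statement. The bulk of any careful write-up would be spent verifying these hypotheses -- Polishness of each slice and the measurability of $\Psi$ -- rather than the factorization itself.
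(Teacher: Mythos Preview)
Your strategy is sound and takes a genuinely different route from the paper's. The paper also begins with Doob--Dynkin at each fixed $t$ to obtain Borel sets $A_t \subseteq \Hat{\Omega}_t^p$, but rather than invoking descriptive set theory it constructs the joint function by hand: it passes to $[0,T] \times \Hat{\Omega}_T^p$ via the extension map $\phi(\Hat{\bb{X}}|_{[0,t]}) = (t,\Tilde{\Hat{\bb{X}}})$, forms dyadic approximants $f_n(t,\Hat{\bb{X}}) = \sum_k \mathds{1}_{A_{t_k^n}}(\Hat{\bb{X}}|_{[0,t_k^n]})\,\mathds{1}_{I_k^n}(t)$ on the grid $t_k^n = kT/2^n$, and sets $\theta = (\limsup_m \limsup_n f_n(\cdot+m^{-1},\cdot)) \circ \phi$. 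Each $f_n$ is a finite sum of products of one-variable Borel functions, hence jointly Borel; the iterated $\limsup$ preserves measurability; and right-continuity of $t \mapsto \mathds{1}_{\{\tau \le t\}}$ identifies the limit. This is elementary but opaque; your factorization argument is cleaner conceptually.

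There is, however, one soft spot in your sketch. The inference ``constant on the fibers of $\Psi$, and hence $\sigma(\Psi)$-measurable'' is precisely the content of Blackwell's theorem, but that theorem is a hypothesis on the \emph{domain} $(\Omega \times [0,T],\, \mathcal{F} \otimes \mathcal{B})$, not on the codomain $\Lambda_T^p$; Polishness of $\Lambda_T^p$ is what makes the subsequent Doob--Dynkin step work, not this one. Nothing in the lemma guarantees that $(\Omega,\mathcal{F})$ is standard Borel. The fix is easy: since $\tau$ is $\bb{F}_T$-measurable, ordinary Doob--Dynkin already gives $\tau = \tilde\tau(\Hat{\bb{X}})$ for some Borel $\tilde\tau$ on the Polish space $\Hat{\Omega}_T^p$, and you can then run your entire argument on the canonical space $\Hat{\Omega}_T^p \times [0,T]$, which \emph{is} Blackwell. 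With that patch your proof goes through; the paper's dyadic construction simply sidesteps the issue by never needing the implication ``constant on fibers $\Rightarrow$ measurable''.
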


Let us now introduce some notation. Throughout the following, suppose that $(\Omega, \mathcal{F}, P)$ is a probability space such that:
\begin{enumerate}
    \item $\Hat{\bb{X}}$ is a stochastic process in $\Hat{\Omega}_T^p$;
    \item $\bb{F} = (\bb{F}_t)$ is the filtration given by $\bb{F}_t \coloneqq \sigma(\Hat{\bb{X}}|_{[0,s]} : 0 \le s \le t) \subseteq \mathcal{F}$;
    \item $Y \colon [0,T] \times \Omega \to \bb{R}$ is a continuous-time stochastic process adapted to $\bb{F}$;
    \item $\mathcal{S}$ denotes the set of all $\bb{F}$-stopping times.
\end{enumerate}

The goal is to solve an optimal stopping problem for the process $Y$, where the stopping time is defined in terms of the rough path $\Hat{\bb{X}}$.

\begin{definition}[Randomized Stopping Times]
\label{def:randomized_stopping_times}
    Let $\mathcal{T} \coloneqq C(\Lambda_T^p,\bb{R})$ denote the set of all continuous stopping policies, and let $Z$ be a non-negative random variable independent of $\Hat{\bb{X}}$, with $P(Z = 0) = 0$. Given $\theta \in \mathcal{T}$, define the \emph{randomized stopping time}
    \begin{equation}
    \label{eq:def:randomized_stopping_time}
        \tau_\theta^r \coloneqq \inf \left\{ t \ge 0 \colon \int_0^{t \land T} \theta(\Hat{\bb{X}}|_{[0,s]})^2 \, ds \ge Z \right\},
    \end{equation}
    with the convention that $\inf \emptyset = +\infty$.
\end{definition}

We square the policy $\theta$ in equation \eqref{eq:def:randomized_stopping_time} to ensure the integrand is non-negative. To show that classical stopping times can be approximated by randomized ones, we need the following result, which can be found in \cite{wisn}.

\begin{lemma}[Continuous Approximation]
\label{lemma:continuous_approximation}
    Let $\mu$ be a finite Borel measure on a metric space $X$, and let $f \colon X \to \bb{R}$ be a $\mu$-measurable function. Then there exists a sequence of continuous functions $f_n \colon X \to \bb{R}$ such that $f_n \to f$ almost everywhere.
\end{lemma}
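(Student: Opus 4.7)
The plan is to reduce the claim to an application of Lusin's theorem combined with the Tietze extension theorem, and then upgrade the resulting approximation to almost-everywhere convergence via a standard Borel--Cantelli argument.

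First I would exploit the closed regularity of finite Borel measures on a metric space: for every Borel set $B \subseteq X$ one has $\mu(B) = \sup \{ \mu(F) : F \subseteq B,\ F \text{ closed} \}$. This is the ingredient needed to invoke Lusin's theorem in the following form: for each $n \in \mathbb{N}_+$ there exists a closed set $K_n \subseteq X$ with $\mu(X \setminus K_n) < 2^{-n}$ such that the restriction $f|_{K_n}$ is continuous. Since $X$ is a metric, and hence normal, topological space, and $f|_{K_n}$ is a continuous real-valued function on the closed subspace $K_n$, the Tietze extension theorem produces a continuous extension $f_n \colon X \to \mathbb{R}$ with $f_n|_{K_n} = f|_{K_n}$. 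The sequence $(f_n)_{n \in \mathbb{N}_+}$ is then the candidate approximation.

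For the convergence statement, note that $\sum_{n=1}^\infty \mu(X \setminus K_n) < \sum_{n=1}^\infty 2^{-n} = 1 < +\infty$, so the Borel--Cantelli lemma gives $\mu(\limsup_n (X \setminus K_n)) = 0$. For every point $x$ outside this null set, there exists $N(x) \in \mathbb{N}_+$ such that $x \in K_n$ for all $n \ge N(x)$, and therefore $f_n(x) = f(x)$ for every such $n$. In particular $f_n(x) \to f(x)$, which delivers almost-everywhere convergence as required.

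The main obstacle I anticipate is justifying the applicability of Lusin's theorem in the stated generality, since the measure $\mu$ is only assumed to be a finite Borel measure on a metric space, without separability or Radon hypotheses. The key point is the closed-regularity property: the collection of Borel sets $B$ satisfying the inner-regularity identity above is closed under countable unions and complements, and it contains all closed sets (trivially) and all open sets (by writing an open set as a countable union of closed sets using the distance function to the complement, together with the monotone convergence theorem applied to $\mu$). Hence this collection is a $\sigma$-algebra containing all closed and open sets, so it is the full Borel $\sigma$-algebra. With closed regularity established, Lusin's theorem follows by approximating $f$ by simple functions and extracting closed sets on which each simple function, and ultimately $f$ itself, is continuous; the remaining Tietze and Borel--Cantelli steps are then routine.
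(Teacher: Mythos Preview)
The paper does not supply its own proof of this lemma; it simply states the result and cites an external reference. Your argument is correct and follows the standard route: closed inner regularity of finite Borel measures on metric spaces justifies Lusin's theorem in this generality, Tietze extension (valid for real-valued continuous functions on closed subsets of a normal space, bounded or not) produces the continuous approximants, and the Borel--Cantelli step upgrades the approximation from ``in measure'' to almost-everywhere convergence. The one point worth recording explicitly is that a $\mu$-measurable function agrees with a Borel-measurable one off a $\mu$-null set, so the reduction to Borel-measurable $f$ before invoking Lusin is harmless; otherwise the argument is complete as written.
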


We now establish the approximation result:

\begin{proposition}[Stopping Time Approximation]
\label{prop:stopping_time_approximation}
    For every $\tau \in \mathcal{S}$, there exists a sequence $\theta_n \in \mathcal{T}$, $n \in \bb{N}$, such that $\tau_{\theta_n}^r \to \tau$ almost surely. Moreover, if $\bb{E}||Y||_\infty < \infty$, then
    \begin{equation}
    \label{eq:proposition:EY_randomized}
        \sup_{\theta \in \mathcal{T}} \bb{E}[Y_{\tau_\theta^r \land T}] = \sup_{\tau \in \mathcal{S}} \bb{E}[Y_{\tau \land T}].
    \end{equation}
\end{proposition}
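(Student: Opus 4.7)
The plan is to proceed in two stages: first establish an almost-sure approximation $\tau_{\theta_n}^r \to \tau$ for any $\tau \in \mathcal{S}$, and then deduce the equality of suprema from this together with a conditioning argument on the independent randomizer $Z$. Given $\tau \in \mathcal{S}$, I would first invoke \textbf{Lemma~\ref{lemma:indicator}} to produce a Borel measurable $\vartheta \colon \Lambda_T^p \to \{0, 1\}$ with $\vartheta(\Hat{\bb{X}}(\omega)|_{[0, t]}) = \mathds{1}_{\{\tau(\omega) \le t\}}$. To convert $\vartheta$ into a continuous object, I would apply \textbf{Lemma~\ref{lemma:continuous_approximation}} with respect to the finite Borel measure $\mu$ on $\Lambda_T^p$ defined as the pushforward of $\bb{P} \otimes \mathrm{Unif}[0, T]$ under $(\omega, s) \mapsto \Hat{\bb{X}}(\omega)|_{[0, s]}$. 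This yields continuous $\theta_n^\circ \colon \Lambda_T^p \to [0, 1]$ (truncating the approximants if necessary) with $\theta_n^\circ \to \vartheta$ $\mu$-almost everywhere; a Fubini argument then gives, for $\bb{P}$-a.e.\ $\omega$, that $\theta_n^\circ(\Hat{\bb{X}}(\omega)|_{[0, s]}) \to \mathds{1}_{\{\tau(\omega) \le s\}}$ for Lebesgue-a.e.\ $s \in [0, T]$. The candidate continuous policies are $\theta_n \coloneqq c_n \theta_n^\circ \in \mathcal{T}$ for a scale $c_n \to +\infty$ yet to be chosen.

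For $t > \tau(\omega)$, bounded convergence gives $\int_0^t \theta_n^\circ(\Hat{\bb{X}}|_{[0, s]})^2 \, ds \to (t - \tau(\omega))^+ > 0$, whence $\int_0^t \theta_n^2 \, ds = c_n^2 \int_0^t (\theta_n^\circ)^2 \, ds \to +\infty$, so that $\tau_{\theta_n}^r \le t$ for all sufficiently large $n$. For $t < \tau(\omega)$, the integral must remain below $Z > 0$; here I would upgrade the $\mu$-a.e.\ convergence to uniform convergence on sets of arbitrarily large $\bb{P}$-measure via Egorov's theorem, and then run a diagonal subsequence construction to choose $c_n \to +\infty$ slowly enough that $c_n^2 \int_0^t (\theta_n^\circ)^2 \, ds \to 0$ on $\{t < \tau\}$. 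Combined, these two estimates yield $\tau_{\theta_n}^r \to \tau$ $\bb{P}$-almost surely.

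For the equality of suprema, the direction $\sup_{\theta \in \mathcal{T}} \bb{E}[Y_{\tau_\theta^r \wedge T}] \ge \sup_{\tau \in \mathcal{S}} \bb{E}[Y_{\tau \wedge T}]$ follows from the approximation just described combined with dominated convergence (using $\bb{E}\|Y\|_\infty < \infty$ and the standard path-regularity of $Y$ needed to pass limits through $Y$). For the reverse inequality, the independence of $Z$ from $\bb{F}_T$ is the key: for each fixed $\theta \in \mathcal{T}$ and each deterministic level $z > 0$, the hitting time $\tau_\theta^r(z) \coloneqq \inf\{t \colon \int_0^{t \wedge T} \theta(\Hat{\bb{X}}|_{[0, s]})^2 \, ds \ge z\}$ is a genuine $\bb{F}$-stopping time in $\mathcal{S}$. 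Conditioning on $Z$ therefore gives $\bb{E}[Y_{\tau_\theta^r \wedge T} \mid Z = z] \le \sup_{\tau \in \mathcal{S}} \bb{E}[Y_{\tau \wedge T}]$ for $\bb{P}_Z$-a.e.\ $z$, and integrating against the law of $Z$ produces the matching bound.

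The main obstacle is the scaling step: $c_n$ must grow fast enough that the accumulated rate crosses level $Z$ shortly after $\tau$, yet slowly enough (relative to the path-dependent $\mu$-a.e.\ rate at which $\theta_n^\circ \to 0$ on $[0, \tau)$) that the accumulated rate stays below $Z$ before $\tau$. Reconciling these competing requirements is what forces the Fubini reduction, the Egorov strengthening, and the diagonal selection of the pair $(\theta_n^\circ, c_n)$. A secondary technicality is the path-regularity of $Y$ needed to justify the limit $Y_{\tau_{\theta_n}^r \wedge T} \to Y_{\tau \wedge T}$, for which one typically assumes $Y$ is at worst càdlàg with no jump at the limit time.
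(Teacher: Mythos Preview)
Your proposal is correct and tracks the paper's argument closely: both proofs invoke \textbf{Lemma~\ref{lemma:indicator}} to realise $\mathds{1}_{\{\tau\le t\}}$ as a measurable functional of the stopped path, then \textbf{Lemma~\ref{lemma:continuous_approximation}} (with respect to the pushforward of $\lambda\otimes P$) to get continuous $[0,1]$-valued approximants, then amplify so that $\int_0^t\theta_n^2\,ds$ diverges after $\tau$ and vanishes before, and finally obtain the two inequalities by dominated convergence and by disintegrating over the independent randomiser $Z$ (the paper conditions on $\Hat{\bb X}$ rather than on $Z$, which by Fubini is the same computation).

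The only genuine difference is the amplification step. You scale linearly, $\theta_n=c_n\theta_n^\circ$, which forces you to balance the growth of $c_n$ against the $\omega$-dependent rate at which $\int_0^{\tau}(\theta_n^\circ)^2\,ds\to 0$; this is precisely what drives your Egorov/diagonal manoeuvre. The paper instead takes $\theta_n=(2\tilde\theta_n)^n$, a purely pointwise device: once $\tilde\theta_n(\,\cdot\,)$ has entered $[0,\tfrac14]$ (resp.\ $[\tfrac34,1]$) at a given point, $(2\tilde\theta_n)^n$ is dominated by $(1/2)^n$ (resp.\ dominates $(3/2)^n$), so the separation of the two regimes is automatic and no uniformity or subsequence selection is needed. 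The paper's trick is therefore slicker and removes exactly the obstacle you flag as the hard part; your route is heavier on measure-theoretic bookkeeping but entirely sound and perhaps makes the dependence on $\lambda\otimes P$-a.e.\ convergence more transparent.
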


\begin{proof}
\normalsize
    Choose a sequence $\Tilde{\theta}_n \in \mathcal{T}$ such that $\Tilde{\theta}_n(\Hat{\bb{X}}|_{[0,t]}) \to \mathds{1}_{\{ \tau \le t \}}$ almost surely with respect to $\lambda \otimes P$, where $\lambda$ is Lebesgue measure. Without loss of generality, assume $0 \le \Tilde{\theta}_n \le 1$. Define $\theta_n = (2\Tilde{\theta}_n)^n$. Then
    \begin{equation*}
        \lim_{n \to \infty} \theta_n(\Hat{\bb{X}}|_{[0,t]}) = 
        \begin{dcases}
            +\infty & \text{if } t \ge \tau, \\
            0 & \text{if } t < \tau.
        \end{dcases}
    \end{equation*}
    Hence, $\tau_{\theta_n}^r \to \tau$ almost surely as $n \to \infty$.

    By the dominated convergence theorem,
    \[
        \sup_{\theta \in \mathcal{T}} \bb{E}[Y_{\tau_\theta^r \land T}] \ge \sup_{\tau \in \mathcal{S}} \bb{E}[Y_{\tau \land T}].
    \]

    To prove the reverse inequality, suppose $\theta \in \mathcal{T}$. Since $Z$ is independent of $\Hat{\bb{X}}$, we have
    \[
        \bb{E} [Y_{\tau_{\theta}^r \land T} | \Hat{\bb{X}}] = \int_0^\infty Y_{\tau_z \land T} \, dP_Z(z),
    \]
    where $P_Z$ is the law of $Z$ and
    \[
        \tau_z \coloneqq \inf \left\{ t \ge 0 \colon \int_0^{t \land T} \theta(\Hat{\bb{X}}|_{[0,s]})^2 \, ds \ge z \right\}.
    \]
    Taking expectations yields
    \begin{equation*}
    \begin{split}
        \bb{E}[Y_{\tau_{\theta}^r \land T}] 
        &= \bb{E}\left[\bb{E}[Y_{\tau_{\theta}^r \land T} \mid \Hat{\bb{X}}]\right] \\
        &= \int_0^\infty \bb{E}[Y_{\tau_z \land T}] \, dP_Z(z) \\
        &\le \sup_{\tau \in \mathcal{S}} \bb{E}[Y_{\tau \land T}].
    \end{split}
    \end{equation*}
\end{proof}

It turns out that the conditional expectation $\bb{E}[Y_{\tau^r_\theta \land S} \mid \Hat{\bb{X}}]$ can be expressed in a convenient integral form; the proof is provided in the appendix:

\begin{proposition}[Integral Representation]
\label{proposition:integral_representation}
    Suppose $S$ is an $\bb{F}$-stopping time. If $F_Z$ denotes the cumulative distribution function of $Z$, then
    \begin{equation*}
    \begin{split}
        \bb{E}[Y_{\tau^r_\theta \land S} \mid \Hat{\bb{X}}] &= \int_0^S Y_t \, d\Tilde{F}(t) + Y_S(1 - \Tilde{F}(S)) \\
        &= \int_0^S (1 - \Tilde{F}(t)) \, dY_t + Y_0,
    \end{split}
    \end{equation*}
    where
    \[
        \Tilde{F}(t) \coloneqq F_Z \left( \int_0^t \theta(\Hat{\bb{X}}|_{[0,s]})^2 \, ds \right).
    \]
\end{proposition}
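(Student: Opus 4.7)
The plan is to identify the conditional distribution of $\tau^r_\theta$ given $\hat{\bb{X}}$, use the $\bb{F}$-adaptedness of $Y$ and $S$ to reduce the expectation to a pathwise Stieltjes integral against this distribution, and then derive the second representation via integration by parts.

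First, I would set $A(t) \coloneqq \int_0^{t \land T} \theta(\hat{\bb{X}}|_{[0,s]})^2 \, ds$, which is $\sigma(\hat{\bb{X}})$-measurable and nondecreasing in $t$. The definition of $\tau^r_\theta$ gives the identity of events $\{\tau^r_\theta \le t\} = \{Z \le A(t)\}$. Since $Z$ is independent of $\hat{\bb{X}}$ (and hence of $A(t)$), the conditional distribution function of $\tau^r_\theta$ given $\hat{\bb{X}}$ is
\[
P(\tau^r_\theta \le t \mid \hat{\bb{X}}) = F_Z(A(t)) = \Tilde{F}(t).
\]

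Second, because $Y$ and $S$ are both $\bb{F}$-adapted, where $\bb{F}$ is the filtration generated by $\hat{\bb{X}}$, they are $\sigma(\hat{\bb{X}})$-measurable after conditioning on the entire path, so the only remaining randomness is that of $Z$. Splitting $Y_{\tau^r_\theta \land S} = Y_{\tau^r_\theta} \mathds{1}_{\{\tau^r_\theta \le S\}} + Y_S \mathds{1}_{\{\tau^r_\theta > S\}}$ and integrating against the conditional law of $\tau^r_\theta$ gives
\[
\bb{E}[Y_{\tau^r_\theta \land S} \mid \hat{\bb{X}}] = \int_0^S Y_t \, d\Tilde{F}(t) + Y_S\bigl(1 - \Tilde{F}(S)\bigr),
\]
which is the first claimed identity.

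Third, I would apply integration by parts to convert the Stieltjes integral into its dual form. Since $\Tilde{F}$ is monotone and therefore of finite $1$-variation, and $Y$ has at least finite $p$-variation with $p \ge 2$, the Young condition $\frac{1}{1} + \frac{1}{p} > 1$ is satisfied, so \textbf{Theorem~\ref{youngIntegral}} legitimizes
\[
\int_0^S Y_t \, d\Tilde{F}(t) = Y_S \Tilde{F}(S) - Y_0 \Tilde{F}(0) - \int_0^S \Tilde{F}(t) \, dY_t.
\]
Using $\Tilde{F}(0) = F_Z(0) = P(Z \le 0) = 0$ (since $P(Z = 0) = 0$), substitution yields $\bb{E}[Y_{\tau^r_\theta \land S} \mid \hat{\bb{X}}] = Y_S - \int_0^S \Tilde{F}(t) \, dY_t = Y_0 + \int_0^S (1 - \Tilde{F}(t)) \, dY_t$, completing the proof.

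The main obstacle is the rigorous interpretation of both Stieltjes integrals and the integration by parts step when $Y$ has only finite $p$-variation rather than being of bounded variation; this is precisely why the Young integral (and its complementarity with the $1$-variation of $\Tilde{F}$) needs to be invoked, rather than classical calculus.
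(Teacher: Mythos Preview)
Your proposal is correct and follows essentially the same approach as the paper: identify the conditional CDF of $\tau_\theta^r$ given $\hat{\bb{X}}$ via independence of $Z$, integrate $Y_{\tau_\theta^r \land S}$ against this law, and then apply integration by parts using $\Tilde{F}(0)=0$. The only cosmetic difference is that the paper first writes $\bb{E}[f(\tau_\theta^r)\mid\hat{\bb{X}}]=\int_0^T f(t)\,d\Tilde{F}(t)+f(\infty)(1-\Tilde{F}(T))$ for $f(t)=Y_{t\land S}$ and then splits the integral at $S$, whereas you decompose $Y_{\tau_\theta^r\land S}$ directly on $\{\tau_\theta^r\le S\}$ versus $\{\tau_\theta^r>S\}$; your explicit appeal to \textbf{Theorem~\ref{youngIntegral}} for the integration-by-parts step is slightly more careful than the paper, which simply asserts it.
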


\subsection{Linear Signature Stopping Policies}
\label{subsec:linear_signature_stopping_policies} 

In this section, we introduce and analyze linear signature stopping policies. We conclude by presenting the main theorem, which shows that the original optimal stopping problem can be effectively addressed using such policies.

\begin{definition}[Linear Signature Stopping Policies]
\label{def:linear_signature_stopping_policies}
    Define the space of \emph{linear signature stopping policies} as
    \begin{equation}
    \label{eq:def:linear_signature_stopping_policies}
        \mathcal{T}_{\mathrm{sig}} \coloneqq \left\{ \theta \in \mathcal{T} \colon \exists\, l \in T((\bb{R}^{1+d})^*)\ \text{such that } \theta(\Hat{\bb{X}}|_{[0,t]}) = \langle l, \Hat{\bb{X}}_{0,t}^{<\infty} \rangle,\ \forall\, \Hat{\bb{X}}|_{[0,t]} \in \Lambda_T^p \right\}.
    \end{equation}
\end{definition}

\begin{remark}[Induced Randomized Stopping Time]
\label{remark:induced_randomized_stopping_time}
    Every $l \in T((\bb{R}^{1+d})^*)$ defines a randomized stopping time via the stopping policy $\theta_l(\Hat{\bb{X}}|_{[0,t]}) \coloneqq \langle l, \Hat{\bb{X}}_{0,t}^{<\infty} \rangle$. Specifically,
    \begin{equation*}
    \begin{split}
        \tau_l^r & \coloneqq \tau_{\theta_l}^r \\
        &= \inf \left\{ t \ge 0 \colon \int_0^{t \land T} \langle l, \Hat{\bb{X}}_{0,s}^{<\infty} \rangle^2 \, ds \ge Z \right\}.
    \end{split}
    \end{equation*}
\end{remark}

The following lemma, which can be found in \cite{kalsi}, establishes the density of linear signature stopping policies in a suitable sense.

\begin{lemma}
\label{lemma:Tsig_dense}
    Let $P$ be a probability measure on $(\Hat{\Omega}_T^p, \mathcal{B}(\Hat{\Omega}_T^p))$, and fix $\varepsilon > 0$. Then there exists a compact set $\mathcal{K} \subset \Hat{\Omega}_T^p$ such that $P(\mathcal{K}) > 1 - \varepsilon$, and for every $\theta \in \mathcal{T}$, there exists a sequence $\theta_n \in \mathcal{T}_{\mathrm{sig}}$ satisfying
    \begin{equation*}
        \sup_{\Hat{\bb{X}} \in \mathcal{K},\ t \in [0,T]} \left| \theta_n(\Hat{\bb{X}}|_{[0,t]}) - \theta(\Hat{\bb{X}}|_{[0,t]}) \right| \to 0
    \end{equation*}
    as $n \to \infty$. That is, the class $\mathcal{T}_{\mathrm{sig}}$ is dense in $\mathcal{T}$ when restricted to the compact set $\mathcal{K}$.
\end{lemma}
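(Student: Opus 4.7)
The plan is to combine tightness of probability measures on Polish spaces with a Stone--Weierstrass argument applied to the algebra of linear functionals on signatures, exploiting the shuffle product identity for group-like elements.

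First I would use the fact that $\hat{\Omega}_T^p$ (equipped with the $p$-variation metric) is a Polish space, so the Borel probability measure $P$ is inner regular. This yields a compact set $\mathcal{K} \subset \hat{\Omega}_T^p$ with $P(\mathcal{K}) > 1 - \varepsilon$. The next step is to pass from $\mathcal{K}$ to the associated set of stopped rough paths $\tilde{\mathcal{K}} \coloneqq \{\hat{\bm{X}}|_{[0,t]} : \hat{\bm{X}} \in \mathcal{K},\ t \in [0,T]\} \subset \Lambda_T^p$ and verify that $\tilde{\mathcal{K}}$ is compact. This reduces to checking that the restriction map $(\hat{\bm{X}},t) \mapsto \hat{\bm{X}}|_{[0,t]}$ from the compact space $\mathcal{K} \times [0,T]$ into $(\Lambda_T^p, d)$ is continuous, which follows by unwinding the definition of the metric in \eqref{eq:stopped_rough_path_metric} together with the equicontinuity of elements of $\mathcal{K}$ in $p$-variation.

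Next I would apply the Stone--Weierstrass theorem on the compact Hausdorff space $\tilde{\mathcal{K}}$ to the subspace
\[
    \mathcal{A} \coloneqq \left\{ \hat{\bm{X}}|_{[0,t]} \mapsto \langle l, \hat{\bm{X}}^{<\infty}_{0,t}\rangle \ : \ l \in T((\bb{R}^{1+d})^*) \right\} \subseteq C(\tilde{\mathcal{K}},\bb{R}).
\]
Three properties must be verified. (i) $\mathcal{A}$ is a subalgebra: since signatures are group-like elements of $T((\bb{R}^{1+d}))$, the shuffle identity from \eqref{G} gives
\[
    \langle l_1, \hat{\bm{X}}^{<\infty}_{0,t}\rangle \cdot \langle l_2, \hat{\bm{X}}^{<\infty}_{0,t}\rangle = \langle l_1 \shuffle l_2, \hat{\bm{X}}^{<\infty}_{0,t}\rangle,
\]
so products of signature linear functionals remain in $\mathcal{A}$. (ii) $\mathcal{A}$ contains the constants, since $\pi_0(\hat{\bm{X}}^{<\infty}_{0,t}) = 1$. (iii) $\mathcal{A}$ separates points of $\tilde{\mathcal{K}}$: this is where the time-augmentation $\hat{X}_t = (t, X_t)$ is crucial, since the first coordinate is strictly monotone, and therefore the signature map on time-augmented paths is injective on the space of stopped paths (distinct stopped paths differ either in the terminal time, read off from $\pi_1$, or in the spatial path, which is uniquely determined by its signature once the time-parameterization is fixed). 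Granted these three conditions, Stone--Weierstrass yields density of $\mathcal{A}$ in $C(\tilde{\mathcal{K}},\bb{R})$, so any $\theta \in \mathcal{T} \subseteq C(\Lambda_T^p,\bb{R})$ restricted to $\tilde{\mathcal{K}}$ is a uniform limit of a sequence $\theta_n \in \mathcal{T}_{\mathrm{sig}}$.

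The main obstacle is step (iii), the separation of points, because signature injectivity is the non-trivial ingredient here. For ordinary $\bb{R}^d$-valued paths, signature uniqueness fails up to tree-like equivalence (Hambly--Lyons, Boedihardjo--Geng--Lyons--Qian), so one genuinely needs the augmentation by time to rule out such equivalences and to distinguish between restrictions to different endpoints. Once injectivity on $\tilde{\mathcal{K}}$ is established, the rest of the argument is standard. I would also pay some attention to verifying that the continuity of $\theta \in \mathcal{T}$ on $\Lambda_T^p$ really does restrict to continuity on $\tilde{\mathcal{K}}$ with the subspace topology, but this is routine.
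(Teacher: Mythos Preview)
The paper does not actually prove this lemma; it is stated with a citation to \cite{kalsi} and no argument is given. Your proposal is the standard proof of this type of result and is essentially what one finds in the cited literature: tightness on a Polish space to produce $\mathcal{K}$, compactness of the induced stopped-path set via continuity of the restriction map, and then Stone--Weierstrass using the shuffle identity for the algebra structure and signature injectivity on time-augmented paths for point separation. Your identification of step (iii) as the crux is exactly right, and your remark that the time-augmentation is what defeats tree-like equivalence is the correct reason the argument goes through.
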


Recall \textbf{Proposition \ref{prop:stopping_time_approximation}}, which established that stopping times can be approximated by randomized stopping policies. We now show that, under suitable conditions, this approximation can be achieved using linear signature stopping policies.

\begin{theorem}[Linear Signature Stopping Time Approximation]
\label{theorem:linear_signature_stopping_time_approximation}
    Suppose that $Z$ has a continuous density $h$ and that $\bb{E}||Y||_\infty < \infty$. Then,
    \begin{equation}
    \label{eq:linear_signature_stopping_time_approximation}
    \begin{split}
        \sup_{l \in T((\bb{R}^{1+d})^*)} \bb{E}[ Y_{\tau_l^r \land T} ] &= \sup_{\theta \in \mathcal{T}} \bb{E}[ Y_{\tau_\theta^r \land T} ] \\
        &= \sup_{\theta \in \mathcal{T}_{\mathrm{sig}}} \bb{E}[ Y_{\tau_\theta^r \land T} ].
    \end{split}
    \end{equation}
\end{theorem}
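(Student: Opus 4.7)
The theorem asserts two equalities. The first,
\[
\sup_{l \in T((\bb{R}^{1+d})^*)} \bb{E}[Y_{\tau_l^r \land T}] = \sup_{\theta \in \mathcal{T}_{\mathrm{sig}}} \bb{E}[Y_{\tau_\theta^r \land T}],
\]
is immediate from \textbf{Definition~\ref{def:linear_signature_stopping_policies}} together with \textbf{Remark~\ref{remark:induced_randomized_stopping_time}}: the assignment $l \mapsto \theta_l$, where $\theta_l(\Hat{\bb{X}}|_{[0,t]}) \coloneqq \langle l, \Hat{\bb{X}}_{0,t}^{<\infty} \rangle$, maps $T((\bb{R}^{1+d})^*)$ onto $\mathcal{T}_{\mathrm{sig}}$ by definition, and the induced randomized stopping times coincide. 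The inclusion $\mathcal{T}_{\mathrm{sig}} \subseteq \mathcal{T}$ gives the trivial bound $\sup_{\mathcal{T}_{\mathrm{sig}}} \le \sup_{\mathcal{T}}$, so the substance of the theorem lies in the reverse inequality, which I would establish by approximation.

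Fix $\theta \in \mathcal{T}$ and $\varepsilon > 0$. By \textbf{Lemma~\ref{lemma:Tsig_dense}}, there exists a compact set $\mathcal{K} \subset \Hat{\Omega}^p_T$ with $P(\mathcal{K}) > 1 - \varepsilon$ and a sequence $\theta_n \in \mathcal{T}_{\mathrm{sig}}$ satisfying $\theta_n \to \theta$ uniformly on $\mathcal{K} \times [0,T]$. Since $\theta$ is continuous and $\mathcal{K}$ is compact, $\theta$ (and eventually the $\theta_n$) is uniformly bounded on this set, so the primitives $t \mapsto \int_0^t \theta_n(\Hat{\bb{X}}|_{[0,s]})^2 \, ds$ converge uniformly in $t$ to the $\theta$-analogue for each $\Hat{\bb{X}} \in \mathcal{K}$. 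Combined with the continuity of $F_Z$ -- guaranteed by the assumed continuous density $h$ -- this yields uniform convergence $\Tilde{F}_n \to \Tilde{F}$ on $[0,T]$ for every $\Hat{\bb{X}} \in \mathcal{K}$, where $\Tilde{F}_n$ and $\Tilde{F}$ are the conditional CDFs of $\tau_{\theta_n}^r$ and $\tau_\theta^r$ given $\Hat{\bb{X}}$.

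Next, I would invoke \textbf{Proposition~\ref{proposition:integral_representation}} with $S = T$ to write
\[
\bb{E}[Y_{\tau_{\theta_n}^r \land T} \mid \Hat{\bb{X}}] = \int_0^T Y_t \, d\Tilde{F}_n(t) + Y_T (1 - \Tilde{F}_n(T)),
\]
and the analogous identity for $\theta$. For each $\omega$ with $\Hat{\bb{X}}(\omega) \in \mathcal{K}$, the $d\Tilde{F}_n$ are sub-probability measures on $[0,T]$ converging weakly to $d\Tilde{F}$ by uniform convergence of the CDFs, so a Helly-Bray / dominated-convergence argument -- controlled by $|Y_t| \le \|Y\|_\infty$ -- produces pointwise convergence of the conditional expectations on $\mathcal{K}$; alternatively, the equivalent representation $\int_0^T (1 - \Tilde{F}_n) \, dY_t + Y_0$ reduces the argument to passing uniformly convergent integrands through a fixed integrator.

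To conclude, I would decompose
\[
\bb{E}\bigl[Y_{\tau_{\theta_n}^r \land T}\bigr] - \bb{E}\bigl[Y_{\tau_\theta^r \land T}\bigr] = \bb{E}\bigl[\Delta_n \mathds{1}_{\mathcal{K}}\bigr] + \bb{E}\bigl[\Delta_n \mathds{1}_{\mathcal{K}^c}\bigr],
\]
where $\Delta_n$ denotes the difference of the corresponding conditional expectations. The first term vanishes as $n \to \infty$ by dominated convergence (with dominating function $2\|Y\|_\infty \in L^1$), while the second is controlled by $2\bb{E}[\|Y\|_\infty \mathds{1}_{\mathcal{K}^c}]$, which is made arbitrarily small by integrability of $\|Y\|_\infty$ together with $P(\mathcal{K}^c) < \varepsilon$. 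Letting $n \to \infty$ followed by $\varepsilon \to 0$ yields the reverse inequality. The main obstacle is the passage to the limit under the conditional expectation: selecting the appropriate form of \textbf{Proposition~\ref{proposition:integral_representation}} to circumvent any irregularity in $Y$, and ensuring that the convergence is uniform enough to interact well with the outer expectation, is the delicate step of the argument.
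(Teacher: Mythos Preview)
Your proposal is correct and follows essentially the same architecture as the paper: reduce to the nontrivial inequality $\sup_{\mathcal{T}} \le \sup_{\mathcal{T}_{\mathrm{sig}}}$, invoke \textbf{Lemma~\ref{lemma:Tsig_dense}} to obtain a compact set $\mathcal{K}$ and an approximating sequence $\theta_n \in \mathcal{T}_{\mathrm{sig}}$, use \textbf{Proposition~\ref{proposition:integral_representation}} to express conditional expectations via $\Tilde{F}_n$ and $\Tilde{F}$, then split over $\mathcal{K}$ and $\mathcal{K}^c$ and control each piece. The one point of divergence is the handling of the integral term $\int_0^T Y_t\, d\Tilde{F}_n(t)$: the paper exploits the hypothesis that $Z$ has a \emph{continuous} density $h$ explicitly by writing $d\Tilde{F}_n(t) = \theta_n(\Hat{\bb{X}}|_{[0,t]})^2\, h\bigl(\int_0^t \theta_n^2\, ds\bigr)\, dt$ and applying dominated convergence against Lebesgue measure, whereas you propose a weak-convergence (Helly--Bray) argument using only continuity of $F_Z$. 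Your route is slightly more abstract and in fact uses less than the stated hypothesis (continuity of $F_Z$ would suffice, not continuity of $h$), while the paper's route makes the role of the density assumption more transparent; both reach the same conclusion.
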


\begin{proof}
\normalsize
    It suffices to prove that
    \[
        \sup_{\theta \in \mathcal{T}} \bb{E}[ Y_{\tau_\theta^r \land T} ] \le \sup_{\theta \in \mathcal{T}_{\mathrm{sig}}} \bb{E}[ Y_{\tau_\theta^r \land T} ],
    \]
    since the reverse inequality holds trivially by the definition of $\mathcal{T}_{\mathrm{sig}}$. Fix $\theta \in \mathcal{T}$. By \textbf{Lemma \ref{lemma:Tsig_dense}}, for any $\varepsilon > 0$, there exists a compact set $\mathcal{K}$ with $P(\mathcal{K}) > 1 - \varepsilon$, and a sequence $\theta_n \in \mathcal{T}_{\mathrm{sig}}$ such that
    \begin{equation}
    \label{eq:limiting_case}
        \lim_{n \to \infty} \sup_{\bb{X} \in \mathcal{K},\, t \in [0,T]} \left| \theta_n(\bb{X}|_{[0,t]}) - \theta(\bb{X}|_{[0,t]}) \right| = 0.
    \end{equation}

    Define the functions
    \[
        \Tilde{F}_n(t) \coloneqq F_Z\left( \int_0^t \theta_n(\Hat{\bb{X}}|_{[0,s]})^2\, ds \right), \quad \Tilde{F}(t) \coloneqq F_Z\left( \int_0^t \theta(\Hat{\bb{X}}|_{[0,s]})^2\, ds \right).
    \]
    Since $F_Z$ is continuous and hence uniformly continuous on compact sets, we have
    \begin{equation*}
    \begin{split}
        \left| \bb{E}[Y_T(1 - \Tilde{F}_n(T)); A] - \bb{E}[Y_T(1 - \Tilde{F}(T)); A] \right| &\le \bb{E}\left[ |Y_T| \cdot |\Tilde{F}_n(T) - \Tilde{F}(T)|; A \right] \\
        &\le \bb{E}[|Y_T|] \cdot \sup_{\bb{X} \in \mathcal{K}} |\Tilde{F}_n(T) - \Tilde{F}(T)| ]] \\
        & \to 0 \\
    \end{split}
    \end{equation*}
    as $n \to \infty$.

    Next, observe that
    \[
        \sup_{n \in \bb{N}} \sup_{\bb{X} \in \mathcal{K},\, t \in [0,T]} \left| \theta_n(\bb{X}|_{[0,t]}) \right| < \infty,
    \]
    and since the values lie in a compact set, the convergence
    \[
        \lim_{n \to \infty} \sup_{\bb{X} \in \mathcal{K},\, t \in [0,T]} \left| \theta_n(\bb{X}|_{[0,t]})^2 - \theta(\bb{X}|_{[0,t]})^2 \right| = 0
    \]
    follows by dominated convergence. Consequently,
    \[
        \lim_{n \to \infty} \sup_{\bb{X} \in \mathcal{K}} \left| \int_0^T \theta_n(\bb{X}|_{[0,t]})^2\, ds - \int_0^T \theta(\bb{X}|_{[0,t]})^2\, ds \right| = 0.
    \]

    For the complement $A^c = \mathcal{K}^c$, we simply note that
    \[
        \left| \bb{E}[Y_T(1 - \Tilde{F}_n(T)); A^c] - \bb{E}[Y_T(1 - \Tilde{F}(T)); A^c] \right| \le 2 \bb{E}|Y_T|,
    \]
    which can be made arbitrarily small by choosing $\mathcal{K}$ appropriately.

    Finally, we compare the integral terms:
    \begin{equation*}
    \begin{split}
        &\left| \bb{E} \left[ \int_0^T Y_t\, d\Tilde{F}_n(t) - \int_0^T Y_t\, d\Tilde{F}(t) \right] \right| \\
        &= \Bigg| \bb{E} \left[ \int_0^T Y_t\, \theta_n(\Hat{\bb{X}}|_{[0,t]})^2\, h\left( \int_0^t \theta_n(\Hat{\bb{X}}|_{[0,s]})^2 ds \right) dt \right] \\
        &\quad - \bb{E} \left[ \int_0^T Y_t\, \theta(\Hat{\bb{X}}|_{[0,t]})^2\, h\left( \int_0^t \theta(\Hat{\bb{X}}|_{[0,s]})^2 ds \right) dt \right] \Bigg| \to 0, \\
    \end{split}
    \end{equation*}
    as $n \to \infty$, by the continuity of $h$ and dominated convergence. This completes the proof.
\end{proof}

We conclude by showing that randomized stopping times were merely a technical device for approximation and are ultimately unnecessary. In fact, the original optimal stopping problem can be reformulated entirely in terms of deterministic stopping rules based on path signatures.

Let $l \in T((\bb{R}^{1+d})^*)$. We define the corresponding \emph{hitting time} as
\begin{equation}
\label{eq:def:orthogonal_hitting_time}
    \tau_l \coloneqq \inf \left\{ t \in [0, T] \colon \langle l, \Hat{\bb{X}}_{0,t}^{<\infty} \rangle \ge 1 \right\}.
\end{equation}

This leads to our main result of this section, which establishes that the optimal stopping problem can be solved using linear signature stopping policies without the need for randomization.

\begin{theorem}
\label{theorem:main_theorem}
    Suppose $\bb{E}[||Y||_{\infty}] < \infty$. Then,
    \begin{equation}
    \label{eq:theorem:signature_hitting_time_approximation}
        \sup_{l \in T((\bb{R}^{1+d})^*)} \bb{E}[Y_{\tau_l \land T}] = \sup_{\tau \in \mathcal{S}} \bb{E}[Y_{\tau \land T}].
    \end{equation}
\end{theorem}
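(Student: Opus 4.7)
The plan is to prove both inequalities in \eqref{eq:theorem:signature_hitting_time_approximation} separately. The direction $\le$ is immediate: for any $l \in T((\bb{R}^{1+d})^*)$ the process $t \mapsto \langle l, \Hat{\bb{X}}_{0,t}^{<\infty}\rangle$ is continuous and $\bb{F}$-adapted, so the hitting time $\tau_l$ is itself an $\bb{F}$-stopping time, giving $\bb{E}[Y_{\tau_l \land T}] \le \sup_{\tau \in \mathcal{S}} \bb{E}[Y_{\tau \land T}]$. Taking the supremum on the left yields the easy inequality. For the reverse direction $\ge$, my strategy is to \emph{derandomize} the linear signature randomized stopping times and then invoke \textbf{Theorem \ref{theorem:linear_signature_stopping_time_approximation}}, which already identifies $\sup_{\theta \in \mathcal{T}_{\mathrm{sig}}} \bb{E}[Y_{\tau_\theta^r \land T}]$ with $\sup_{\tau \in \mathcal{S}} \bb{E}[Y_{\tau \land T}]$.

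The key technical step is rewriting the integral threshold defining $\tau_l^r$ as a single signature coefficient. Fix $l \in T((\bb{R}^{1+d})^*)$ and set
\[
f(t) \coloneqq \int_0^t \langle l, \Hat{\bb{X}}_{0,s}^{<\infty}\rangle^2\, ds.
\]
Since $\Hat{\bb{X}} \in \Hat{\Omega}^p_T$ is geometric, the shuffle identity yields $\langle l, \Hat{\bb{X}}_{0,s}^{<\infty}\rangle^2 = \langle l \shuffle l, \Hat{\bb{X}}_{0,s}^{<\infty}\rangle$. Because the time coordinate of $\Hat{X}_s$ equals $s$, integration against $ds$ coincides with iterated integration against the time letter of the alphabet $\mathcal{A}_{1+d}$. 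Concatenating this letter on the right produces an element $L \in T((\bb{R}^{1+d})^*)$ with $f(t) = \langle L, \Hat{\bb{X}}_{0,t}^{<\infty}\rangle$ for every $t \in [0,T]$.

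With this representation, $\tau_l^r = \inf\{t \ge 0 \colon \langle L, \Hat{\bb{X}}_{0,t}^{<\infty}\rangle \ge Z\}$. Since $P(Z = 0) = 0$, conditioning on $\{Z = z\}$ for $z > 0$ and using linearity of the pairing gives $\tau_l^r\mid_{Z=z} = \inf\{t \ge 0 \colon \langle L/z, \Hat{\bb{X}}_{0,t}^{<\infty}\rangle \ge 1\} = \tau_{L/z}$, which is a deterministic hitting time of the form \eqref{eq:def:orthogonal_hitting_time}. By independence of $Z$ and $\Hat{\bb{X}}$, the tower property, and the integrability hypothesis $\bb{E}\|Y\|_\infty < \infty$ (which justifies Fubini),
\[
\bb{E}[Y_{\tau_l^r \land T}] = \int_{(0,\infty)} \bb{E}[Y_{\tau_{L/z} \land T}]\, dP_Z(z) \le \sup_{l' \in T((\bb{R}^{1+d})^*)} \bb{E}[Y_{\tau_{l'} \land T}].
\]
Taking the supremum over $l$ on the left and combining with \textbf{Theorem \ref{theorem:linear_signature_stopping_time_approximation}} delivers the required inequality $\ge$, closing the loop.

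The main obstacle I anticipate is the algebraic identification $f(t) = \langle L, \Hat{\bb{X}}_{0,t}^{<\infty}\rangle$: it relies simultaneously on the shuffle identity (which holds because $\Hat{\bb{X}}$ is geometric) and on recognising the Lebesgue integral as an iterated rough integral against the time component of $\Hat{X}$. Some care is also required to ensure $L/z$ remains a legitimate element of $T((\bb{R}^{1+d})^*)$ and that the pairing is well defined (if $l$ has infinite degree, one truncates and passes to the limit). Once the signature representation is established, the derandomization reduces to a clean conditioning argument, and the measurability in $z$ of $\bb{E}[Y_{\tau_{L/z} \land T}]$ needed for Fubini follows from the continuity of signature pairings and the hypothesis that $Z$ admits a continuous density.
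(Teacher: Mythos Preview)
Your proposal is correct and follows essentially the same route as the paper: derandomize the linear signature randomized stopping times by conditioning on $Z$, then invoke Theorem~\ref{theorem:linear_signature_stopping_time_approximation} together with Proposition~\ref{prop:stopping_time_approximation}. Your explicit identification $\int_0^t \langle l, \Hat{\bb{X}}_{0,s}^{<\infty}\rangle^2\,ds = \langle (l \shuffle l)\mathbf{1}, \Hat{\bb{X}}_{0,t}^{<\infty}\rangle$, which rewrites each conditional $\tau_z$ as a deterministic signature hitting time $\tau_{L/z}$, is exactly the mechanism the paper relies on but leaves implicit in its proof (the identity appears only in the subsequent section).
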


\begin{proof}
\normalsize
    By \textbf{Proposition \ref{prop:stopping_time_approximation}} and \textbf{Theorem \ref{theorem:linear_signature_stopping_time_approximation}}, we already know that
    \[
        \sup_{\tau \in \mathcal{S}} \bb{E}[Y_{\tau \land T}] = \sup_{l \in T((\bb{R}^{1+d})^*)} \bb{E}[Y_{\tau_l^r \land T}].
    \]
    It remains to show that
    \[
        \sup_{l \in T((\bb{R}^{1+d})^*)} \bb{E}[Y_{\tau_l^r \land T}] \le \sup_{l \in T((\bb{R}^{1+d})^*)} \bb{E}[Y_{\tau_l \land T}].
    \]
    Fix $l \in T((\bb{R}^{1+d})^*)$. By the integral representation of randomized stopping times, we have
    \[
        \bb{E}[Y_{\tau_l^r \land T} \mid \Hat{\bb{X}}] = \int_0^\infty Y_{\tau_z \land T} \, dP_Z(z),
    \]
    where
    \[
        \tau_z \coloneqq \inf\left\{ t \ge 0 \colon \int_0^{t \land T} \langle l, \Hat{\bb{X}}_{0,s}^{<\infty} \rangle^2\, ds \ge z \right\}.
    \]
    Taking expectations gives
    \begin{equation*}
    \begin{split}
        \bb{E}[Y_{\tau_l^r \land T}] &= \int_0^\infty \bb{E}[Y_{\tau_z \land T}]\, dP_Z(z) \\
        &\le \sup_{l \in T((\bb{R}^{1+d})^*)} \bb{E}[Y_{\tau_l \land T}],
    \end{split}
    \end{equation*}
    since $\tau_z \ge \tau_l$ for all $z$, and thus the randomized stopping rule yields no greater expected reward than the deterministic one. This completes the proof.
\end{proof}

\subsection{Linear Optimal Stopping Problem}
\label{subsec:linearizing_the_optimal_stopping_problem}

Assume now that \( Z \sim \text{Expo}(1) \). Combining \textbf{Theorem \ref{theorem:main_theorem}} with \textbf{Proposition \ref{proposition:integral_representation}}, we obtain the following expression:
\begin{equation}
\label{equation:linear}
\begin{split}
    \sup_{\tau \in \mathcal{S}} \bb{E}[Y_{\tau \land T}] &= \sup_{l \in T((\bb{R}^{1+d})^*)} \bb{E}[Y_{\tau_l^r \land T}] \\
    &= \sup_{l \in T((\bb{R}^{1+d})^*)} \bb{E} \left[ \int_0^T \exp\left( -\int_0^t \langle l, \Hat{\bb{X}}^{< \infty}_{0,s} \rangle^2 ds \right) dY_t \right] + \bb{E}[Y_0],
\end{split}
\end{equation}
which can be approximated using Monte Carlo methods, replacing the infinite signature with a truncated version \(\Hat{\bb{X}}^{\le N}\).

Interestingly, there is a more efficient way to approximate this expression. The key insight is the identity
\[
\int_0^t \langle l, \Hat{\bb{X}}^{< \infty}_{0,s} \rangle^2 ds = \langle (l \shuffle l) \mathbf{1}, \Hat{\bb{X}}^{<\infty}_{0,t} \rangle,
\]
which converts the integral of a quadratic form into a linear functional on the signature. See \cite{bayer2021optimalstoppingsignatures} for further details. With additional tools, one can transform the integral in \eqref{equation:linear} into a form where the integrand is linear, thus eliminating the nonlinearity introduced by the exponential function.

We begin by introducing a key definition:

\begin{definition}[Exponential Shuffle]
\label{def:exponential_shuffle}
    Let \( l \in T(V^*) \), and write \( l = a_0 \emptyset + \Tilde{l} \), where \( \langle \Tilde{l}, \mathbf{1} \rangle = 0 \). The \emph{exponential shuffle} of \( l \) is defined by
    \begin{equation}
    \label{eq:def:exponential_shuffle}
        \exp^\shuffle(l) \coloneqq \exp(a_0) \sum_{r=0}^\infty \frac{1}{r!} \Tilde{l}^{\shuffle r}.
    \end{equation}
\end{definition}

The next result, stated without proof due to its technical nature, provides an approximation bound for replacing the exponential of a signature pairing with the exponential shuffle. We refer the reader to \cite{bayer2021optimalstoppingsignatures} for a complete derivation.

\begin{lemma}
\label{lemma:exponential_shuffle_inequality}
    Let \( l \in T(V^*) \) and \( \mathbf{g} \in G(V) \), where \( G(V) \) is defined as in \eqref{G}. If \( N > 2 |l| \cdot \text{deg}(l) \cdot |\pi_{\text{deg}(l)}(\mathbf{g})| \), then
    \[
    \left| \exp(\langle l, \mathbf{g} \rangle) - \left\langle \exp^\shuffle(l), \pi_{\le N}(\mathbf{g}) \right\rangle \right| \le 4 \exp(\langle l, 1 \rangle) \cdot \frac{\left(|l| \cdot \pi_{\le \text{deg}(l)}(\mathbf{g})\right)^{\lfloor N / \text{deg}(l) \rfloor + 1}}{(\lfloor N / \text{deg}(l) \rfloor + 1)!}.
    \]
\end{lemma}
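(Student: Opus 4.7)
The plan is to leverage the defining identity of the group-like set $G(V)$, namely $\langle l_1 \shuffle l_2, \mathbf{g}\rangle = \langle l_1, \mathbf{g}\rangle\langle l_2, \mathbf{g}\rangle$, to collapse $\exp^\shuffle$ applied to $\mathbf{g}$ into the ordinary exponential. Decomposing $l = a_0 \emptyset + \tilde{l}$ with $\langle \tilde{l}, \bm{1}\rangle = 0$ and iterating the character identity yields $\langle \tilde{l}^{\shuffle r}, \mathbf{g}\rangle = \langle \tilde{l}, \mathbf{g}\rangle^r$, whence
\begin{equation*}
\langle \exp^\shuffle(l), \mathbf{g}\rangle = e^{a_0} \sum_{r=0}^\infty \frac{\langle \tilde{l}, \mathbf{g}\rangle^r}{r!} = \exp(\langle l, \mathbf{g}\rangle).
\end{equation*}
The entire error in the lemma therefore comes from swapping $\mathbf{g}$ for $\pi_{\le N}(\mathbf{g})$, and the proof reduces to estimating this truncation error.

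Next, I would identify which terms in the series are actually affected by the truncation. Since $\tilde{l}$ is supported at tensor degrees between $1$ and $\mathrm{deg}(l)$, the shuffle power $\tilde{l}^{\shuffle r}$ lives at degrees between $r$ and $r\cdot \mathrm{deg}(l)$. Setting $R := \lfloor N/\mathrm{deg}(l)\rfloor$, every term with $r \le R$ is entirely preserved by $\pi_{\le N}$ (so contributes nothing to the error), while terms with $r > N$ vanish altogether. Hence
\begin{equation*}
\exp(\langle l, \mathbf{g}\rangle) - \langle \exp^\shuffle(l), \pi_{\le N}(\mathbf{g})\rangle = e^{a_0}\sum_{r=R+1}^\infty \frac{1}{r!}\Bigl(\langle \tilde{l}, \mathbf{g}\rangle^r - \langle \tilde{l}^{\shuffle r}, \pi_{\le N}(\mathbf{g})\rangle\Bigr).
\end{equation*}

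Both pieces inside the bracket I would control by $(|l|\cdot |\pi_{\le \mathrm{deg}(l)}(\mathbf{g})|)^r$. For the first piece this is immediate from $|\langle \tilde{l},\mathbf{g}\rangle| \le |\tilde{l}|\cdot|\pi_{\le \mathrm{deg}(l)}(\mathbf{g})| \le |l|\cdot|\pi_{\le \mathrm{deg}(l)}(\mathbf{g})|$. Once both pieces are in this form, the hypothesis $N > 2|l|\cdot \mathrm{deg}(l)\cdot |\pi_{\mathrm{deg}(l)}(\mathbf{g})|$ rearranges to $R+1 \ge 2|l|\cdot|\pi_{\le \mathrm{deg}(l)}(\mathbf{g})|$, which is precisely the condition for the successive ratios in the tail to be bounded by $1/2$. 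A standard geometric-comparison argument then dominates each tail by $2\,(|l|\cdot|\pi_{\le \mathrm{deg}(l)}(\mathbf{g})|)^{R+1}/(R+1)!$; summing the two contributions yields the advertised constant $4$, and the prefactor $e^{a_0} = \exp(\langle l, \bm{1}\rangle)$ is carried through unchanged.

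The main obstacle will be the second bound, namely $|\langle \tilde{l}^{\shuffle r}, \pi_{\le N}(\mathbf{g})\rangle| \le (|l|\cdot|\pi_{\le\mathrm{deg}(l)}(\mathbf{g})|)^r$. A crude application of the pairing inequality $|\langle l',\mathbf{h}\rangle|\le |l'|\cdot|\mathbf{h}|$ is hopeless here, because the $\ell^1$-norm of $\tilde{l}^{\shuffle r}$ grows combinatorially -- with multinomial factors of order $(r\,\mathrm{deg}(l))!/(\mathrm{deg}(l)!)^r$ -- and overwhelms the $1/r!$ coming from the exponential. The way out is to re-invoke the character identity once more, writing $\pi_{\le N}(\mathbf{g}) = \mathbf{g} - (\mathbf{g} - \pi_{\le N}(\mathbf{g}))$ so that the truncated pairing equals $\langle \tilde{l},\mathbf{g}\rangle^r$ plus an error supported entirely at degrees exceeding $N$, and then using the hypothesis on $N$ relative to $|\pi_{\mathrm{deg}(l)}(\mathbf{g})|$ to reabsorb that error into the same geometric tail. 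It is this bootstrap -- using the group-like structure not only to identify the truncation error but also to estimate its high-degree remainder -- that makes the stated condition on $N$ the sharp threshold at which the bound closes.
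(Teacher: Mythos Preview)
The paper does not prove this lemma at all: it is explicitly ``stated without proof due to its technical nature'' with a reference to Bayer et al.\ for the full derivation. There is therefore no paper proof to compare your proposal against; what follows is an assessment of your sketch on its own merits.

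Your reduction is correct and well organized. Using the character identity for $\mathbf{g}\in G(V)$ to collapse $\langle\exp^{\shuffle}(l),\mathbf{g}\rangle$ to $\exp(\langle l,\mathbf{g}\rangle)$ is exactly the right first move, and your degree-counting argument correctly isolates the error as a tail over $r\ge R+1$ with $R=\lfloor N/\deg(l)\rfloor$. The bound on the first piece $|\langle\tilde l,\mathbf{g}\rangle|^r$ and the geometric-ratio argument under the hypothesis on $N$ are both fine.

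The gap is precisely where you flag it: the bound $|\langle\tilde l^{\shuffle r},\pi_{\le N}(\mathbf{g})\rangle|\le (|l|\cdot|\pi_{\le\deg(l)}(\mathbf{g})|)^r$. Your proposed bootstrap --- write $\pi_{\le N}(\mathbf{g})=\mathbf{g}-(\mathbf{g}-\pi_{\le N}(\mathbf{g}))$ and reapply the character identity --- does not close as written. It merely converts the problem into bounding $\langle\tilde l^{\shuffle r},\pi_{>N}(\mathbf{g})\rangle$, i.e.\ a pairing against the high-degree components of $\mathbf{g}$. But neither the hypothesis (which involves only $|\pi_{\deg(l)}(\mathbf{g})|$) nor the conclusion (which involves only $|\pi_{\le\deg(l)}(\mathbf{g})|$) gives you any direct control on $\pi_k(\mathbf{g})$ for $k>N$, and the group-like condition by itself does not supply a norm bound of the form $|\pi_k(\mathbf{g})|\lesssim |\pi_{\le\deg(l)}(\mathbf{g})|^{k/\deg(l)}$. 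So the ``reabsorb into the same geometric tail'' step has no mechanism to succeed: you have traded one uncontrolled quantity for another. To finish, you would need either an explicit combinatorial expansion of $\tilde l^{\shuffle r}$ that tracks which levels of $\mathbf{g}$ actually appear after projection (and then uses the shuffle relation level by level rather than globally), or an a priori factorial-type decay on the components of $\mathbf{g}$ --- which is available for genuine path signatures but is not part of the stated hypotheses here.
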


We now show that the exponential shuffle admits a differential representation, which will be useful for later developments. Note that the proof is deferred to the appendix.

\begin{lemma}
\label{lemma:exponential_differential_equation}
    Let \( l = \lambda_1 w_1 + \cdots + \lambda_n w_n \in T((\bb{R}^{1+d})^*) \). Then
    \[
        \frac{d}{dt} \left\langle \exp^\shuffle(l1), \Hat{\bb{X}}^{\le N}_{0,t} \right\rangle = \sum_{i=1}^n \left\langle \lambda_i w_i, \Hat{\bb{X}}^{< \infty}_{0,t} \right\rangle \left\langle \exp^\shuffle(l1), \Hat{\bb{X}}^{\le N - \deg(w_i) - 1}_{0,t} \right\rangle.
    \]
\end{lemma}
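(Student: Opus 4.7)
The plan is to expand the shuffle exponential as its shuffle power series, differentiate term by term, and then collapse the answer using the defining identity of group-like elements. The key starting observation is that since $\Hat{X}_t = (t, X_t)$ has time as its first coordinate, $d\Hat{X}^1_t = dt$; consequently, for any word $w$ the iterated integral represented by $w1$ satisfies
\begin{equation*}
\frac{d}{dt}\inner{w1}{\Hat{\bb{X}}^{\le N}_{0,t}} = \inner{w}{\Hat{\bb{X}}^{\le N-1}_{0,t}},
\end{equation*}
so differentiation strips the trailing $1$ and drops the truncation level by one.

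Next, I would use the shuffle recurrence $wi\shuffle vj = (w\shuffle vj)i + (wi\shuffle v)j$ --- or equivalently the interpretation of $\shuffle$ as a sum over interleavings preserving the order within each factor --- to establish, by induction on $r$, the combinatorial identity
\begin{equation*}
u_1 \shuffle \cdots \shuffle u_r = \sum_{k=1}^{r} \bigl(u_1 \shuffle \cdots \shuffle u_{k-1} \shuffle v_k \shuffle u_{k+1} \shuffle \cdots \shuffle u_r\bigr)\, 1
\end{equation*}
for any words $u_k = v_k 1$, reflecting the fact that in any interleaving the trailing letter must come from the final $1$ of exactly one of the $r$ factors. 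Applied to $(l1)^{\shuffle r}$ with $l = \sum_i \lambda_i w_i$ and then differentiated, the $r$ inner summands become identical after relabeling the indices $(i_1,\dots,i_r)$, so
\begin{equation*}
\frac{d}{dt}\inner{(l1)^{\shuffle r}}{\Hat{\bb{X}}^{\le N}_{0,t}} = r \sum_{i=1}^{n} \lambda_i \inner{w_i \shuffle (l1)^{\shuffle(r-1)}}{\Hat{\bb{X}}^{\le N-1}_{0,t}}.
\end{equation*}
Dividing by $r!$, summing over $r\ge 1$, and noting that the factor $r$ cancels into $(r-1)!$, yields
\begin{equation*}
\frac{d}{dt}\inner{\exp^\shuffle(l1)}{\Hat{\bb{X}}^{\le N}_{0,t}} = \sum_{i=1}^{n} \lambda_i \inner{w_i \shuffle \exp^\shuffle(l1)}{\Hat{\bb{X}}^{\le N-1}_{0,t}}.
\end{equation*}

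The final step is to decouple each pairing using the group-like property. Because the shuffle $w_i \shuffle (\cdot)$ is homogeneous of degree $\deg(w_i)$, pairing with $\Hat{\bb{X}}^{\le N-1}_{0,t}$ senses only the components of $\exp^\shuffle(l1)$ of degree at most $N - 1 - \deg(w_i)$. After replacing $\Hat{\bb{X}}^{\le N-1}_{0,t}$ by the full signature $\Hat{\bb{X}}^{<\infty}_{0,t}$ (which is permissible because the argument then has no components of degree exceeding $N-1$), the defining shuffle identity of $G(V)$ from \eqref{G} factors the pairing as $\inner{w_i}{\Hat{\bb{X}}^{<\infty}_{0,t}}\, \inner{\exp^\shuffle(l1)}{\Hat{\bb{X}}^{\le N - \deg(w_i) - 1}_{0,t}}$, which is exactly the $i$-th summand in the claim.

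The main obstacle is not conceptual but organizational: one must track the truncation level consistently, first through the differentiation step (which drops $N$ by one) and then through the shuffle factorization (which drops it by a further $\deg(w_i)$). Both reductions rest on the degree-homogeneity of the shuffle product, and once the interleaving lemma and the group-like identity are cleanly invoked, the rest of the argument is direct bookkeeping.
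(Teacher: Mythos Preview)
Your argument is correct, and it reaches the same conclusion as the paper, but the two proofs are organized in genuinely different orders. The paper invokes the group-like identity \emph{first}: it immediately rewrites $\langle \exp^\shuffle(l1), \Hat{\bb{X}}^{\le N}_{0,t}\rangle$ as the multinomial sum $\sum_{k_1\deg(w_11)+\cdots+k_n\deg(w_n1)\le N}\prod_i \langle \lambda_i w_i 1, \Hat{\bb{X}}^{<\infty}_{0,t}\rangle^{k_i}/k_i!$, so that the differentiation step is just the ordinary product rule for scalars, followed by a re-indexing $k_i\mapsto k_i+1$ that shifts the truncation level down by $\deg(w_i)+1$. You, by contrast, postpone the group-like factoring to the very end: your middle step is a purely shuffle-algebraic ``last letter'' lemma for $(l1)^{\shuffle r}$, which yields the derivative as $\sum_i \lambda_i\langle w_i\shuffle\exp^\shuffle(l1),\Hat{\bb{X}}^{\le N-1}_{0,t}\rangle$ before any appeal to $G(V)$. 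What your route buys is a clean separation of the combinatorics (the interleaving identity) from the analytic input (group-likeness of the signature), and it makes transparent \emph{why} the truncation drops first by $1$ and then by $\deg(w_i)$; the paper's route buys brevity, since once everything is scalar the bookkeeping is a one-line product rule plus a shift of summation index.
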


The lemmas established above are instrumental in proving the following result, which we state without proof due to its technical complexity and length; see \cite{bayer2021optimalstoppingsignatures} for full details.

\begin{theorem}[Main Result]
    Suppose \( k > 0 \) and define
    \[
        S_k \coloneqq \inf\{ t \ge 0 \colon \| \Hat{\bb{X}} \|_{p\text{-var}; [0,t]} \ge k \} \land T.
    \]
    If \( Z \sim \text{Expo}(1) \) and \( \bb{E}\|Y\|_\infty < \infty \), then
    \begin{equation*}
    \begin{split}
        \sup_{l \in T((\bb{R}^{1+d})^*)} \bb{E}Y_{\tau^r_l \land T} & = \bb{E}Y_0 \\
        & + \lim_{k \to \infty} \lim_{K \to \infty} \lim_{N \to \infty}
        \sup_{|l| + \deg(l) \le K}
        \bb{E} \left[ \int_0^{S_k} \left\langle \exp^\shuffle \left( -(l \shuffle l)1 \right), \Hat{\bb{X}}^{\le N}_{0,t} \right\rangle dY_t \right] \\
    \end{split}
    \end{equation*}
    where the first two limits may be interchanged.
\end{theorem}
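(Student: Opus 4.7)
The plan is to start from the already-linearized representation in (\ref{equation:linear}) and successively (i) recast the exponent via the shuffle identity, (ii) introduce the cutoff $S_k$ to gain uniform control of the signature, (iii) invoke \textbf{Lemma~\ref{lemma:exponential_shuffle_inequality}} to replace the exponential by its shuffle-exponential surrogate, and (iv) exchange the outer limits by a monotonicity argument. First, combining \textbf{Theorem~\ref{theorem:main_theorem}} with \textbf{Proposition~\ref{proposition:integral_representation}} and using the identity $\int_0^t \langle l, \Hat{\bb{X}}_{0,s}^{<\infty}\rangle^2 \, ds = \langle (l \shuffle l)1, \Hat{\bb{X}}_{0,t}^{<\infty}\rangle$ that was recalled just before \textbf{Definition~\ref{def:exponential_shuffle}}, the problem reduces to
$$\sup_{l \in T((\bb{R}^{1+d})^*)} \bb{E}[Y_{\tau^r_l \land T}] = \bb{E}[Y_0] + \sup_l \bb{E}\left[\int_0^T \exp\bigl(\langle -(l \shuffle l)1, \Hat{\bb{X}}_{0,t}^{<\infty}\rangle\bigr) \, dY_t\right].$$
Since the exponent is non-positive, the integrand is bounded by $\|Y\|_\infty$, so the hypothesis $\bb{E}\|Y\|_\infty < \infty$ furnishes the dominant required for the subsequent limit exchanges.

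Second, because $\Hat{\bb{X}} \in \Hat{\Omega}_T^p$, the $p$-variation $\|\Hat{\bb{X}}\|_{p\text{-var};[0,T]}$ is almost surely finite, so $S_k \uparrow T$ as $k \to \infty$. Dominated convergence then transfers the cutoff through the supremum and expectation, reducing the problem to studying
$$\lim_{k \to \infty} \sup_l \bb{E}\left[\int_0^{S_k} \exp\bigl(\langle -(l \shuffle l)1, \Hat{\bb{X}}_{0,t}^{<\infty}\rangle\bigr)\, dY_t\right].$$
On $[0, S_k]$, the factorial estimate for iterated integrals stated at the start of \S\ref{sec:signatures} delivers the uniform bound $|\pi_n(\Hat{\bb{X}}_{0,t})| \le k^n / n!$, which is the essential ingredient for calibrating the shuffle approximation.

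Third, restricting to $l$ with $|l| + \deg(l) \le K$ and choosing $N$ sufficiently large, in terms of $k$ and $K$, so that the hypothesis $N > 2|l|\deg(l)|\pi_{\deg(l)}(\Hat{\bb{X}}_{0,t})|$ of \textbf{Lemma~\ref{lemma:exponential_shuffle_inequality}} is met uniformly for admissible $l$ and all $t \in [0, S_k]$, the lemma then yields the pointwise estimate
$$\left| \exp\bigl(\langle -(l \shuffle l)1, \Hat{\bb{X}}_{0,t}^{<\infty}\rangle\bigr) - \langle \exp^\shuffle(-(l \shuffle l)1), \Hat{\bb{X}}_{0,t}^{\le N}\rangle \right| \le C(k, K) \cdot \frac{(C'(K) \cdot k)^{\lfloor N / \deg(l \shuffle l) \rfloor + 1}}{(\lfloor N / \deg(l \shuffle l) \rfloor + 1)!},$$
whose factorial denominator dominates, producing convergence to zero as $N \to \infty$, uniformly in $(l, t, \omega)$ within the restricted class. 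Integrating against $dY_t$ and using $\bb{E}\|Y\|_\infty < \infty$ then identifies the $N$-limit of the shuffle-exponential expression with the original one for each fixed pair $(k, K)$.

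Finally, since the class $\{|l| + \deg(l) \le K\}$ is monotone in $K$ and $[0, S_k]$ is monotone in $k$, both finite-parameter value functionals are monotone non-decreasing in $(k, K)$; a standard iterated-supremum argument then exchanges the $k$- and $K$-limits, yielding the advertised identity. The principal technical obstacle will be the joint calibration of $N$ to $(k, K)$ to ensure the shuffle-exponential approximation is uniform over the class $\{|l| + \deg(l) \le K\}$, since the factorial remainder in \textbf{Lemma~\ref{lemma:exponential_shuffle_inequality}} depends nontrivially on $|l \shuffle l|$, $\deg(l \shuffle l)$, and the norms $|\pi_{\le \deg(l \shuffle l)}(\Hat{\bb{X}}_{0,t})|$; tracking this interaction is the heart of the argument.
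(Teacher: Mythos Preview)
The paper does not prove this theorem: immediately before stating it, the authors write that the result is ``stated without proof due to its technical complexity and length'' and defer to \cite{bayer2021optimalstoppingsignatures}. There is therefore no in-paper argument to compare against. Your outline is consistent with the scaffolding the paper assembles---equation~\eqref{equation:linear}, the shuffle identity $\int_0^t\langle l,\Hat{\bb{X}}_{0,s}^{<\infty}\rangle^2\,ds=\langle (l\shuffle l)1,\Hat{\bb{X}}_{0,t}^{<\infty}\rangle$, \textbf{Lemma~\ref{lemma:exponential_shuffle_inequality}}, and the factorial signature bound from \S\ref{sec:signatures}---and your four-step plan (localize via $S_k$, restrict to $|l|+\deg(l)\le K$, approximate by the truncated shuffle-exponential, unwind the limits) is the expected route.

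One point does need repair. Your claim that the value functional is monotone non-decreasing in $k$ is not justified: the integrand $\exp(\cdot)$ is non-negative, but $dY_t$ is signed, so $k\mapsto\int_0^{S_k}\exp(\cdot)\,dY_t$ need not be monotone, and monotonicity is not what drives the interchange. What actually works is a \emph{uniform-in-$l$} tail estimate: by \textbf{Proposition~\ref{proposition:integral_representation}} applied with $S=T$ and $S=S_k$, one has
\[
\int_{S_k}^{T}\exp\bigl(\langle -(l\shuffle l)1,\Hat{\bb{X}}_{0,t}^{<\infty}\rangle\bigr)\,dY_t
=\bb{E}\bigl[Y_{\tau^r_l\land T}-Y_{\tau^r_l\land S_k}\,\big|\,\Hat{\bb{X}}\bigr],
\]
whose absolute value is bounded by $2\|Y\|_\infty\,\mathds{1}_{\{S_k<T\}}$, independently of $l$. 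Since $P(S_k<T)\to 0$ and $\bb{E}\|Y\|_\infty<\infty$, the expected tail vanishes uniformly in $l$, which both passes the $k$-limit through the supremum and makes the $k$- and $K$-limits interchangeable. Your phrase ``the integrand is bounded by $\|Y\|_\infty$'' should be replaced by this uniform tail control; with that correction the sketch is sound.
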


This theorem highlights a key structural insight: the value function of the optimal stopping problem can be approximated by a sequence of expectations involving truncated signatures and linear functionals. In particular, the integral is expressed entirely in terms of the signature of the path and a linear evaluation map, enabling potential numerical approximation through Monte Carlo methods or deep learning-based approaches (see \cite{bayer2021optimalstoppingsignatures} for numerical methods).

However, we emphasize that the development of such numerical methods lies beyond the scope of this paper. Our objective has been purely theoretical: to rigorously characterize the optimal stopping problem within the rough path framework and demonstrate that it can be reduced to a tractable linear optimization problem over signature features.

\section{Discussion and Conclusion}
\label{sec:discussion_and_conclusion}
In this paper, we have provided a coherent and unified exposition of rough path theory and its applications to optimal control, robust filtering, and optimal stopping. Adopting a pathwise and deterministic perspective, we demonstrated how many classical results of stochastic analysis can be reformulated -- and, in certain cases, extended -- within the rough path framework. Our aim was to address the lack of comprehensive and unifying treatments in this area, offering both clarity and mathematical rigor.

To this end, we presented detailed proofs where further elaboration was warranted, such as in \textbf{Proposition~\ref{prop:regularity_RDE}}, and developed a complete formulation of optimal control for the case of $p$-variation with $2 \le p < 3$. Another key contribution is our proof of the \emph{verification theorem} (\textbf{Proposition~\ref{theorem:verification_theorem}}) in this setting -- a central result in control theory. This theorem establishes that if a candidate control and candidate value function satisfy the HJB equation, then they coincide with the unique optimal control and value function. The result formalizes the common practical approach in optimal control, where one posits an ansatz for the solution and verifies its validity via the HJB equation. By extending this methodology to the rough path setting, we have not only bridged a gap in the literature but also provided a rigorous framework for solving optimal control problems in systems driven by rough signals.W This final section outlines several promising directions for further research that build upon the contributions made herein.

\subsection{Optimal Control}

In our study of optimal control within the rough path framework, we demonstrated that a naive attempt to control the noise term in the state dynamics leads to degeneracies in the problem formulation. To resolve this, we introduced a regularization term in the cost functional that penalizes excessive variation in the control, thereby restoring well-posedness. This modification enabled the recovery of the \emph{dynamic programming principle}, the \emph{HJB equation}, and the \emph{verification theorem} in the setting of $p$-variation with $2 \le p < 3$. However, these results are currently unavailable for general $p$-variation, particularly when $3 \le p < \infty$. Closing this gap represents a significant theoretical challenge and a natural avenue for future research. A key objective would be to extend both the \emph{HJB equation} and the \emph{verification theorem} to the broader class of $p$-variations, thereby generalizing the control theory developed in this work.

\subsection{Robust Filtering}
In our treatment of robust stochastic filtering, we approached the problem through the lens of pathwise control, drawing on recent developments that recast filtering as a deterministic optimization problem. One open question, previously noted in \cite{allan2019pathwise}, concerns the convergence behavior of the convex expectation \(\ce{S}{t}\) to the conditional expectation \(\mathbb{E} \left[ \varphi(S_t) \mid \mathcal{Y}_t \right]\). Establishing convergence properties in a robust, pathwise setting is crucial for understanding the reliability and interpretability of the filter.

Further research could also focus on the empirical performance of the pathwise robust filter. Investigating its behavior under various sources of model uncertainty and observation noise, possibly through numerical experiments or applications to real-world data, would enhance our understanding of its practical value and robustness.

\subsection{Optimal Stopping}
While our treatment of optimal stopping under rough paths is mathematical, there remains considerable scope for developing numerical procedures tailored to this setting. Improved algorithms -- particularly those capable of handling high-dimensional or irregular signals -- could lead to greater computational efficiency and wider applicability. Validation through numerical experiments, alongside comparisons with classical stochastic methods, would be instrumental in assessing the practical strengths and limitations of the rough path approach.

\subsection{Final Remarks}
Collectively, the directions outlined above suggest a rich landscape for future exploration. By deepening our understanding of optimal control, filtering, and stopping in the rough path setting, we not only extend classical stochastic theory but also contribute to a broader shift toward deterministic, pathwise formulations of uncertainty. The results presented in this paper lay the groundwork for these developments and invite further theoretical and applied investigations connecting rough analysis with stochastic decision theory.

\newpage
\begin{appendices}

\section{Proofs}
\label{sec:Proofs}

\begin{proof}[Proof of Proposition \ref{prop:regularity_RDE}]
\label{proof:indicator}
\normalsize
    The results above hold for any sub-interval $[s, t]$ of $J~=~[0, T]$ so we will restrict ourselves to $[s, t]$. Recall \textbf{Lemmas \ref{lemma:variation_inclusion}, \ref{lemma:1var_has_finite_pvar}, \ref{lemma:inequality1}, \ref{lemma:inequality2}} and \textbf{\ref{lemma:inequality3}}. Since
    \begin{equation*}
    \begin{split}
        \psi(X, \gamma)^{\prime} & = \partial_{x} \psi(X, \gamma) X^{\prime} \\
        & = \partial_{x} \psi(X, \gamma) \lambda(X, \gamma) \\
    \end{split}
    \end{equation*}
    and $\psi, \lambda \in C^2_b$, it follows that that $\partial_{x} \psi, \lambda$ are Lipschitz continuous due to their bounded derivatives. To simplify the notation further, let $\Delta_s^t X \coloneqq X_{s,t}$. Then
    \begin{equation*}
    \begin{split}
        |\Delta_s^t \psi(X, \gamma)^{\prime}| & \lesssim |(X_{s, t}, \gamma_{s, t})| \\
        & \lesssim |X_{s, t}| + |\gamma_{s, t}| \\
        & \lesssim \pvar{X}{p}{[s, t]} + \pvar{\gamma}{\frac{p}{2}}{[s, t]}
    \end{split}
    \end{equation*}
    so $\pvar{\psi}{p}{[s, t]} \lesssim \pvar{X}{p}{[s, t]} + \pvar{\gamma}{\frac{p}{2}}{[s, t]}$, proving (\ref{est1}).
    
    To prove (\ref{est2}) we expand $R^\psi$ using Taylor's theorem, i.e.
    \begin{equation}
    \label{taylorRemainder}
    \begin{split}
        R^{\psi}_{s, t} & = \Delta_s^t \psi(X, \gamma) - \psi(X_s, \gamma_s)^{\prime}(X_{s, t}, \gamma_{s, t}) \\
        & = \frac{1}{2} \partial_x^2 \psi (X_s + h X_{s, t}, \gamma_s) (X_{s, t}, \gamma_{s, t})^{\otimes 2} \\
    \end{split}
    \end{equation}
    for some $h \in [0, 1]$. Before proceeding, note that $p \mapsto \pvar{X}{p}{[s, t]}$ is non-increasing for any path $X$. Hence
    \begin{equation*}
    \begin{split}
        |R^{\psi}_{s, t}| & \lesssim |(X_{s,t}, \gamma_{s, t})|^2 \\
        & \lesssim \pvar{X}{p}{[s, t]}^2 + \pvar{\gamma}{\frac{p}{2}}{[s, t]} \\
        & \lesssim \pvar{X}{p}{[s, t]}^2 + \pvar{R^X}{\frac{p}{2}}{[s, t]} + \pvar{\gamma}{\frac{p}{2}}{[s, t]} \\
    \end{split}
    \end{equation*}
    by (\ref{taylorRemainder}).

    Now we prove (\ref{est3}). By \textbf{Theorem~\ref{theorem:rough_integration}} and since $b \in Lip_b$, we have
    \begin{equation*}
    \begin{split}
        |R^X_{s, t}| & = |X_{s,t} - X^{\prime}_{s} \zeta_{s,t}| \\
        & = \left| \int_s^t b(X_u, \gamma_u)du + \int_s^t \lambda(X_u, \gamma_u)d\bm{\zeta}_u - \lambda(X_s, \gamma_s) \zeta_{s, t} \right| \\
        & \le \left| \int_s^t \lambda(X_u, \gamma_u)d\bm{\zeta}_u - \lambda(X_s, \gamma_s) \zeta_{s, t} - \lambda(X_s, \gamma_s)^{\prime} \lift{\zeta}{s,t} \right| \\
        & + \left| \int_s^t b(X_u, \gamma_u)du \right| + \left| \lambda(X_s, \gamma_s)^{\prime} \lift{\zeta}{s,t} \right| \\
        & \lesssim \pvar{R^\lambda}{\frac{p}{2}}{[s, t]} \pvar{\zeta}{p}{[s, t]} + \pvar{\lambda(X, \gamma)^{\prime}}{\frac{p}{2}}{[s, t]} \pvar{\lift{\zeta}{}}{\frac{p}{2}}{[s, t]} \\
        & + |t - s| + \pvar{\lift{\zeta}{}}{\frac{p}{2}}{[s, t]}. \\
    \end{split}
    \end{equation*}
    In light of (\ref{est1}), (\ref{est2}) and the above, it follows that
    \begin{equation}
    \label{progress}
    \begin{split}
        \pvar{R^X}{\frac{p}{2}}{[s, t]} & \lesssim \left( \pvar{X}{p}{[s, t]}^2 + \pvar{R^X}{\frac{p}{2}}{[s, t]} + \pvar{\gamma}{\frac{p}{2}}{[s, t]} \right) \pvar{\zeta}{p}{[s, t]} \\
        & + \left( \pvar{X}{p}{[s, t]} + \pvar{\gamma}{\frac{p}{2}}{[s, t]} \right) \pvar{\lift{\zeta}{}}{\frac{p}{2}}{[s, t]} \\
        & + |t - s| + \pvar{\lift{\zeta}{}}{\frac{p}{2}}{[s, t]} \\
        & \lesssim \left( \pvar{X}{p}{[s, t]}^2 + \pvar{R^X}{\frac{p}{2}}{[s, t]} + \pvar{\gamma}{\frac{p}{2}}{[s, t]} \right) \pvar{\zeta}{p}{[s, t]} \\
        & + \left( 1 + \pvar{X}{p}{[s, t]}^2 + \pvar{\gamma}{\frac{p}{2}}{[s, t]} \right) \pvar{\lift{\zeta}{}}{\frac{p}{2}}{[s, t]} \\
        & + |t - s| + \pvar{\lift{\zeta}{}}{\frac{p}{2}}{[s, t]}. \\
    \end{split}
    \end{equation}
    Looking at the proof of \textbf{Proposition~\ref{prop:rough_finite_pvar}}, we see that $\pvar{\zeta}{p}{[s, t]} \le M |t - s|^{\frac{1}{p}}$, and similarly for $\lift{\zeta}{}$. Also, we saw in the proof of (\ref{est2}) above that
    \begin{equation*}
        |R^X_{s,t}| \lesssim \pvar{X}{p}{J}^2 + \pvar{\gamma}{\frac{p}{2}}{J}
    \end{equation*} 
    so we may drop $\pvar{R^X}{\frac{p}{2}}{[s, t]}$ from the right-hand side in (\ref{progress}). Supposing without loss of generality that $\pvar{\zeta}{p}{[s, t]}, \pvar{\lift{\zeta}{}}{\frac{p}{2}}{[s, t]} \le \frac{1}{2}$ whenever $|t - s| < \delta$ (we can do this and extend to $J = [0, T]$ via \textbf{Lemma~\ref{lemma:inequality2}}), we have
    \begin{equation}
    \label{remainderEst4}
    \begin{split}
        \pvar{R^X}{\frac{p}{2}}{[s, t]} & \lesssim \left( \pvar{X}{p}{[s, t]}^2 + \pvar{\gamma}{\frac{p}{2}}{[s, t]} \right) \pvar{\zeta}{p}{[s, t]} \\
        & + \left( 1 + \pvar{X}{p}{[s, t]}^2 + \pvar{\gamma}{\frac{p}{2}}{[s, t]} \right) \pvar{\lift{\zeta}{}}{\frac{p}{2}}{[s, t]} \\
        & + |t - s| + \pvar{\lift{\zeta}{}}{\frac{p}{2}}{[s, t]} \\
    \end{split}
    \end{equation}
    and by definition of $\pvar{R^X}{\frac{p}{2}}{[s, t]}$ \[\pvar{X}{p}{[s, t]} \lesssim \pvar{\zeta}{p}{[s, t]} + \pvar{R^X}{\frac{p}{2}}{[s, t]}\] hence
    \begin{equation*}
    \begin{split}
        \pvar{X}{p}{[s, t]} & \lesssim \pvar{\zeta}{p}{[s, t]} + \pvar{R^X}{\frac{p}{2}}{[s, t]} \\
        & \lesssim \pvar{\zeta}{p}{[s, t]} + \left( \pvar{X}{p}{[s, t]}^2 + \pvar{\gamma}{\frac{p}{2}}{[s, t]} \right) \pvar{\zeta}{p}{[s, t]} \\
        & + \left( 1 + \pvar{X}{p}{[s, t]}^2 + \pvar{\gamma}{\frac{p}{2}}{[s, t]} \right) \pvar{\lift{\zeta}{}}{\frac{p}{2}}{[s, t]} \\
        & + |t - s| + \pvar{\lift{\zeta}{}}{\frac{p}{2}}{[s, t]}. \\
    \end{split}
    \end{equation*}
    Expanding the right-hand side and noting that $\pvar{X}{p}{[s, t]}^{2} \pvar{\zeta}{p}{[s, t]} \le \pvar{X}{p}{[s, t]}^{2}$ since $\pvar{\zeta}{p}{[s, t]} < \frac{1}{2}$, we get
    \begin{equation*}
        \pvar{X}{p}{[s, t]} \lesssim \left( 1 + \pvar{\gamma}{\frac{p}{2}}{[s, t]} \right) \left( \pvar{\zeta}{p}{[s, t]} + \pvar{\lift{\zeta}{}}{\frac{p}{2}}{[s, t]} + |t - s|\right) + \pvar{X}{p}{[s, t]}^2.
    \end{equation*}
    Let $I \subseteq [s, t]$ denote a sub-interval such that $\pvar{X}{p}{I} < \frac{1}{2}$ and set $r$ equal to the length of $I$. Then
    \begin{equation*}
    \begin{split}
        \pvar{X}{p}{I} & \lesssim \left( 1 + \pvar{\gamma}{\frac{p}{2}}{I} \right) \left( \pvar{\zeta}{p}{I} + \pvar{\lift{\zeta}{}}{\frac{p}{2}}{I} + |I|\right) + \pvar{X}{p}{I}^2 \\
        & \lesssim 2 \left( 1 + \pvar{\gamma}{\frac{p}{2}}{I} \right) \left( \pvar{\zeta}{p}{I} + \pvar{\lift{\zeta}{}}{\frac{p}{2}}{I} + |I|\right) \\
        & \lesssim \left( 1 + \pvar{\gamma}{\frac{p}{2}}{I} \right) \left( M (\delta^*)^\frac{1}{p} + M (\delta^*)^\frac{2}{p} + \delta^* \right) \\
        & \lesssim 1 + \pvar{\gamma}{\frac{p}{2}}{I}. \\
    \end{split}
    \end{equation*}
    where $|I| = r$. Now we extend to $J = [0, T]$. Set $\delta^* = \min \{\delta, r\}$ and choose a partition of $J$ as in \textbf{Lemma~\ref{lemma:inequality2}} such that the mesh size that is lesser than $\delta^*$. Then
    \begin{equation*}
        \left( 1 + \pvar{\gamma}{\frac{p}{2}}{I} \right)^p \le 2^p  \left( 1 + \pvar{\gamma}{\frac{p}{2}}{I}^p \right)
    \end{equation*}
    by \textbf{Lemma~\ref{lemma:inequality1}}, so
    \begin{equation*}
    \begin{split}
        \pvar{X}{p}{J} & \lesssim \sum_{\mathcal{D}} \left( 1 + \pvar{\gamma}{\frac{p}{2}}{[t_i, t_{i+1}]}^p \right) \\
        & \lesssim 1 + \pvar{\gamma}{\frac{p}{2}}{J}^p
    \end{split}
    \end{equation*}
    but $1 + \pvar{\gamma}{\frac{p}{2}}{J}^p \lesssim 1 + \pvar{\gamma}{\frac{p}{2}}{J}^{1 + p}$, proving (\ref{est3}).

    To prove (\ref{est4}) we use (\ref{remainderEst4}) and approach the situation in an analogous fashion, arriving at the inequality
    \begin{equation*}
    \begin{split}
        \pvar{R^X}{\frac{p}{2}}{J} & \lesssim 1 + \pvar{\gamma}{\frac{p}{2}}{J}^{p/2} \\
        & \lesssim 1 + \pvar{\gamma}{\frac{p}{2}}{J}^{2+p}. \\
    \end{split}
    \end{equation*}
\end{proof}

\begin{proof}[Proof of Lemma \ref{lemma:indicator}]
\normalsize
    Since $\{\tau \le t\}$ is $\bb{F}_t$-measurable, there exists a Borel set $A_t$ such that \[ (\Hat{\bb{X}}|_{[0,t]})^{-1}(A_t) = \{\tau \le t\}. \] Hence, $\mathds{1}_{\{\tau \le t\}} = \mathds{1}_{A_t}(\Hat{\bb{X}}|_{[0,t]})$.

    Define the map 
    \[
    \phi \colon \Lambda_T^p \to [0,T] \times \Hat{\Omega}_T^p,\quad (t, \Hat{\bb{X}}|_{[0,t]}) \mapsto (t, \Tilde{\Hat{\bb{X}}}|_{[0,T]}),
    \]
    where $(\Tilde{\Hat{\bb{X}}}|_{[0,T]})_s = (s, (s \land t, \Hat{\bb{X}}|_{[0, s \land t]}))$.

    Now define
    \[
    f \colon [0, T] \times \Hat{\Omega}_T^p \to \bb{R},\quad (t, \Hat{\bb{X}}) \mapsto \mathds{1}_{A_t}(\Hat{\bb{X}}|_{[0,t]}).
    \]
    Let $I_k^n = \left[ \frac{k}{2^n}T, \frac{k+1}{2^n}T \right]$ for $k = 0, \dots, 2^n - 2$ and $I_{2^n-1}^n = \left[ \frac{2^n - 1}{2^n}T, T \right]$. Set $t_k^n = \frac{k}{2^n}T$.

    Define
    \[
    f_n(t, \Hat{\bb{X}}) = \sum_{k=0}^{2^n - 1} f(t_k^n, \Hat{\bb{X}})\, \mathds{1}_{I_k^n}(t),
    \]
    \[
    \Tilde{f}(t, \Hat{\bb{X}}) = \limsup_{m \to \infty} \limsup_{n \to \infty} f_n(t + m^{-1}, \Hat{\bb{X}}),
    \]
    and set
    \[
    \theta(\Hat{\bb{X}}|_{[0,t]}) = (\Tilde{f} \circ \phi)(\Hat{\bb{X}}|_{[0,t]}),
    \]
    so that
    \begin{equation*}
    \begin{split}
        \theta(\Hat{\bb{X}}|_{[0,t]}) & = \limsup_{m \to \infty} \limsup_{n \to \infty} \sum_{k=0}^{2^n - 1} \mathds{1}_{\{\tau \le t_k^n\}} \mathds{1}_{I_k^n}(t + m^{-1}) \\
        & = \mathds{1}_{\{\tau \le t\}}.
    \end{split}
    \end{equation*}
\end{proof}

\begin{proof}[Proof of Lemma \ref{lemma:exponential_differential_equation}]
\normalsize
    Since
    \begin{equation*}
    \begin{split}
        \frac{d}{dt} \langle w1, \Hat{\bb{X}}^{< \infty}_{0,t} \rangle & = \frac{d}{dt} \int_0^t \langle w1, \Hat{\bb{X}}^{< \infty}_{0,s} \rangle ds \\
        & = \langle w, \Hat{\bb{X}}^{< \infty}_{0,t} \rangle \\
    \end{split}
    \end{equation*}
    and $\langle l1, \mathbf{1} \rangle = 0$, we have by \textbf{Lemma \ref{lemma:exponential_shuffle_inequality}}
    \begin{equation*}
    \begin{split}
        & \frac{d}{dt}\langle \exp^\shuffle (l1), \Hat{\bb{X}}^{\le N}_{0,t} \rangle \\ 
        & = \sum_{0 \le k_1 deg(w_11) + \cdots + k_n deg(w_n1) \le N} \frac{\langle \lambda_1 w_1 1, \Hat{\bb{X}}^{< \infty}_{0,t} \rangle^{k_1}}{k_1!} \cdots \frac{\langle \lambda_n w_n 1, \Hat{\bb{X}}^{< \infty}_{0,t} \rangle^{k_n}}{k_n!} \\
        & = \sum_{0 \le k_1 deg(w_11) + \cdots + k_n deg(w_n1) \le N} \langle \lambda_1 w_1, \Hat{\bb{X}}^{< \infty}_{0,t} \rangle^{k_1} \frac{\langle \lambda_1 w_1 1, \Hat{\bb{X}}^{< \infty}_{0,t} \rangle^{k_1-1}}{(k_1-1)!} \cdots \frac{\langle \lambda_n w_n 1, \Hat{\bb{X}}^{< \infty}_{0,t} \rangle^{k_n}}{k_n!} \\
        & = \sum_{0 \le k_1 deg(w_11) + \cdots + k_n deg(w_n1) \le N} \Bigg[ \langle \lambda_n w_n, \Hat{\bb{X}}^{< \infty}_{0,t} \rangle^{k_n} \frac{\langle \lambda_1 w_1 1, \Hat{\bb{X}}^{< \infty}_{0,t} \rangle^{k_1}}{k_1!} \cdots \\
        & \cdots \frac{\langle \lambda_{n-1} w_{n-1} 1, \Hat{\bb{X}}^{< \infty}_{0,t} \rangle^{k_{n-1}}}{k_{n-1}!} \frac{\langle \lambda_n w_n 1, \Hat{\bb{X}}^{< \infty}_{0,t} \rangle^{k_n-1}}{(k_n-1)!} \Bigg]. \\
    \end{split}
    \end{equation*}
    But
    \begin{equation*}
    \begin{split}
        & \sum_{0 \le k_1 deg(w_11) + \cdots + k_n deg(w_n1) \le N} \frac{\langle \lambda_1 w_1 1, \Hat{\bb{X}}^{< \infty}_{0,t} \rangle^{{k_1-1}}}{(k_1-1)!} \cdots \frac{\langle \lambda_n w_n 1, \Hat{\bb{X}}^{< \infty}_{0,t} \rangle^{k_n}}{k_n!} \\
        & = \sum_{0 \le (k_1+1) deg(w_11) + \cdots + k_n deg(w_n1) \le N} \frac{\langle \lambda_1 w_1 1, \Hat{\bb{X}}^{< \infty}_{0,t} \rangle^{k_1}}{k_1!} \cdots \frac{\langle \lambda_n w_n 1, \Hat{\bb{X}}^{< \infty}_{0,t} \rangle^{k_n}}{k_n!} \\
        & = \langle \exp^\shuffle (l1), \Hat{\bb{X}}^{\le N - \text{deg}(w_1)-1}_{0,t} \rangle. \\
    \end{split}
    \end{equation*}
    Noting this and applying it to the remaining indices proves the result.
\end{proof}

\begin{proof}[Proof of Proposition \ref{proposition:integral_representation}]
\normalsize
    Since $\tau_\theta^r \in [0, T] \cup \{\infty\}$, we have
    \begin{equation*}
    \begin{split}
        P[\tau_\theta^r \le t \mid \Hat{\bb{X}}] &= P\left[ \int_0^{t \land T} \theta(\Hat{\bb{X}}|_{[0,s]})^2 \, ds \ge Z \mid \Hat{\bb{X}} \right] \\
        &= F_Z \left( \int_0^{t \land T} \theta(\Hat{\bb{X}}|_{[0,s]})^2 \, ds \right) \\
        &= \Tilde{F}(t).
    \end{split}
    \end{equation*}
    Also,
    \[
        P[\tau_\theta^r = \infty \mid \Hat{\bb{X}}] = P\left[ \int_0^T \theta(\Hat{\bb{X}}|_{[0,s]})^2 \, ds < Z \mid \Hat{\bb{X}} \right] = 1 - \Tilde{F}(T).
    \]
    Thus, for any integrable function $f \colon [0, \infty] \to \bb{R}$, we have
    \[
        \bb{E}[f(\tau_\theta^r) \mid \Hat{\bb{X}}] = \int_0^T f(t) \, d\Tilde{F}(t) + f(\infty)(1 - \Tilde{F}(T)).
    \]
    Applying this to $f(t) = Y_{t \land S}$ yields
    \begin{equation*}
    \begin{split}
        \bb{E}[Y_{\tau_\theta^r \land S} \mid \Hat{\bb{X}}] &= \int_0^T Y_{t \land S} \, d\Tilde{F}(t) + Y_S(1 - \Tilde{F}(T)) \\
        &= \int_0^S Y_t \, d\Tilde{F}(t) + Y_S \int_S^T d\Tilde{F}(t) + Y_S(1 - \Tilde{F}(T)) \\
        &= \int_0^S Y_t \, d\Tilde{F}(t) + Y_S(1 - \Tilde{F}(S)).
    \end{split}
    \end{equation*}
    Using integration by parts and the fact that $\Tilde{F}(0) = 0$, we get
    \begin{equation*}
    \begin{split}
        \int_0^S Y_t \, d\Tilde{F}(t) &= Y_S \Tilde{F}(S) - \int_0^S \Tilde{F}(t) \, dY_t, \\
        \bb{E}[Y_{\tau_\theta^r \land S} \mid \Hat{\bb{X}}] &= Y_S - \int_0^S \Tilde{F}(t) \, dY_t = \int_0^S (1 - \Tilde{F}(t)) \, dY_t + Y_0.
    \end{split}
    \end{equation*}
\end{proof}

\end{appendices}

\bibliography{sn-bibliography}%


\begin{thebibliography}{14}
\ifx \bisbn   \undefined \def \bisbn  #1{ISBN #1}\fi
\ifx \binits  \undefined \def \binits#1{#1}\fi
\ifx \bauthor  \undefined \def \bauthor#1{#1}\fi
\ifx \batitle  \undefined \def \batitle#1{#1}\fi
\ifx \bjtitle  \undefined \def \bjtitle#1{#1}\fi
\ifx \bvolume  \undefined \def \bvolume#1{\textbf{#1}}\fi
\ifx \byear  \undefined \def \byear#1{#1}\fi
\ifx \bissue  \undefined \def \bissue#1{#1}\fi
\ifx \bfpage  \undefined \def \bfpage#1{#1}\fi
\ifx \blpage  \undefined \def \blpage #1{#1}\fi
\ifx \burl  \undefined \def \burl#1{\textsf{#1}}\fi
\ifx \doiurl  \undefined \def \doiurl#1{\url{https://doi.org/#1}}\fi
\ifx \betal  \undefined \def \betal{\textit{et al.}}\fi
\ifx \binstitute  \undefined \def \binstitute#1{#1}\fi
\ifx \binstitutionaled  \undefined \def \binstitutionaled#1{#1}\fi
\ifx \bctitle  \undefined \def \bctitle#1{#1}\fi
\ifx \beditor  \undefined \def \beditor#1{#1}\fi
\ifx \bpublisher  \undefined \def \bpublisher#1{#1}\fi
\ifx \bbtitle  \undefined \def \bbtitle#1{#1}\fi
\ifx \bedition  \undefined \def \bedition#1{#1}\fi
\ifx \bseriesno  \undefined \def \bseriesno#1{#1}\fi
\ifx \blocation  \undefined \def \blocation#1{#1}\fi
\ifx \bsertitle  \undefined \def \bsertitle#1{#1}\fi
\ifx \bsnm \undefined \def \bsnm#1{#1}\fi
\ifx \bsuffix \undefined \def \bsuffix#1{#1}\fi
\ifx \bparticle \undefined \def \bparticle#1{#1}\fi
\ifx \barticle \undefined \def \barticle#1{#1}\fi
\bibcommenthead
\ifx \bconfdate \undefined \def \bconfdate #1{#1}\fi
\ifx \botherref \undefined \def \botherref #1{#1}\fi
\ifx \url \undefined \def \url#1{\textsf{#1}}\fi
\ifx \bchapter \undefined \def \bchapter#1{#1}\fi
\ifx \bbook \undefined \def \bbook#1{#1}\fi
\ifx \bcomment \undefined \def \bcomment#1{#1}\fi
\ifx \oauthor \undefined \def \oauthor#1{#1}\fi
\ifx \citeauthoryear \undefined \def \citeauthoryear#1{#1}\fi
\ifx \endbibitem  \undefined \def \endbibitem {}\fi
\ifx \bconflocation  \undefined \def \bconflocation#1{#1}\fi
\ifx \arxivurl  \undefined \def \arxivurl#1{\textsf{#1}}\fi
\csname PreBibitemsHook\endcsname

\bibitem[\protect\citeauthoryear{Friz and Hairer}{2020}]{frizHairer}
\begin{bbook}
\bauthor{\bsnm{Friz}, \binits{P.K.}},
\bauthor{\bsnm{Hairer}, \binits{M.}}:
\bbtitle{A Course on Rough Paths}.
\bpublisher{Springer},
\blocation{Switzerland}
(\byear{2020})
\end{bbook}
\endbibitem

\bibitem[\protect\citeauthoryear{Lyons et~al.}{2007}]{lyons}
\begin{bbook}
\bauthor{\bsnm{Lyons}, \binits{T.J.}},
\bauthor{\bsnm{Caruana}, \binits{M.}},
\bauthor{\bsnm{L{\'e}vy}, \binits{T.}}:
\bbtitle{Differential Equations Driven by Rough Paths}.
\bsertitle{Lecture Notes in Mathematics},
vol. \bseriesno{1908}.
\bpublisher{Springer},
\blocation{Heidelberg, Germany}
(\byear{2007})
\end{bbook}
\endbibitem

\bibitem[\protect\citeauthoryear{Gubinelli}{2004}]{gubinelliControlledRoughPath}
\begin{barticle}
\bauthor{\bsnm{Gubinelli}, \binits{M.}}:
\batitle{Controlling rough paths}.
\bjtitle{Journal of Functional Analysis}
\bvolume{216}(\bissue{1}),
\bfpage{86}--\blpage{140}
(\byear{2004})
\doiurl{10.1016/j.jfa.2004.01.002}
\end{barticle}
\endbibitem

\bibitem[\protect\citeauthoryear{Diehl et~al.}{2017}]{DiehlFrizGassiat}
\begin{barticle}
\bauthor{\bsnm{Diehl}, \binits{J.}},
\bauthor{\bsnm{Friz}, \binits{P.K.}},
\bauthor{\bsnm{Gassiat}, \binits{P.}}:
\batitle{Stochastic control with rough paths}.
\bjtitle{Applied Mathematics \& Optimization}
\bvolume{75},
\bfpage{285}--\blpage{315}
(\byear{2017})
\end{barticle}
\endbibitem

\bibitem[\protect\citeauthoryear{Allan and Cohen}{2019}]{allan2019pathwise}
\begin{botherref}
\oauthor{\bsnm{Allan}, \binits{A.L.}},
\oauthor{\bsnm{Cohen}, \binits{S.N.}}:
Pathwise Stochastic Control with Applications to Robust Filtering
(2019)
\end{botherref}
\endbibitem

\bibitem[\protect\citeauthoryear{Loomis and Sternberg}{2014}]{sternberg}
\begin{bbook}
\bauthor{\bsnm{Loomis}, \binits{L.H.}},
\bauthor{\bsnm{Sternberg}, \binits{S.}}:
\bbtitle{Advanced Calculus},
\bedition{Revised edition} edn.
\bpublisher{World Scientific Publishing Co.},
\blocation{Singapore}
(\byear{2014})
\end{bbook}
\endbibitem

\bibitem[\protect\citeauthoryear{Yong and Zhou}{1999}]{yong}
\begin{bbook}
\bauthor{\bsnm{Yong}, \binits{J.}},
\bauthor{\bsnm{Zhou}, \binits{X.Y.}}:
\bbtitle{Stochastic Controls: Hamiltonian Systems and HJB Equations}
vol. \bseriesno{43}.
\bpublisher{Springer},
\blocation{New York}
(\byear{1999})
\end{bbook}
\endbibitem

\bibitem[\protect\citeauthoryear{Bardi and Da~Lio}{1997}]{bardi}
\begin{barticle}
\bauthor{\bsnm{Bardi}, \binits{M.}},
\bauthor{\bsnm{Da~Lio}, \binits{F.}}:
\batitle{On the bellman equation for some unbounded control problems}.
\bjtitle{Nonlinear Differential Equations and Applications NoDEA}
\bvolume{4},
\bfpage{491}--\blpage{510}
(\byear{1997})
\end{barticle}
\endbibitem

\bibitem[\protect\citeauthoryear{Bain and Crisan}{2009}]{bain}
\begin{bbook}
\bauthor{\bsnm{Bain}, \binits{A.}},
\bauthor{\bsnm{Crisan}, \binits{D.}}:
\bbtitle{Fundamentals of Stochastic Filtering}
vol. \bseriesno{3}.
\bpublisher{Springer},
\blocation{Heidelberg, Germany}
(\byear{2009})
\end{bbook}
\endbibitem

\bibitem[\protect\citeauthoryear{Øksendal}{2013}]{oksendal}
\begin{bbook}
\bauthor{\bsnm{Øksendal}, \binits{B.}}:
\bbtitle{Stochastic Differential Equations: An Introduction with Applications},
\bedition{6}th edn.
\bpublisher{Springer},
\blocation{New York}
(\byear{2013})
\end{bbook}
\endbibitem

\bibitem[\protect\citeauthoryear{Allan and Cohen}{2019}]{ac}
\begin{barticle}
\bauthor{\bsnm{Allan}, \binits{A.L.}},
\bauthor{\bsnm{Cohen}, \binits{S.N.}}:
\batitle{Parameter uncertainty in the kalman--bucy filter}.
\bjtitle{SIAM Journal on Control and Optimization}
\bvolume{57}(\bissue{3}),
\bfpage{1646}--\blpage{1671}
(\byear{2019})
\end{barticle}
\endbibitem

\bibitem[\protect\citeauthoryear{Wi{\'s}niewski}{1994}]{wisn}
\begin{barticle}
\bauthor{\bsnm{Wi{\'s}niewski}, \binits{A.}}:
\batitle{The structure of measurable mappings on metric spaces}.
\bjtitle{Proceedings of the American Mathematical Society}
\bvolume{122}(\bissue{1}),
\bfpage{147}--\blpage{150}
(\byear{1994})
\end{barticle}
\endbibitem

\bibitem[\protect\citeauthoryear{Kalsi et~al.}{2020}]{kalsi}
\begin{barticle}
\bauthor{\bsnm{Kalsi}, \binits{J.}},
\bauthor{\bsnm{Lyons}, \binits{T.}},
\bauthor{\bsnm{Arribas}, \binits{I.P.}}:
\batitle{Optimal execution with rough path signatures}.
\bjtitle{SIAM Journal on Financial Mathematics}
\bvolume{11}(\bissue{2}),
\bfpage{470}--\blpage{493}
(\byear{2020})
\end{barticle}
\endbibitem

\bibitem[\protect\citeauthoryear{Bayer et~al.}{2023}]{bayer2021optimalstoppingsignatures}
\begin{barticle}
\bauthor{\bsnm{Bayer}, \binits{C.}},
\bauthor{\bsnm{Hager}, \binits{P.P.}},
\bauthor{\bsnm{Riedel}, \binits{S.}},
\bauthor{\bsnm{Schoenmakers}, \binits{J.}}:
\batitle{Optimal stopping with signatures}.
\bjtitle{The Annals of Applied Probability}
\bvolume{33}(\bissue{1}),
\bfpage{238}--\blpage{273}
(\byear{2023})
\end{barticle}
\endbibitem

\end{thebibliography}

\end{document}